\documentclass{tlp}

\usepackage{amsmath}
\usepackage{graphicx}
\usepackage{multirow}

\usepackage{amssymb}
\newcommand{\qed}{\hfill\ensuremath{\square}}

\usepackage{yfonts}
\usepackage{url}\urlstyle{tt}

\usepackage{color}

\hyphenation{lif-schitz}

\newtheorem{lemma}{Lemma}
\newtheorem{theorem}{Theorem}
\newtheorem{corollary}{Corollary}

\newtheorem{proposition}{Proposition}

\def\G{\Gamma}

\def\seq{\Rightarrow}

\def\SQHT{\ensuremath{\mathit{SQHT}}}
\def\SQHTw{\ensuremath{\mathit{SQHT}^\omega}}
\def\Std{\ensuremath{\mathit{Std}}}
\def\Ind{\ensuremath{\mathit{Ind}}}

\def\HTC{\ensuremath{\mathit{HT}_{\!\!\#}}}
\def\HTCw{\ensuremath{\HTC^\omega}}
\def\HTCww{\ensuremath{\mathit{HT}_{\!\!\#2}^\omega}}

\def\co{\ensuremath{\mathit{count\/}}}
\def\atl{\ensuremath{\mathit{Atleast\/}}}
\def\st{\ensuremath{\mathit{Start\/}}}
\def\atm{\ensuremath{\mathit{Atmost\/}}}
\def\exa{\ensuremath{\mathit{Exactly\/}}}

\def\num{\overline}
\def\head{\ensuremath{\mathit{Head}}}
\def\body{\ensuremath{\mathit{Body}}}
\def\ar{\leftarrow}
\def\lrar{\leftrightarrow}
\def\no{\ensuremath{\mathit{not}}}

\def\gringo{{\sc gringo}}

\def\vampire{{\sc vampire}}
\def\beq{\begin{equation}}
\def\eeq#1{\label{#1}\end{equation}}
\def\ba{\begin{array}}
\def\ea{\end{array}}

\def\X{\mathcal{X}}
\def\Y{\mathcal{Y}}
\def\val#1#2{\emph{val\,}_{#1}({#2})}

\def\p2f{\hbox{p2f}}

\def\bft{{\bf t}}
\def\bfx{{\bf X}}
\def\bfxx{{\bf x}}

\def\bfyy{{\bf y}}
\def\bfz{{\bf Z}}

\def\bfu{{\bf U}}
\def\bfuu{{\bf u}}
\def\bfv{{\bf V}}
\def\bfvv{{\bf v}}
\def\bfw{{\bf W}}

\def\Defs{\ensuremath{\mathit{Defs}}}

\newcommand{\HH}{\mathcal{H}}
\newcommand{\tuple}[1]{\mbox{\ensuremath{\left\langle#1\right\rangle}}}
\newcommand{\A}{\ensuremath{\mathcal{A}}}

\newcommand{\boldd}{\mathbf{d}}

\newcommand{\boldt}{\mathbf{t}}

\newcommand{\HTi}{HT\nobreakdash-inter\-pre\-ta\-tion}
\newcommand{\HTis}{HT\nobreakdash-inter\-pre\-ta\-tions}
\newcommand{\HI}{\langle \HH,I\rangle}
\newcommand{\modelsht}{\models_{ht}}

\let\citet\cite
\let\cite\citep

\begin{document}

\lefttitle{Jorge Fandinno and Vladimir Lifschitz}

\jnlPage{1}{8}
\jnlDoiYr{2025}
\doival{10.1017/xxxxx}

\title
[Deductive Systems for Logic Programs with Counting]
{Deductive Systems for\\Logic Programs with Counting}

  
\begin{authgrp}
  \author{\sn{Jorge} \gn{Fandinno}}
  \affiliation{University of Nebraska Omaha, Omaha, NE, USA}
  \author{\sn{Vladimir} \gn{Lifschitz}}
  \affiliation{University of Texas at Austin, Austin, TX, USA}
\end{authgrp}

\newcommand{\arxivcomment}{\emph{Under consideration in Theory and Practice of Logic Programming (TPLP)}}

\maketitle

\begin{abstract}
In answer set programming, 
two groups of rules are considered strongly equivalent if they have the same meaning in any context.
Strong equivalence of two programs can be sometimes established by deriving rules of each program from rules of the other in an appropriate deductive system.
This paper shows how to extend this method of proving strong equivalence to programs containing the counting aggregate.
\arxivcomment
\end{abstract}

\section{Introduction}

In answer set programming (ASP),
two groups of rules are considered strongly equivalent if, informally speaking,
they have the same meaning in any context~\cite{lif01}. 

If programs~$\Pi_1$ and~$\Pi_2$ are strongly equivalent then,
for any program~$\Pi$, programs $\Pi_1\cup\Pi$ and $\Pi_2\cup\Pi$ have
the same stable models.
Properties of this equivalence relation are important because
they can help us simplify parts of an ASP
program without examining its other parts.  More generally, they can guide us
in the process of developing correct and efficient code.

Strong equivalence of two programs can be sometimes established by deriving
rules of each program from rules of the other in an appropriate deductive
system. Deriving rules involves rewriting them in the syntax of first-order
logic.  The possibility of such proofs has been demonstrated for the ASP
language mini\nobreakdash-\gringo\ \cite{lif19,lif21a,fan23}, and it was used
in the design
of a proof assistant for verifying strong equivalence \cite{heu20,fan23b}.

We are interested in extending this method of proving strong equivalence to
ASP programs with aggregates, such as counting and summation
\cite[Section~3.1.12]{gringomanual}.
Procedures for representing rules with aggregates in the syntax of
first-order logic have been proposed in several recent publications
\cite{lif22a,fan22a,fan23c}.  The first of these papers describes a
deductive system that can be used for proving strong equivalence of
programs in the language  called mini-\gringo\ with counting ({\sc mgc}).
But that system is too weak for reasoning about {\sc mgc} rules that
contain variables in the right-hand side of an aggregate atom.  For
instance, let~$A$ be the pair of rules
$$\ba l
p(a),\\
q(Y) \ar \co\{ X : p(X) \land X \neq a \} = Y,
\ea$$
and let~$B$ stand for
$$\ba l
p(a),\\
q(Y-1) \ar \co\{ X : p(X)\} = Y.
\ea$$
These pairs of rules are strongly equivalent to each other, but
the deductive system mentioned above would not allow us to
justify this claim.

We propose here an alternative set of axioms for proving
strong equivalence of programs with counting.  After
reviewing in Section~\ref{sec:background} the language {\sc mgc} and the
translation~$\tau^*$ that transforms {\sc mgc} rules into first-order
sentences, we define in Section~\ref{sec:htc} a deductive system of
\emph{here\nobreakdash-and\nobreakdash-there with counting (\HTC)}.
Any two {\sc mgc} programs~$\Pi_1$ and~$\Pi_2$ such that
$\tau^*\Pi_1$ and $\tau^*\Pi_2$ can be derived from each other in this
deductive system
are strongly equivalent.  Furthermore, the sentences~$\tau^*A$
and~$\tau^*B$, corresponding to the programs~$A$ and~$B$ above,
are equivalent in \HTC, as well as any two sentences that
are equivalent in the deductive system from the previous
publication on {\sc mgc} (Sections~\ref{sec:example}
and~\ref{sec:comparison}).

 The system \HTC\ is not a first-order theory in the sense of classical logic,
 because some instances of the law of excluded middle $F\lor\neg F$ are not
 provable in it.  This fact makes it difficult to automate reasoning
 in \HTC, because existing work on automated reasoning deals for the most part
 with classical logic and its extensions. 
 \citet{petowo01a} and~\citet{lin02a}
 showed how to modify the straightforward representation of propositional
 rules by formulas in such a way that strong equivalence will correspond to
 equivalence of formulas in classical logic.
 Their method
 was used in the design of a system for verifying strong equivalence
 of propositional programs \cite{che05}.  It was also generalized to
 strong equivalence of propositional formulas \cite{pea09},
 first-order formulas \cite{fer09}, and mini-\gringo\ programs
 \cite{fan23b}, and it was used in the design of a system for verifying strong
 equivalence in mini-\gringo\ \cite{heu20}.
 In Section~\ref{sec:cl} we show that this method is
 applicable to programs with counting as well.  To this end, we define
 a classical first-order theory $\HTC'$ and an additional
 syntactic transformation~$\gamma$ such that two sentences~$F_1$,~$F_2$ are
 equivalent in \HTC\ if and only if $\gamma F_1$ is equivalent to~$\gamma F_2$
 in~$\HTC'$.  It follows that if the formula
 $\gamma\tau^*\Pi_1\lrar\gamma\tau^*\Pi_2$ can be derived from the
 axioms of~$\HTC'$ in  classical first-order logic then~$\Pi_1$ is
 strongly equivalent to~$\Pi_2$.

Section~\ref{sec:omega} describes a
modificaton~$\HTCw$ of the deductive system \HTC\ that is not only
sound for proving strong equivalence, but also complete: any two {\sc mgc}
programs~$\Pi_1$,~$\Pi_2$ are strongly equivalent if and only if
the formulas $\tau^*\Pi_1$ and $\tau^*\Pi_2$ are equivalent
in the modified system.  This is achieved by including rules with
infinitely many premises, similar to the $\omega$-rule in arithmetic
investigated by Leon Henkin~(\citeyear{hen54}):
$$\frac{F(0)\quad F(1)\quad \dots}{\forall n F(n)}.$$
Deductive systems of this kind are useful as theoretical tools.
But derivations in such systems are infinite
trees, and they cannot be represented in a finite computational device.

Some additional details of the semantics of {\sc mgc} are reviewed in~\ref{sec:tau*.review}, and proofs are presented in~\ref{sec:proofs}.
Some of the proofs refer to the concept of an \HTi, which is reviewed in
Appendix~\ref{sec:review}.
In Appendix~\ref{sec:standard}, we define a class of \emph{standard} \HTis, for which the deductive system~$\HTCw$ is
sound and complete.

A preliminary report on this work was presented at the Seventeenth International Conference on Logic Programming and Non-monotonic Reasoning (LPNMR 2024).

\section{Background} \label{sec:background}

In this section, we recall the basic definitions about logic programs (Section~\ref{ssec:programs}), stable models and strong equivalence (Section~\ref{ssec:smseq}), representation of logic programs by first\nobreakdash-order formulas (Sections~\ref{ssec:programs.as.formulas} and~\ref{ssec:repag}) and the logic of here\nobreakdash-and\nobreakdash-there (Section~\ref{ssec:htsi}).

\subsection{Programs} \label{ssec:programs}

The syntax of mini-\gringo\ with counting  is
defined as follows.\footnote{The description below differs slightly from the
  original publication \cite{lif22a}:
  the absolute value symbol~$|\ |$ is allowed in the
  definition of a term, and the symbols \emph{inf} and \emph{sup} are not
  required.}
 We assume that three countably infinite sets of
symbols are selected:
\emph{numerals}, \emph{symbolic constants}, and \emph{variables}.
We assume that a 1-1 correspondence between numerals and integers is
chosen; the numeral corresponding to an integer~$n$ is denoted by~$\num n$.
(In examples of programs, we sometimes drop overlines in numerals.)

The set of \emph{precomputed terms} is assumed to be a totally ordered set containing numerals, symbolic constants, and possibly other symbols,
such that numerals are contiguous (every precomputed term between two numerals is a numeral)
and are ordered in the standard way.  {\sc mgc} terms are formed from
precomputed terms and variables using the unary operation symbol $|\ |$
and the binary operation symbols
$$+\quad-\quad\times\quad/\quad\backslash \quad..$$

An {\sc mgc} \emph{atom} is a symbolic constant optionally followed by a tuple
of terms in parentheses.  A \emph{literal} is an {\sc mgc} atom
possibly preceded by one or two occurrences of \emph{not}. A \emph{comparison}
is an expression of the form
$t_1\prec t_2$, where $t_1$, $t_2$ are mini-{\sc gringo} terms, and $\prec$ is
$=$ or one of the comparison symbols
\beq
\neq\quad<\quad>\quad\leq\quad\geq
\eeq{comp}

An \emph{aggregate element} is a pair
$\bfx:{\bf L},$
where $\bfx$ is a tuple of distinct variables,
and $\bf L$ is a conjunction of
literals and comparisons such that every member of~$\bfx$ occurs in~$\bf L$.
An \emph{aggregate atom} is an expression of one of the forms
\beq
\co \{E\} \geq t,\  \co \{E\} \leq t,\ 
\eeq{aa}
where~$E$ is an aggregate element, and~$t$ is a term that does not contain
the interval symbol ($..$).
The conjunction of aggregate atoms~(\ref{aa}) can be written as
$\co\{E\}=t$.

A \emph{rule} is an expression of the form
\beq
\head\ar\body,
\eeq{rule}
where
\begin{itemize}
\item
  \strut$\body$ is a conjunction (possibly empty) of literals, comparisons,
  and aggregate atoms, and
\item
  $\head$
  is either an atom (then~(\ref{rule}) is a
  \emph{basic rule\/}), or an atom in braces
(then~(\ref{rule}) is a \emph{choice rule}\/), or empty (then~(\ref{rule})
is a \emph{constraint}).
\end{itemize}

A variable that occurs in a rule~$R$ is \emph{local} in~$R$ if each of
its occurrences is within an aggregate element, and \emph{global} otherwise.
A rule is \emph{pure} if, for every aggregate element $\bfx:{\bf L}$ in its
body, all variables in the tuple~$\bfx$ are local.  For example,
the rule
$$
q(Y) \ar \co\{X:p(X)\} = Y \land X>0
$$
is not pure, because~$X$ is global.

In mini-\gringo\ with counting, a \emph{program} is a finite set of pure rules.

\subsection{Stable models and strong equivalence} \label{ssec:smseq}

An atom $p(\bft)$ is \emph{precomputed} if all members of the tuple~$\bft$
are precomputed terms.
The semantics of {\sc mgc} is based on an operator, called~$\tau$, which
transforms pure rules into infinitary propositional formulas formed
from precomputed atoms \cite[Section~5]{lif22a}.  For example, the rule
$$q \ar \co\{X:p(X)\} \leq 5$$
is transformed by~$\tau$ into the formula
$$\left(
\bigwedge_{\Delta\,:\,|\Delta| > 5}\neg \bigwedge_{x\in\Delta}\;p(x)
\right)\to q,$$
where $\Delta$ ranges over finite sets of precomputed terms,
and $|\Delta|$ stands for the cardinality of~$\Delta$.
The antecedent of this implication expresses that a set of more than 5 elements cannot be a subset of~$\{X:p(X)\}$.
The result of
applying~$\tau$ to a program~$\Pi$ is defined as the conjunction of
formulas $\tau R$ for all rules~$R$ of~$\Pi$.

Stable models of an {\sc mgc}
program~$\Pi$ are defined as stable models of~$\tau\Pi$ in the sense of the work by~\citet{tru12}.  Thus stable
models of programs are sets of precomputed atoms.

About programs~$\Pi_1$ and~$\Pi_2$ we say that they
are \emph{strongly equivalent} to each other if~$\tau\Pi_1$ is
strongly equivalent to~$\tau\Pi_2$; in other words, if for every set~$\Phi$
of infinitary propositional formulas formed from precomputed atoms,
$\{\tau\Pi_1\}\cup\Phi$ and $\{\tau\Pi_2\}\cup\Phi$ have the same stable
models.  It is clear that if~$\Pi_1$ is strongly equivalent
to~$\Pi_2$ then, for any program~$\Pi$, $\Pi_1\cup\Pi$ has the same stable
models as $\Pi_2\cup\Pi$ (take~$\Phi$ to be~$\{\tau\Pi\}$).

\subsection{Representing MGC terms and atoms by formulas}
\label{ssec:programs.as.formulas}

In first-order formulas used to represent programs, we distinguish between
terms of two sorts: the sort \emph{general} and its subsort
\emph{integer}.  General variables are meant to
range over arbitrary precomputed terms, and we assume them to be the same as
variables used in {\sc mgc} programs.  Integer variables are meant
to range over numerals (or, equivalently, integers).  In this paper,
integer variables are represented by letters from the middle of the
alphabet ($I,\dots,N$).

The two-sorted signature~$\sigma_0$ includes
\begin{itemize}
\item all numerals as object constants of the sort \emph{integer};
\item other precomputed terms as object constants of the sort \emph{general};
\item the symbol~$|\ |$ as a unary function constant;
  its argument and value have the sort \emph{integer};
\item the symbols~$+$, $-$ and~$\times$ as binary function constants;
  their arguments and values have the sort \emph{integer};
\item pairs $p/n$, where~$p$ is a symbolic constant and~$n$ is a nonnegative
  integer, as $n$-ary predicate constants; their arguments have the
  sort \emph{general};
\item symbols (\ref{comp}) as binary predicate constants;
     their arguments have the sort \emph{general}.
\end{itemize}
Note that the definition of~$\sigma_0$ does not allow terms that contain a
general variable in the scope of an arithmetic operation.
For example, the {\sc mgc} term $Y-\num 1$ is not a term over~$\sigma_0$.

A formula of the form $(p/n)({\bf t})$, where~$\bf t$ is a tuple of terms,
can be written also as~$p({\bf t})$.
Thus precomputed atoms can be viewed as sentences over~$\sigma_0$.

The set of values of an {\sc mgc} term\footnote{We talk
  about a \emph{set} of values because an {\sc mgc} term may contain the
  interval symbol.  For instance, the values of the {\sc mgc} term
  $\num 1 .. \num 3$ are $\num 1$,
  $\num 2$, and $\num 3$.  On the other hand, the set of values of
  the term $a-\num 1$, where~$a$ is a symbolic constant, is empty.}~$t$
can be described by a formula over the signature~$\sigma_0$
that contains a variable~$Z$ that does not occur in~$t$
\cite{lif19,fan23}.  This formula, ``$Z$ is a value of~$t$,''
is denoted by~$\val t Z$.
Its definition is recursive, and we reproduce here two of its clauses:
\begin{itemize}
\item
if $t$ is a precomputed term or a variable then $\val t Z$ is $Z=t$,
\item
if $t$ is $t_1\;\emph{op}\;t_2$, where \emph{op} is $+$, $-$, or $\times$
then $\val t Z$ is
$$\exists I J  (\val{t_1} I \land \val{t_2} J \land Z=I\;\emph{op}\;J),$$
where~$I$ and~$J$ are integer variables that do not occur in~$t$.
\end{itemize}
For example, $\val{Y-\num 1} Z$ is
$$\exists I J(I=Y \land J=\num 1\land Z=I-J).$$

The translation~$\tau^B$ transforms {\sc mgc} atoms, literals and
comparisons into formulas over the signature~$\sigma_0$.  (The
superscript~$B$ conveys the idea that this translation expresses the
meaning of expressions in \emph{bodies} of rules.)  For example,~$\tau^B$
transforms $p(t)$ into the formula $\exists Z(\val t Z \land p(Z))$.
The complete definition of~$\tau^B$ can be found in earlier publications
\cite{lif19,fan23} and is reproduced in~\ref{sec:tau*.review} for convenience.

\subsection{Representing aggregate expressions and rules} \label{ssec:repag}

To represent aggregate expressions by first-order formulas, we need to
extend the signature~$\sigma_0$ \cite[Section~7]{lif22a}.
The signature~$\sigma_1$ is obtained from~$\sigma_0$ by adding all
predicate constants of the forms
\beq
\atl^{\bfx;\bfv}_F\hbox{ and }\atm^{\bfx;\bfv}_F
\eeq{newpc}
where $\bfx$ and $\bfv$
are disjoint lists of distinct general variables, and~$F$ is a formula
over~$\sigma_0$ such that each of its free variables belongs to $\bfx$
or to~$\bfv$. The number of arguments of each constant~(\ref{newpc})
is the length of~$\bfv$ plus 1; all arguments are of the sort \emph{general}.
If~$n$ is a positive integer then the
formula $\atl^{\bfx,\bfv}_F(\bfv,\num n)$ is meant to express
that~$F$ holds for at least~$n$ values of~$\bfx$.  The intuitive meaning
of $\atm^{\bfx,\bfv}_F(\bfv,\num n)$ is similar:
$F$ holds for at most~$n$ values of~$\bfx$.

For an aggregate atom of the form $\co\{\bfx:{\bf L}\}\geq t$ in the body of
a rule, the corresponding formula over~$\sigma_1$ is
$$
\exists Z\left(\val t Z
\land \atl^{\bfx;\bfv}_{\exists \bfw\tau^B({\bf L})}(\bfv,Z)\right),
$$
where
\begin{itemize}
\item $\bfv$ is the list of global variables that occur in ${\bf L}$, and
\item $\bfw$ is the list of local variables that occur in ${\bf L}$ and
  are different from the members of~$\bfx$.
\end{itemize}
For example, the aggregate atom
$\co\{ X : p(X)\} \geq Y$ is represented by the formula
$$
\exists Z\left(Z=Y\land \atl^{X;}_{\exists Z(Z=X\land p(Z))}(Z)\right)
$$
($\bfv$ and $\bfw$ are empty).

The formula representing $\co\{\bfx:{\bf L}\}\leq t$ is formed in a similar
way, with~$\atm$ in place of~$\atl$.

Now we are ready to define the translation~$\tau^*$, which transforms
pure rules into sentences over~$\sigma_1$.  It converts a basic rule
$$p(t) \ar B_1\land\cdots\land B_n$$
into the universal closure of the formula
$$B^*_1\land\cdots\land B^*_n\land\val t Z\to p(Z),$$
where~$B^*_i$ is
\begin{itemize}
\item $\tau^B(B_i)$, if~$B_i$ is a literal or comparison, and
\item the formula representation of~$B_i$ formed as described above,
  if~$B_i$ is an aggregate atom.
\end{itemize}

The definition of~$\tau^*$ for pure rules of other forms can be found
in the previous paper on mini-\gringo\ with counting
\cite[Sections~6~and~8]{lif22a}.  For any program~$\Pi$, $\tau^*\Pi$
stands for the conjunction of the
sentences~$\tau^*R$ for all rules~$R$ of~$\Pi$.

\subsection{Logic of here\nobreakdash-and\nobreakdash-there and standard interpretations}
\label{ssec:htsi}

We are interested in deductive systems~$S$ with the following property:
\beq\ba c
\hbox{\emph{for any programs~$\Pi_1$ and~$\Pi_2$}},\\
\hbox{\emph{if $\tau^*\Pi_1$ and $\tau^*\Pi_2$ can be derived from each
    other in~$S$}}\\
\hbox{\emph{then $\Pi_1$ is strongly equivalent to~$\Pi_2$.}}
\ea\eeq{p}

Systems with property~(\ref{p})
cannot possibly sanction unlimited use of classical
propositional logic.  Consider, for instance, the one-rule programs
$$p \ar\no\ q\quad\hbox{and}\quad q \ar\no\ p.$$
They have different stable models, although the corresponding formulas
$$\neg q\to p\quad\hbox{and}\quad\neg p\to q$$
have the same truth table.

This observation suggests that the study of subsystems of
classical logic may be relevant.  One such subsystem is first-order
intuitionistic logic (with equality)  adapted to the two-sorted
signature~$\sigma_1$.  Intuitionistic logic does have property~(\ref{p}).
Furthermore, this property is preserved if we extend it by the axiom schema
\beq
F\lor (F\to G) \lor \neg G,
\eeq{hosoi}
introduced by \citet{hos66} as part of his
formalization of the propositional logic known as the logic of
here\nobreakdash-and\nobreakdash-there.

The axiom schema
\beq
\exists X(F\to \forall X\,F)
\eeq{sqht}
(for a variable~$X$ of either sort)
can be included without losing property~(\ref{p}) as well.
It was introduced to extend the logic
of here\nobreakdash-and\nobreakdash-there to a language with variables and quantifiers \cite{lif07a}.
Both~(\ref{hosoi}) and~(\ref{sqht}) are provable classically, but not
intuitionistically.

The axioms and inference rules discussed so far are abstract, in the
sense that they are not related to any
properties of the domains of variables (except that one is a subset of the
other).  To describe more specific axioms, we need the following definition.
An interpretation of the signature~$\sigma_0$ is \emph{standard} if
 \begin{itemize}
 \item its domain of the sort \emph{general} is the set of
   precomputed terms;
 \item its domain of the sort \emph{integer} is the set of numerals;
 \item every object  constant represents itself;
 \item the absolute value symbol and the binary function constants are
   interpreted as usual in arithmetic;
 \item predicate constants~(\ref{comp}) are interpreted in accordance
   with the total order on precomputed terms chosen in the definition
   of {\sc mgc} (Section~\ref{ssec:programs}).
 \end{itemize}
 Two standard interpretations of~$\sigma_0$ can differ only by how they
 interpret the predicate symbols~$p/n$.  If a sentence over~$\sigma_0$
 does not contain these symbols then it is either satisfied by all
 standard interpretations or is not satisfied by any of them.

Let \Std\  be the set of sentences over~$\sigma_0$ that do not
contain predicate symbols of the form~$p/n$ and are satisfied by
standard interpretations.   
Property~(\ref{p}) will be preserved if we add
any members of \Std\  to the set of axioms.  The set \Std\  includes,
for instance, the law of excluded middle $F\lor\neg F$ for every formula~$F$
over~$\sigma_0$ that does not contain symbols~$p/n$.  Other examples of
formulas from \Std\  are
$$\num 2\times\num 2 = \num 4,\quad
\forall N (N*N\geq\num 0),\quad t_1\neq t_2,
$$
where $t_1$, $t_2$ are distinct precomputed terms.

To reason about {\sc mgc} programs, we need also axioms for~$\atl$
and~$\atm$. A possible choice of such additional axioms
is described in the next section.

\section{Deductive system \HTC} \label{sec:htc}

The deductive system \HTC\  (``here\nobreakdash-and\nobreakdash-there with counting'') operates
with formulas of the signature~$\sigma_2$, which is obtained
from~$\sigma_1$ (Section~\ref{ssec:repag}) by adding  the
predicate constants $\st^{\bfx;\bfv}_F$,
where~$\bfx$ and~$\bfv$ are disjoint lists of distinct general variables,
and~$F$ is a formula
over~$\sigma_0$ such that each of its free variables belongs to $\bfx$
or to~$\bfv$.  The number of arguments of each of these constants
is the combined length of~$\bfx$ and~$\bfv$ plus~1.  The last argument is of
the sort
\emph{integer}, and the other arguments are of the sort \emph{general}.

For any integer~$n$, the formula $\st^{\bfx;\bfv}_F(\bfx,\bfv,\num n)$ is
meant to express that if $n>0$ then
there exists a lexicographically increasing sequence~$\bfx_1,\dots\bfx_n$
of values satisfying~$F$ such that the first of them is~$\bfx$.
Two features of the~$\st$ predicate make it useful.
On the one hand, it can be described by a recursive definition.
On the other hand, the predicates~$\atl$ and~$\atm$ can be defined in terms of~$\st$ as illustrated by~formulas~\eqref{d1a} and~\eqref{d1b} below.

\subsection{Axioms of \HTC}  \label{ssec:htc}

The axioms for~$\st$ define these predicates recursively:
$$
\ba l
\forall\bfx\bfv N(N\leq \num 0 \to \st^{\bfx;\bfv}_F(\bfx,\bfv,N)),\\
\forall\bfx\bfv (\st^{\bfx;\bfv}_F(\bfx,\bfv,\num 1)\lrar F),\\
\forall\bfx\bfv N(N> \num 0 \to (\st^{\bfx;\bfv}_F(\bfx,\bfv,N+\num 1)
\lrar\\ \hskip 4cm
F\land\exists \bfu(\bfx<\bfu\land \st^{\bfx;\bfv}_F(\bfu,\bfv,N)))).
\ea$$
Here~$N$ is an integer variable, and~$\bfu$ is a list of distinct
general variables of the same length as~$\bfx$, which is disjoint from
both~$\bfx$ and~$\bfv$.  The symbol~$<$ in the last line denotes
lexicographic order: $(X_1,\dots,X_m)<(U_1,\dots,U_m)$ stands for
$$
\bigvee_{l=1}^m\left((X_l<U_l) \land \bigwedge_{k=1}^{l-1}(X_k=U_k)\right).
$$
This set of axioms for~$\st$ will be denoted by~$D_0$.

The set of axioms for~$\atl$ and~$\atm$,
denoted by~$D_1$, defines these predicates in terms of~$\st$:
\begin{align}
\forall \bfv Y(\atl^{\bfx;\bfv}_F(\bfv,Y)
&\lrar
\exists \bfx N(\st^{\bfx;\bfv}_F(\bfx,\bfv,N)\land N\geq Y)),
\label{d1a}
\\
\forall \bfv Y(\atm^{\bfx;\bfv}_F(\bfv,Y)
&\lrar
\forall \bfx N(\st^{\bfx;\bfv}_F(\bfx,\bfv,N)\to N\leq Y)).
\label{d1b}
\end{align}

In addition to the axioms listed above, we need the induction schema%
\footnote{If~$F$ is a formula, by~$F^X_t$ we denote the result of
  replacing all free occurrences of a variable~$X$ in~$F$ by a term~$t$.}
$$
F^N_{\num 0}\land \forall N\left(N\geq\num 0 \land F \to F^N_{N+\num 1}\right)
\to \forall N (N\geq\num 0 \to F)
$$
for all formulas $F$ over~$\sigma_2$.  The set of the universal closures
of its instances will be denoted by \Ind.

The deductive system \HTC\  is defined as first-order intuitionistic logic
for the signature~$\sigma_2$ extended by
\begin{itemize}
\item axiom schemas (\ref{hosoi}) and (\ref{sqht}) 
  for all formulas $F$, $G$, $H$ over~$\sigma_2$, and
\item
axioms \Std, \Ind, $D_0$ and $D_1$.
\end{itemize}

This deductive system has property~(\ref{p}):
\begin{theorem}
  \label{thm:j:HTC.strong.equivalence}
  For any programs~$\Pi_1$ and~$\Pi_2$, if the formula
  $\tau^*\Pi_1\lrar\tau^*\Pi_2$
  is provable in \HTC\ then~$\Pi_1$ and~$\Pi_2$ are
  strongly equivalent.
\end{theorem}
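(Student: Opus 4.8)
The plan is to argue semantically, through the logic of here\nobreakdash-and\nobreakdash-there. Recall that strong equivalence of infinitary propositional theories formed from precomputed atoms is characterized by their equivalence in infinitary here\nobreakdash-and\nobreakdash-there logic, so it is enough to show that provability of $\tau^*\Pi_1\lrar\tau^*\Pi_2$ in \HTC\ forces $\tau\Pi_1$ and $\tau\Pi_2$ to be satisfied by the same here\nobreakdash-and\nobreakdash-there pairs $\langle H,T\rangle$ of sets of precomputed atoms. The chain of reductions is: provability in \HTC\ $\Rightarrow$ truth in all standard \HTis\ $\Rightarrow$ equivalence of $\tau\Pi_1,\tau\Pi_2$ in infinitary here\nobreakdash-and\nobreakdash-there $\Rightarrow$ strong equivalence of $\tau\Pi_1,\tau\Pi_2$ $\Rightarrow$ strong equivalence of $\Pi_1,\Pi_2$.

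The first link is a soundness claim: every formula provable in \HTC\ is satisfied by all standard \HTis\ over~$\sigma_2$ (the interpretations of Appendix~\ref{sec:standard}). Since an \HTi\ is a particular two\nobreakdash-world Kripke model, intuitionistic first\nobreakdash-order logic is automatically sound for it, so what remains is to check the extra ingredients of \HTC. The schemas~(\ref{hosoi}) and~(\ref{sqht}) are valid in every \HTi\ by its characteristic ``here''~$\subseteq$~``there'' structure, and the members of \Std\ hold because a standard \HTi\ interprets the arithmetic and comparison symbols in the fixed intended way. The crux is verifying the counting\nobreakdash-specific axioms: under the intended reading of $\st^{\bfx;\bfv}_F$ as the existence of a lexicographically increasing run of length~$N$ of tuples satisfying~$F$ that starts at~$\bfx$, one must show that the recursive axioms~$D_0$ hold at both worlds of any standard \HTi, that~$D_1$ correctly expresses~$\atl$ and~$\atm$ in terms of~$\st$, and that the schema \Ind\ is valid because integer variables range over numerals so ordinary induction on the nonnegative integers applies at each world. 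This is the step I expect to be the main obstacle: the $\st$, $\atl$, and $\atm$ predicates must be checked for persistence between the two worlds and for agreement with their quantified definitions, and the interaction of \Ind\ with formulas that mention these predicates has to be handled carefully.

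The second link is a bridge lemma connecting the first\nobreakdash-order translation~$\tau^*$ over~$\sigma_2$ with the infinitary propositional translation~$\tau$: for every pure rule~$R$ and every standard \HTi\ determined by an HT\nobreakdash-pair $\langle H,T\rangle$ of precomputed atoms, $\tau^*R$ is satisfied by that interpretation if and only if $\tau R$ is satisfied by $\langle H,T\rangle$ in infinitary here\nobreakdash-and\nobreakdash-there. Here the formula representations of aggregate atoms through $\atl$ and $\atm$, together with the definitions of $\val t Z$ and $\tau^B$, are unwound and matched against the $\tau$\nobreakdash-images of the same rules; the treatment of local versus global variables enters at this point. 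Taking conjunctions over the rules of a program extends the lemma from rules to $\tau^*\Pi$ and $\tau\Pi$.

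Finally I would chain the pieces. If $\tau^*\Pi_1\lrar\tau^*\Pi_2$ is provable in \HTC, soundness gives that $\tau^*\Pi_1$ and $\tau^*\Pi_2$ are satisfied by exactly the same standard \HTis; the bridge lemma then yields that $\tau\Pi_1$ and $\tau\Pi_2$ are satisfied by the same HT\nobreakdash-pairs, that is, are equivalent in infinitary here\nobreakdash-and\nobreakdash-there; by the characterization recalled at the outset, $\tau\Pi_1$ and $\tau\Pi_2$ are strongly equivalent, which by definition means that~$\Pi_1$ and~$\Pi_2$ are strongly equivalent.
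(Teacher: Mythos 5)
Your proposal is correct in outline, and its semantic skeleton (standard \HTis, a grounding bridge between $\tau^*$ and $\tau$, and the propositional HT characterization of strong equivalence) coincides with the machinery the paper uses; but your decomposition is genuinely different from the paper's. The paper does not prove soundness of \HTC\ directly: it obtains Theorem~\ref{thm:j:HTC.strong.equivalence} as a corollary of Theorem~\ref{thm:HTCw.strong.equivalence} (soundness \emph{and} completeness of the infinitary system $\HTCw$ over~$\sigma_1$) together with Theorem~\ref{thm:stronger} (every $\sigma_1$-sentence provable in \HTC\ is provable in $\HTCw$), the latter proved via the hybrid system $\HTCww$. The step you correctly identify as the main obstacle---verifying $D_1$ and \Ind\ at both worlds of an \HTi\ of~$\sigma_2$ in which $\st$ gets its intended reading---is exactly the point where the paper takes a detour: in Lemma~\ref{lem:conservative.extension.sigma2} it constructs the $\st$-extension $\tuple{\HH',I'}$ (your ``standard \HTi\ of $\sigma_2$'') but only verifies $D_0$ semantically, obtaining $D_1$ and \Ind\ instead by syntactic derivation from~$D_0$ using the $\omega$-rules (Lemma~\ref{thm:HTCws=>newdefs}) and then appealing to soundness of $\SQHTw$. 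Your route buys a shorter chain for this one theorem and avoids infinitary rules entirely, at the price of carrying out those here-world verifications of $D_1$ and \Ind\ by hand; the paper's route costs the extra systems $\HTCw$ and $\HTCww$ but delivers completeness (Theorem~\ref{thm:HTCw.strong.equivalence}) and the conservativity result (Theorem~\ref{thm:stronger}) as standalone dividends. One caution: your ``bridge lemma'' is the paper's Lemma~\ref{lem:grounding.models}, which rests on a nontrivial prior theorem relating $(\tau^*R)^{\rm prop}$ to $\tau R$; describing it as unwinding definitions understates what has to be imported there.
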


The proof of this theorem can be found in Appendix~\ref{sec:proof:thm:j:HTC.strong.equivalence} (page~\pageref{sec:proof:thm:j:HTC.strong.equivalence}).
As discussed in Section~\ref{sec:omega},
\HTC\ is an extension of the system with
property~(\ref{p}) introduced by \citet{lif22a}.
Furthermore, in Section~\ref{sec:example} we show that \HTC\  is
sufficiently strong
for proving the equivalence between~$\tau^*A$ and~$\tau^*B$ for the
programs~$A$ and~$B$ from the introduction.

\subsection{Some theorems of \HTC}

The characterization of~$\atl$ and~$\atm$ given by the axioms~$D_1$ can be
simplified, if we replace the variable~$Y$ by an integer variable:

\begin{proposition}\label{prop:simp}
  The formulas
\beq
\forall \bfv N(\emph{\atl}^{\bfx;\bfv}_F(\bfv,N)
\lrar
\exists\bfx\,\emph{\st}^{\bfx;\bfv}_F(\bfx,\bfv,N))
\eeq{htc:atln}
and
\beq
\forall \bfv N(\emph{\atm}^{\bfx;\bfv}_F(\bfv,N)
\lrar
\neg\emph{\atl}^{\bfx;\bfv}_F(\bfv,N+\num 1))
\eeq{htc:atmn1}
are provable in \HTC.
\end{proposition}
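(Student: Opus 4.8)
The plan is to reduce both equivalences to a single \emph{downward monotonicity} property of the $\st$ predicate in its length argument. Concretely, I would first establish as a lemma that, for all integers with $N\leq M$, the implication
$$\st^{\bfx;\bfv}_F(\bfx,\bfv,M)\to\st^{\bfx;\bfv}_F(\bfx,\bfv,N)$$
is provable in \HTC. The intuition is that the initial segment of length $N$ of an increasing sequence of length $M$ is itself an increasing sequence starting at the same point, so a longer ``start'' witness always yields all the shorter ones. This is exactly what makes the integer-indexed forms~\eqref{htc:atln} and~\eqref{htc:atmn1} equivalent to the versions in $D_1$, where the length is only constrained by the loose inequalities $N\geq Y$ and $N\leq Y$.

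For the lemma the heart of the matter is the one-step version $\st^{\bfx;\bfv}_F(\bfx,\bfv,N+\num 1)\to\st^{\bfx;\bfv}_F(\bfx,\bfv,N)$. When $N\leq\num 0$ it is immediate, since the consequent is then an instance of the first axiom of $D_0$. For $N\geq\num 0$ I would argue by the induction schema \Ind. In the induction step one unfolds $\st^{\bfx;\bfv}_F(\bfx,\bfv,(N+\num 1)+\num 1)$ by the recursive third axiom of $D_0$ (applicable because $N+\num 1>\num 0$), applies the induction hypothesis to the resulting inner witness $\bfu$, and folds the conjunction back into $\st^{\bfx;\bfv}_F(\bfx,\bfv,N+\num 1)$ using the second axiom of $D_0$ if $N=\num 0$ and the third if $N>\num 0$; this last case split is an instance of a decidable disjunction from \Std. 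A second, routine induction on the gap $M-N$ then lifts the one-step version to arbitrary $N\leq M$.

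Granting the lemma, formula~\eqref{htc:atln} follows from axiom~\eqref{d1a} after instantiating its general variable $Y$ by the integer variable $N$: the right-to-left direction is trivial (take the existential witness for the length to be $N$ itself), and the converse uses monotonicity to shrink a witness of length $\geq N$ down to one of length exactly $N$. For~\eqref{htc:atmn1} I would first rewrite $\neg\,\atl^{\bfx;\bfv}_F(\bfv,N+\num 1)$ using the already established~\eqref{htc:atln} together with the intuitionistically valid passage from $\neg\exists\bfx$ to $\forall\bfx\,\neg$, obtaining $\forall\bfx\,\neg\,\st^{\bfx;\bfv}_F(\bfx,\bfv,N+\num 1)$, while axiom~\eqref{d1b} rewrites $\atm^{\bfx;\bfv}_F(\bfv,N)$ as $\forall\bfx\,N'(\st^{\bfx;\bfv}_F(\bfx,\bfv,N')\to N'\leq N)$. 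The two are matched using the arithmetic fact $\neg(N+\num 1\leq N)$ in the easy direction and the decidable dichotomy $N'\leq N\lor N'\geq N+\num 1$ from \Std in the other, with monotonicity again supplying $\st^{\bfx;\bfv}_F(\bfx,\bfv,N+\num 1)$ from a witness of length $N'\geq N+\num 1$.

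I expect the induction for the monotonicity lemma to be the main obstacle, for two reasons. First, the axioms of $D_0$ treat the lengths $\num 0$, $\num 1$, and $N+\num 1$ (for $N>\num 0$) by three separate clauses, so the induction step forces a case split that must be justified from \Std rather than by an unrestricted law of excluded middle. Second, everything has to be carried out intuitionistically: the only classical reasoning available is that coming from the excluded-middle and dichotomy instances that happen to lie in \Std because they contain no predicate symbols of the form~$p/n$, and one must check that each appeal to proof by cases or to \emph{ex falso} in the argument is of this permitted kind.
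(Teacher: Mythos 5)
Your proposal is correct, and its overall decomposition coincides with the paper's: both arguments reduce everything to downward monotonicity of $\st$ in its length argument, obtain \eqref{htc:atln} by instantiating the general variable $Y$ in~\eqref{d1a} with an integer variable and shrinking the witness, and then derive \eqref{htc:atmn1} from \eqref{htc:atln} and~\eqref{d1b} using the decidable comparisons available in \Std\ (your use of $N'\leq N\lor N'\geq N+\num 1$ and of $\neg\exists$ versus $\forall\neg$ matches the paper's intermediate claim~\eqref{atmn}). The one point where you genuinely diverge is the one-step monotonicity $\st^{\bfx;\bfv}_F(\bfx,\bfv,N+\num 1)\to\st^{\bfx;\bfv}_F(\bfx,\bfv,N)$: you prove it by induction on~$N$, unfolding two levels of the recursion and refolding with a case split on $N=\num 0$ versus $N>\num 0$, whereas the paper avoids induction here entirely by first proving the auxiliary fact~\eqref{mon1} that the starting point of a $\st$-witness can be moved to any lexicographically earlier point satisfying~$F$; applied to the inner witness $\bfu$ extracted from $\st^{\bfx;\bfv}_F(\bfx,\bfv,N+\num 1)$, this gives $\st^{\bfx;\bfv}_F(\bfx,\bfv,N)$ in one step. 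Your induction does go through, provided the induction formula is taken universally quantified over $\bfx$ and $\bfv$ so that the hypothesis can be applied to the witness $\bfu$ rather than to $\bfx$ — a point you implicitly rely on and should make explicit. Neither route escapes induction altogether, since the many-step version \eqref{mon3} is proved by induction on the gap in both treatments; the paper's detour through \eqref{mon1} just confines the inductive reasoning to that single, simpler place, at the cost of one extra lemma.
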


A few other theorems of \HTC:

\begin{proposition}\label{prop:other}
The formulas
\beq
\forall \bfv N(N\leq \num 0\to \emph{\atl}^{\bfx;\bfv}_F(\bfv,N)),
\eeq{atlnonp}
\vskip -5mm

\beq
\forall \bfv (\emph{\atl}^{\bfx;\bfv}_F(\bfv,\num1) \lrar \exists\bfx\,F),
\eeq{atl1}
\beq
\forall \bfx(F\to G)\to
\forall\bfx\bfv N(\emph{\st}^{\bfx;\bfv}_F(\bfx,\bfv,N)
\to\emph{\st}^{\bfx;\bfv}_G(\bfx,\bfv,N)),
\eeq{mon4a}
\beq
\forall \bfz\bfv N (\emph{\st}^{\bfx;\bfv}_F(\bfz,\bfv,N)\land N>\num 0
\to F^\bfx_\bfz)
\eeq{stpos}
are provable in \HTC.
\end{proposition}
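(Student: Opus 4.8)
The plan is to derive each of the four formulas directly from the recursive axioms $D_0$ for $\st$, the definitional axioms $D_1$ for $\atl$, and where needed the induction schema $\Ind$, reasoning inside intuitionistic logic extended with the Hosoi and quantified-HT schemas. The key observation is that Proposition 1 (formula~\eqref{htc:atln}) already gives us the simplified characterization $\atl^{\bfx;\bfv}_F(\bfv,N)\lrar\exists\bfx\,\st^{\bfx;\bfv}_F(\bfx,\bfv,N)$ for integer $N$, so I would prove \eqref{atlnonp} and \eqref{atl1} by first passing through this equivalence and then appealing to the corresponding clauses of $D_0$.

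First I would establish \eqref{atlnonp}. Fix an integer $N$ with $N\leq\num 0$. The first axiom of $D_0$ gives $\st^{\bfx;\bfv}_F(\bfx,\bfv,N)$ for all $\bfx$ outright, so in particular $\exists\bfx\,\st^{\bfx;\bfv}_F(\bfx,\bfv,N)$ holds; by \eqref{htc:atln} this is equivalent to $\atl^{\bfx;\bfv}_F(\bfv,N)$. For \eqref{atl1}, I would instantiate \eqref{htc:atln} at $N=\num 1$ and then use the second axiom of $D_0$, namely $\st^{\bfx;\bfv}_F(\bfx,\bfv,\num 1)\lrar F$, to rewrite $\exists\bfx\,\st^{\bfx;\bfv}_F(\bfx,\bfv,\num 1)$ as $\exists\bfx\,F$; the chain of equivalences yields the claim.

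Next I would prove the monotonicity statement \eqref{mon4a}, which I expect to be the main obstacle, since it is the only one of the four requiring induction. Assuming $\forall\bfx(F\to G)$, I would show by induction on $N$ (using $\Ind$ with the induction formula $\forall\bfx\bfv(\st^{\bfx;\bfv}_F(\bfx,\bfv,N)\to\st^{\bfx;\bfv}_G(\bfx,\bfv,N))$) that the $F$-version of $\st$ implies the $G$-version at every nonnegative $N$. The base case $N=\num 0$ is immediate from the first axiom of $D_0$, since the consequent holds unconditionally. For the step case, at $N+\num 1$ with $N\geq\num 0$ I would unfold both sides with the third axiom of $D_0$: from $F\land\exists\bfu(\bfx<\bfu\land\st^{\bfx;\bfv}_F(\bfu,\bfv,N))$ I obtain $G$ from the hypothesis $\forall\bfx(F\to G)$, and I obtain the witnessed $\st^{\bfx;\bfv}_G(\bfu,\bfv,N)$ from the induction hypothesis, reassembling the $G$-side. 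The delicate point is the case split on whether $N>\num 0$ or $N\leq\num 0$ needed to apply the correct clause of $D_0$, and carefully handling the $N+\num 1$ indexing so that the induction hypothesis applies at the predecessor value; the lexicographic witness $\bfu$ carries through unchanged.

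Finally, for \eqref{stpos} I would again argue by induction on $N$, or more directly by the two recursive clauses: assuming $\st^{\bfx;\bfv}_F(\bfz,\bfv,N)$ and $N>\num 0$, a case analysis via the third axiom of $D_0$ (with $\bfz$ in the role of the first argument) peels off the leading conjunct $F^\bfx_\bfz$ at each positive index, and the base value $N=\num 1$ is covered by the second axiom $\st^{\bfx;\bfv}_F(\bfz,\bfv,\num 1)\lrar F^\bfx_\bfz$. Throughout, the substitution $F^\bfx_\bfz$ appears because $\st$ is defined with $F$ evaluated at its first tuple of arguments, so care is needed to track this substitution correctly when instantiating the axioms.
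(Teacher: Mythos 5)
Your proposal is correct and matches the paper's proof in all essentials: (\ref{atlnonp}) and (\ref{atl1}) follow from (\ref{htc:atln}) and $D_0$, (\ref{mon4a}) is proved by induction unfolding the recursive clauses of $D_0$ under the hypothesis $\forall\bfx(F\to G)$, and (\ref{stpos}) is handled by a direct case analysis on the clauses of $D_0$ with no induction needed. The only cosmetic difference is that the paper dispatches $N\leq\num 0$ separately and runs the induction for (\ref{mon4a}) in the form (\ref{ind1}) starting at $\num 1$ (so the basis uses the second clause of $D_0$ and the step always uses the third), whereas you start at $\num 0$ and absorb that same case distinction into the induction step, which you correctly flag as the delicate point.
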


An expression of the form $\exa^{\bfx;\bfv}_F(\bft,t)$ is shorthand for the
conjunction
$$\atl^{\bfx;\bfv}_F(\bft,t)\land\atm^{\bfx;\bfv}_F(\bft,t)$$
($\bft$ is a tuple of terms, and~$t$ is a term).
By~(\ref{htc:atmn1}), $\exa^{\bfx;\bfv}_F(\bfx,N)$ is equivalent
in~\HTC\  to
$$\atl^{\bfx;\bfv}_F(\bfx,N)\land\neg \atl^{\bfx;\bfv}_F(\bfx,N+\num 1).$$

\begin{proposition}\label{prop:exactly1}
  The formulas
\beq
\forall\bfx Y(\emph{\exa}^{\bfx;\bfv}_F(\bfx,Y)
\to\exists N(Y=N\land N\geq\num 0))
\eeq{exan}
and
\beq
\forall \bfx(F\lrar G)\to
\forall\bfx Y(\emph{\exa}^{\bfx;\bfv}_F(\bfx,Y)\lrar
\emph{\exa}^{\bfx;\bfv}_G(\bfx,Y))
\eeq{mon4d}
are provable in \HTC.
\end{proposition}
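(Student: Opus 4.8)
The plan is to unfold the shorthand $\exa^{\bfx;\bfv}_F(\cdot)=\atl^{\bfx;\bfv}_F(\cdot)\land\atm^{\bfx;\bfv}_F(\cdot)$ and establish each of \eqref{exan} and \eqref{mon4d} directly from the definitions \eqref{d1a} and \eqref{d1b} of $\atl$ and $\atm$ in terms of $\st$, together with the $\st$-axioms $D_0$ and the order-theoretic sentences collected in \Std. Neither formula appears to require the induction schema \Ind.

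For \eqref{exan}, I would assume $\atl_F(Y)\land\atm_F(Y)$ (suppressing the parameter arguments) and first extract a lower bound on $Y$: the first axiom of $D_0$, instantiated at $N=\num 0$, gives $\st^{\bfx;\bfv}_F(\bfx,\bfv,\num 0)$ for every $\bfx$, so plugging $N=\num 0$ into the universal consequent supplied by \eqref{d1b} yields $\num 0\leq Y$. Next, \eqref{d1a} turns $\atl_F(Y)$ into witnesses $\bfx_0$ and an integer $N_0$ with $\st_F(\bfx_0,N_0)$ and $N_0\geq Y$. Feeding this same witness into the universal consequent of \eqref{d1b} produces $N_0\leq Y$. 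Antisymmetry of the order, which belongs to \Std, then forces $Y=N_0$; since $N_0$ is an integer variable and $\num 0\leq Y=N_0$, it serves as the witness for $\exists N(Y=N\land N\geq\num 0)$. The only delicate point is the sort crossing: $Y$ is a general variable, and it is exactly the equation $Y=N_0$ with the integer witness $N_0$ that certifies $Y$ to be a numeral.

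For \eqref{mon4d}, I would assume $\forall\bfx(F\lrar G)$, split the biconditional into $\forall\bfx(F\to G)$ and $\forall\bfx(G\to F)$, and apply the monotonicity law \eqref{mon4a} to each direction, obtaining $\forall\bfx\bfv N(\st^{\bfx;\bfv}_F(\bfx,\bfv,N)\lrar\st^{\bfx;\bfv}_G(\bfx,\bfv,N))$. Substituting this equivalence into the right-hand sides of \eqref{d1a} and \eqref{d1b} makes them coincide for $F$ and for $G$, so $\atl_F(Y)\lrar\atl_G(Y)$ and $\atm_F(Y)\lrar\atm_G(Y)$; conjoining these gives $\exa_F(Y)\lrar\exa_G(Y)$.

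I expect the routine-but-attention-demanding part to be \eqref{exan}: one must combine the lower bound coming from the base case of $\st$ with the matching upper and lower bounds coming from the $\atl$-witness, and then rely on the order-theoretic content of \Std\ (reflexivity and antisymmetry) to collapse the general term $Y$ to an integer. By contrast, \eqref{mon4d} is an immediate congruence argument once \eqref{mon4a} is invoked in both directions.
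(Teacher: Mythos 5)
Your proof is correct. For~(\ref{mon4d}) it follows the paper's route exactly: split the hypothesis $\forall\bfx(F\lrar G)$ into two implications, apply~(\ref{mon4a}) in each direction to get $\forall\bfx\bfv N(\st^{\bfx;\bfv}_F(\bfx,\bfv,N)\lrar\st^{\bfx;\bfv}_G(\bfx,\bfv,N))$, and propagate this through the right-hand sides of~(\ref{d1a}) and~(\ref{d1b}); the paper merely packages the two directions as separate monotonicity claims for $\atl$ and $\atm$ (its~(\ref{mon4b}) and~(\ref{mon4c}), the latter contravariant) before conjoining them. For~(\ref{exan}) your endgame genuinely differs from the paper's. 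Both proofs begin identically, deriving $\num 0\leq Y$ from the base clause of~$D_0$ fed into the $\atm$ half of~$D_1$, and both extract from the $\atl$ half a witness $N_0$ with $\st^{\bfx;\bfv}_F(\bfx_0,\bfv,N_0)$ and $N_0\geq Y$. The paper then stops at the sandwich $\exists N(\num 0\leq Y\leq N)$ and invokes the \Std\ sentence $\forall Y(\exists N(\num 0\leq Y\leq N)\to\exists N(Y=N\land N\geq\num 0))$, which holds because the numerals are contiguous in the order on precomputed terms. You instead feed the same witness $(\bfx_0,N_0)$ back into the universally quantified consequent of~(\ref{d1b}) to obtain $N_0\leq Y$, and conclude $Y=N_0$ by antisymmetry, which is also an \Std\ axiom. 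Your variant is slightly sharper---it identifies $Y$ with the witness exactly and needs only antisymmetry rather than the contiguity of numerals---while the paper's version does not require instantiating the $\atm$ clause at the particular $\atl$-witness. Both are legitimate derivations in \HTC, and neither uses \Ind, as you observe.
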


The proof of these results can be found in Appendix~\ref{sec:proofs-props} (page~\pageref{sec:proofs-props}).

\section{An example of reasoning about programs} \label{sec:example}

In this section,
we show that $\tau^*A$ is equivalent to $\tau^*B$, for
the programs~$A$ and~$B$ from the introduction, in the
logic of here\nobreakdash-and\nobreakdash-there with counting, defined in the previous section.
The proof consists of three parts.

\subsection{Part 1: Simplification}

The translation~$\tau^*$ transforms program~$A$ into the conjunction of the
formulas
\beq
\forall Z(Z=a\to p(Z))
\eeq{f00}
and
\begin{align}
  \forall Y\!Z\big(\exists Z_1(Z_1\!=\!Y\!\land\!\atl^{X;}_F(Z_1))\!\land\!
  \exists Z_2(Z_2\!=\!Y\!\land\!\atm^{X;}_F(Z_2))\!\land\! Z\!=\!Y
  \to q(Z)\big),  
  \label{f000}
\end{align}
where~$F$ stands for $\tau^B(p(a)\land X\neq a)$.  Formula~(\ref{f00})
is equivalent to~$p(a)$, and~(\ref{f000}) is equivalent to
\beq
\forall Y(\atl^{X;}_F(Y)\land \atm^{X;}_F(Y)\to q(Y)).
\eeq{f0}
The antecedent of this implication can be written as $\exa^{X;}_F(Y)$.
By~(\ref{exan}), it follows that the variable~$Y$ can be replaced
by the integer variable~$N$.  Furthermore, by~(\ref{mon4d}),
formula~(\ref{f0}) can be further rewritten as
\beq
\forall N(\exa^{X;}_{p(a)\land X\neq a}(N)\to q(N)),
\eeq{f1}
because $F$ is equivalent to $p(a)\land X\neq a$.

The result of applying~$\tau^*$ to~$B$ is
the conjunction of~(\ref{f00})
and
\begin{gather*}
  \begin{aligned}
    \forall YZ\big(
    &\exists Z_1(Z_1=Y\land\atl^{X;}_G(Z_1))\,\land\qquad\qquad\\
    &\exists Z_2(Z_2=Y\land\atm^{X;}_G(Z_2))\,\land\qquad\;\;\\
    &\exists IJ(I=Y\land J=\num 1\land Z=I+J)
    \to q(Z)\big),    
  \end{aligned}
\end{gather*}
where~$G$ stands for $\tau^B(p(X))$.
This formula can be equivalently rewritten as
$$\forall I(\exa^{X;}_G(I+\num 1)\to q(I))$$
and further as
\beq
\forall I(\exa^{X;}_{p(X)}(I+\num 1)\to q(I)),
\eeq{f2}
because $G$ is equivalent to $p(X)$.

Thus the claim that $\tau^*A$ is equivalent to~$\tau^*B$
will be proved if we prove the formula
$$
p(a) \to
\forall N(\exa^{X;}_{p(X)\land X\neq a}(N)\lrar\exa^{X;}_{p(X)}(N+\num 1)).
$$
It is clearly a consequence of the formula
\beq
p(a) \to
\forall N(\atl^{X;}_{p(X)\land X\neq a}(N)\lrar\atl^{X;}_{p(X)}(N+\num 1)),
\eeq{l1}
which is proven below.

\subsection{Part 2: Three lemmas}

Three lemmas will be proved in the next section:
\begin{align}
&\forall XN(N>\num 0\land X > a\land \st^{X;}_{p(X)}(X,N)
\to \st^{X;}_{p(X)\land  X\neq a}(X,N)),\label{l3half}\\
&\forall XN(N>\num 0\land X\neq  a\land
\st^{X;}_{p(X)}(X,N+\num 1)\to\st^{X;}_{p(X)\land X\neq a}(X,N)),\label{l4lr}
\end{align}
and
\beq\ba r
\forall XN(N>\num 0\land X<a\land p(a)\land
\st^{X;}_{p(X)\land X\neq a}(X,N)\to
\st^{X;}_{p(X)}(X,N+\num 1)).
\ea\eeq{l4rl}
Using these lemmas, we will now prove~(\ref{l1}).
Assume $p(a)$; our goal is to show that
$$
\atl^{X;}_{p(X)\land X\neq a}(N)\lrar\atl^{X;}_{p(X)}(N+\num 1).
$$
We consider three cases, according to the axiom
$$\forall N(N<\num 0 \lor N=\num 0 \lor N>\num 0)$$
from \Std\ .

If~$N<\num 0$ then both sides of the equivalence are true
by~(\ref{atlnonp}).  If $N=\num 0$ then the left-hand
side is true by~(\ref{atlnonp}), and the right-hand side
follows from $p(a)$ by~(\ref{atl1}).
Hence, assume that $N>\num 0$.

\medskip\noindent\emph{Right-to-left:} assume $\atl^{X;}_{p(X)}(N+\num 1)$.
By (\ref{htc:atln}), there exists~$X$ such that
\beq
\st^{X;}_{p(X)}(X,N+\num 1).
\eeq{int9}
\emph{Case~1:} $X=a$, so that $\st^{X;}_{p(X)}(a,N+\num 1)$.
By $D_0$,
$$p(a)\land\exists U(a<U\land\st^{X;}_{p(X)}(U,N)).$$
Take~$U$ such that $a<U$ and $\st^{X;}_{p(X)}(U,N)$.
By~(\ref{l3half}), it follows that
$$\st^{X;}_{p(X)\land X\neq a}(U,N).$$  Then
$\atl^{X;}_{p(X)\land X\neq a}(N)$ by (\ref{htc:atln}).
\\[10pt]
\emph{Case~2:} $X\neq a$.  By~(\ref{int9}) and~(\ref{l4lr}),
$$\st^{X;}_{p(X)\land X\neq a}(X,N).$$
By (\ref{htc:atln}), it follows that
$\atl^{X;}_{p(X)\land X\neq a}(N)$.

\medskip\noindent\emph{Left-to-right:}
assume $\atl^{X;}_{p(X)\land X\neq a}(N)$.
Then, for some~$X$,
\beq
\st^{X;}_{p(X)\land X\neq a}(X,N)
\eeq{int1}
by (\ref{htc:atln}), and consequently $\st^{X;}_{p(X)}(X,N)$ by~(\ref{mon4a}).
\\[10pt]
\emph{Case~1:} $X>a$.  Then
$$p(a)\land \exists U(a< U\land \st^{X;}_{p(X)}(U,N))$$
(take~$U$ to be~$X$).  By $D_0$, we can conclude that
$\st^{X;}_{p(X)}(a,N+\num 1)$.  Then ${\atl^{X;}_{p(X)}(N+\num 1)}$
follows by (\ref{htc:atln}).
\\[10pt]
\emph{Case~2:} $X\leq a$. From~(\ref{int1}) and~(\ref{stpos}),
\hbox{$X\neq a$}, so that $X<a$. From~(\ref{int1}) and~(\ref{l4rl}),
$\st^{X;}_{p(X)}(X,N+\num 1)$; $\atl^{X;}_{p(X)}(N+\num 1)$
follows by (\ref{htc:atln}).

\subsection{Part 3: Proofs of the lemmas}

Proofs of all three lemmas use induction in the form
\beq
F^N_{\,\num 1}\land \forall N\left(N\geq\num 1 \land F \to F^N_{N+\num 1}\right)
\to \forall N (N\geq\num 1 \to F),
\eeq{ind1}
which follows from \Ind\ and \Std.

\medskip\noindent\emph{Proof of~(\ref{l3half}).}
We need to show that for all positive~$N$,
\beq\forall X(X>a \land\st^{X;}_{p(X)}(X,N)\to
\st^{X;}_{p(X)\land X\neq a}(X,N)).
\eeq{ih}
If $N$ is~$\num 1$ then~(\ref{ih}) is equivalent to
$$\forall X(X>a \land p(X) \to p(X)\land X\neq a)$$
by~$D_0$; this formula follows from \Std.
Assume~(\ref{ih}) for a positive~$N$; we need to prove
$$\forall X(X>a \land\st^{X;}_{p(X)}(X,N+\num 1)\to
\st^{X;}_{p(X)\land X\neq a}(X,N+\num 1)).$$
Assume $X>a \land\st^{X;}_{p(X)}(X,N+\num 1)$.
From the second conjunctive term,
$$
p(X)\land \exists U(X<U\land \st^{X;}_{p(X)}(U,N))
$$
by $D_0$.  Take~$U$ such that $X<U$ and $\st^{X;}_{p(X)}(U,N)$.
Then $U>a$, so that by the induction hypothesis,
$\st^{X;}_{p(X)\land X\neq a}(U,N)$.
Since $p(X)$, $X\neq a$, and $X<U$,
$$\st^{X;}_{p(X)\land X\neq a}(X,N+\num 1))$$
follows by $D_0$.

\medskip\noindent\emph{Proof of~(\ref{l4lr}).}
We need to show that for all positive~$N$,
\beq
\forall X(X\neq  a\land
\st^{X;}_{p(X)}(X,N+\num 1)\to\st^{X;}_{p(X)\land X\neq a}(X,N)).
\eeq{ihh}
To prove this formula for $N$ equal to $\num 1$,
assume that $X\neq a\land \st^{X;}_{p(X)}(X,\num 2)$.  By~(\ref{stpos}),
the second conjunctive term implies~$p(X)$;
$\st^{X;}_{p(X)\land X\neq a}(X,\num 1)$ follows by $D_0$.
Now assume~(\ref{ihh}) for a positive~$N$; we need to prove
\beq
\forall X(X\neq a\land
\st^{X;}_{p(X)}(X,N+\num 2)\to\st^{X;}_{p(X)\land X\neq a}(X,N+\num 1)).
\eeq{g}
Assume $X\neq a\land \st^{X;}_{p(X)}(X,N+\num 2)$.
From the second conjunctive term we conclude  by~$D_0$
that~$p(X)$ and, for some~$U$,
\beq
U>X\land \st^{X;}_{p(X)}(U,N+\num 1).
\eeq{int7}
We proceed by cases.
\\[10pt]
\emph{Case~1:} $U=a$, so that
$\st^{X;}_{p(X)}(a,N+\num 1)$.  By $D_0$, it follows that for some~$V$,
$V>a\land \st^{X;}_{p(X)}(V,N)$.  Then, by~(\ref{l3half}),
$\st^{X;}_{p(X)\land X\neq a}(V,N)$.  On the other hand, 
$p(X)\land X\neq a$ and \hbox{$V>a=U>X$};
the consequent of~(\ref{g}) follows by $D_0$.
\\[10pt]
\emph{Case~2:} $U\neq a$.  By the induction hypothesis, from the second
conjunctive term of~(\ref{int7}) we conclude that
$\st^{X;}_{p(X)\land X\neq a}(U,N)$.  Since \hbox{$U>X$} and
$p(X)\land X\neq a$,
the consequent of~(\ref{g}) follows by $D_0$.

\medskip\noindent\emph{Proof of~(\ref{l4rl}).}
We need to show that for all positive~$N$,
\beq
\forall X(X<a \land p(a)\land
\st^{X;}_{p(X)\land X\neq a}(X,N)\to \st^{X;}_{p(X)}(X,N+\num 1)).
\eeq{ihhh}
To prove this formula for $N$ equal to $\num 1$, assume that
$$X<a\land p(a)\land\st^{X;}_{p(X)\land X\neq a}(X,\num 1).$$
By $D_0$,
the second conjunctive term implies $\st^{X;}_{p(X)}(a,\num 1)$, and
the third term implies~$p(X)$.  Hence
$$p(X)\land\exists U(X<U \land \st^{X;}_{p(X)}(U,\num 1))$$
(take~$U$ to be~$a$). By $D_0$, it follows that
$\st^{X;}_{p(X)}(X,\num 2)$.
Now assume~(\ref{ihhh}) for a positive~$N$; we need to prove
$$\forall X(X<a\land p(a)\land
\st^{X;}_{p(X)\land X\neq a}(X,N+\num 1)\to \st^{X;}_{p(X)}(X,N+\num 2)).$$
Assume $X<a\land p(a)\land \st^{X;}_{p(X)\land X\neq a}(X,N+\num 1)$.
From the last conjunctive term we conclude by $D_0$\ that $p(X)$
and there exists~$U$ such that
\beq
X<U\land\st^{X;}_{p(X)\land X\neq a}(U,N).
\eeq{int5}
From the second conjunctive term of~(\ref{int5}), by~(\ref{stpos}),
$p(U)$ and~$U\neq a$.
We proceed by cases.
\\[10pt]
\emph{Case~1:} $U<a$.  By the induction hypothesis,
$\st^{X;}_{p(X)}(U,N+\num 1)$.  Since $p(X)$ and $X<U$, we can conclude
by $D_0$\ that $\st^{X;}_{p(X)}(X,N+\num 2)$.
\\[10pt]
\emph{Case~2:} $U>a$.  By~(\ref{mon4a}), the second conjunctive term
of~(\ref{int5}) implies $\st^{X;}_{p(X)}(U,N)$.
Since $p(a)$ and $a<U$, $\st^{X;}_{p(X)}(a,N+\num 1)$ follows by $D_0$.  Then,
since $p(X)$ and $X<a$, $\st^{X;}_{p(X)}(X,N+\num 2)$ follows in a similar way.

\section{Comparison with the original formalization} \label{sec:comparison}

The deductive system from the previous paper on {\sc mgc} programs
\cite{lif22a} operates with formulas over the signature~$\sigma_1$ (that
is,~$\sigma_2$ without $\st$ predicates).
Its definition uses the following notation.
If~$r$ is a precomputed term,~$\bfx$ is a tuple of distinct general
variables, and~$F$ is a formula over~$\sigma_0$, then the expression
$\exists_{\geq r}\bfx F$ stands for
\medskip

\begin{tabular}{ll}
$\exists\bfx_1\cdots\bfx_n\left(\,\bigwedge_{i=1}^n F^\bfx_{\,\bfx_i}
  \land \bigwedge_{i<j}\neg(\bfx_i=\bfx_j)  \right)$
  & if $r=\num n>\num 0$,\\
  $\top$,                   & if $r \leq \num 0$,\\
  $\bot$,                   & if $r>\num n$ for all integers~$n$.
\end{tabular}

\medskip\noindent
Here $\bfx_1,\dots,\bfx_n$ are disjoint tuples of distinct general variables
that do not occur in~$F$.  The symbols~$\top$ and~$\bot$ denote the logical
constants \emph{true}, \emph{false}.  The equality between tuples
$X_1,X_2,\dots$ and $Y_1,Y_2,\dots$ is understood as the conjunction
$X_1=Y_1\land X_2=Y_2\land\cdots$.  The three cases in this definition
cover all precomputed terms~$r$, because the set of numerals is
contiguous (Section~\ref{ssec:programs}).  Similarly,
$\exists_{\leq r}\bfx F$ stands for

\medskip
\begin{tabular}{ll}
$\forall\bfx_1\cdots\bfx_{n+1}\left(\bigwedge_{i=1}^{n+1}
  F^\bfx_{\,\bfx_i}
\to \,\bigvee_{i<j}\bfx_i=\bfx_j\right)$
   & if $r=\num n\geq \num 0$,\\
  $\bot$,                   & if $r < \num 0$,\\
  $\top$,                   & if $r>\num n$ for all integers~$n$.
\end{tabular}
\medskip

By \Defs\ we denote the set of all sentences of the forms
\beq
\forall \bfv \left(\atl^{\bfx;\bfv}_F(\bfv,r)\lrar
  \exists_{\geq r} \bfx F\right)
\eeq{def1}
and
\beq
\forall \bfv \left(\atm^{\bfx;\bfv}_F(\bfv,r)\lrar
  \exists_{\leq r}\bfx F\right).
\eeq{def2}
These formulas are similar to the axioms~$D_1$ of \HTC\
(Section~\ref{ssec:htc}) in the sense that both \Defs\ and~$D_1$
can be viewed as definitions of $\atl$ and $\atm$.  But each formula in \Defs\
refers to a specific value~$r$ of the last argument of $\atl$, $\atm$,
whereas the last argument of $\atl$, $\atm$ in~$D_1$ is a variable.
This difference explains why \HTC\ may be a better tool for
proving strong equivalence than deductive systems with the axioms \Defs.

\begin{theorem}
  \label{thm:HTC=>Defs}
  The formulas \Defs\ are provable in~\HTC.
  \end{theorem}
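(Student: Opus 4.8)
The plan is to prove each sentence in \Defs\ separately, by cases on the precomputed term $r$ following the three clauses in the definitions of $\exists_{\geq r}\bfx F$ and $\exists_{\leq r}\bfx F$. The central case is \eqref{def1} when $r$ is a numeral $\num n$ with $n>0$. Here I would first apply \eqref{htc:atln} (instantiated at $N=\num n$) to reduce the goal to
$$\exists\bfx\,\st^{\bfx;\bfv}_F(\bfx,\bfv,\num n)\;\lrar\;\exists_{\geq\num n}\bfx\,F,$$
and then establish this through two intermediate equivalences. The first asserts that $\st^{\bfx;\bfv}_F(\bfx_1,\bfv,\num n)$ is equivalent to the existence of an increasing chain $\bfx_1<\bfx_2<\cdots<\bfx_n$ of tuples all satisfying $F$; it is proved by induction on $n$ at the meta level, with the base case given by the second axiom of $D_0$ and the step obtained by unfolding $\st^{\bfx;\bfv}_F(\bfx_1,\bfv,\num{n+1})$ via the third axiom of $D_0$, whose hypothesis $\num n>\num 0$ is available in \Std. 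Existentially quantifying the start tuple on both sides then reduces the displayed goal to a second, purely order-theoretic equivalence: an increasing chain of length $n$ satisfying $F$ exists if and only if there are $n$ pairwise distinct tuples satisfying $F$, which is exactly $\exists_{\geq\num n}\bfx\,F$.

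The degenerate cases of \eqref{def1} are short. When $r\leq\num 0$ the right-hand side is $\top$, and $\atl^{\bfx;\bfv}_F(\bfv,r)$ follows from \eqref{atlnonp}. When $r$ exceeds every numeral the right-hand side is $\bot$; I would use \eqref{d1a} to rewrite $\atl^{\bfx;\bfv}_F(\bfv,r)$ as $\exists\bfx N(\st^{\bfx;\bfv}_F(\bfx,\bfv,N)\land N\geq r)$ and derive a contradiction from the \Std\ fact that no integer is $\geq r$. For the $\atm$ sentences \eqref{def2}, the main case $r=\num n$ with $n\geq 0$ reduces, via \eqref{htc:atmn1} and $\num n+\num 1=\num{n+1}$, to $\neg\atl^{\bfx;\bfv}_F(\bfv,\num{n+1})$; combining the already-proved instance of \eqref{def1} for $\num{n+1}$ with the \HTC-provable equivalence $\exists_{\leq\num n}\bfx\,F\lrar\neg\,\exists_{\geq\num{n+1}}\bfx\,F$ then finishes this case. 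That last equivalence holds intuitionistically because equality of precomputed tuples is decidable in \HTC: the instances $\bfx_i=\bfx_j\lor\neg(\bfx_i=\bfx_j)$ belong to \Std. The two remaining $\atm$ cases are handled through \eqref{d1b} together with \Std: for $r<\num 0$ one instantiates at $N=\num 0$, using that $\st^{\bfx;\bfv}_F(\bfx,\bfv,\num 0)$ holds by the first axiom of $D_0$, to force the contradictory $\num 0\leq r$; for $r$ above every numeral every integer $N$ satisfies $N\leq r$, so the right-hand side of \eqref{d1b} is trivially true.

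I expect the sorting equivalence to be the main obstacle. Its forward direction (increasing chain $\Rightarrow$ pairwise distinct) is immediate from irreflexivity of the order, but the backward direction requires arranging $n$ pairwise distinct tuples into increasing order. Since $n$ is a fixed numeral this is a finite argument, carried out by case analysis over the possible orderings of the tuples; what makes it go through is that all the order facts needed — trichotomy, transitivity, and irreflexivity of the lexicographic order $<$ on tuples, which unwinds to a formula over $\sigma_0$ not mentioning any predicate $p/n$ — are true in standard interpretations and hence available as members of \Std. Keeping the bookkeeping of this case analysis under control, rather than any conceptual difficulty, is the delicate part.
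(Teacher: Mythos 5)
Your proposal is correct in substance and follows the same outer skeleton as the paper's proof: a three-way case split on the precomputed term~$r$ for each of~\eqref{def1} and~\eqref{def2}, reduction of the main case of~\eqref{def1} via~\eqref{htc:atln} to an equivalence between $\exists\bfx\,\st^{\bfx;\bfv}_F(\bfx,\bfv,\num n)$ and $\exists_{\geq\num n}\bfx\,F$, and reduction of the main case of~\eqref{def2} to~\eqref{def1} via~\eqref{htc:atmn1} together with the intuitionistically provable equivalence $\exists_{\leq\num n}\bfx\,F\lrar\neg\exists_{\geq\num{n+1}}\bfx\,F$ (the paper's~\eqref{htap:exex}, justified exactly as you say by decidability of equality from~\Std). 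Where you diverge is in the key lemma. You first characterize $\st^{\bfx;\bfv}_F(\bfx_1,\bfv,\num n)$ as the existence of an increasing chain of length~$n$ starting at~$\bfx_1$, and then prove a separate ``sorting equivalence'' between increasing chains and pairwise-distinct witness tuples by case analysis over all orderings. The paper instead never makes the chain explicit: it proves the recursion $\exists_{\geq\num{n+1}}\bfu\,F\lrar\exists\bfu(F\land\exists_{\geq\num n}\bfw(\bfu<\bfw\land F^\bfu_\bfw))$ (formula~\eqref{htap:explus1}, whose left-to-right direction uses the \Std\ axiom that one of $n+1$ tuples is lexicographically least), and then runs a single meta-induction~\eqref{htcn:tri} that marches this recursion in lockstep with the unfolding of~$\st$ by~$D_0$. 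This buys a linear case analysis ($n+1$ cases per step, choosing the minimum) in place of your $n!$ orderings, and it dispenses with the standalone sorting lemma; your version is conceptually transparent but, as you anticipate, combinatorially heavier. Both are sound, since all required order facts (trichotomy, transitivity, irreflexivity of lexicographic order on tuples) are $p/n$-free sentences over~$\sigma_0$ and hence in~\Std.

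One small imprecision: in the case $r\leq\num 0$ of~\eqref{def1} you invoke~\eqref{atlnonp}, but that formula quantifies over an \emph{integer} variable~$N$, whereas~$r$ may be a non-numeral precomputed term lying below all numerals (contiguity of the numerals forces any non-numeral $r\leq\num 0$ to be below every numeral). For such~$r$ you cannot instantiate~\eqref{atlnonp}; instead instantiate the general variable~$Y$ in~\eqref{d1a} at~$r$, obtain $\st^{\bfx;\bfv}_F(\bfx,\bfv,\num 0)$ from~$D_0$ and $\num 0\geq r$ from~\Std, and conclude $\atl^{\bfx;\bfv}_F(\bfv,r)$ --- which is exactly what the paper does. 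This is a one-line repair, not a structural flaw.
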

  The proof of this theorem can be found in Appendix~\ref{sec:proof:thm:thm:HTC=>Defs} (page~\pageref{sec:proof:thm:thm:HTC=>Defs}).
  The formulas \Defs\ are the only axioms of the deductive system from the
  previous publication \cite{lif22a} that are not included in \HTC.
So the theorem above shows that all formulas provable in that system are
provable in \HTC\ as well.

\section{Deductive system $\HTC'$} \label{sec:cl}
 
 In this section, we show that combining~$\tau^*$ with an additional
 syntactic transformation~$\gamma$ allows us to replace \HTC\ by a
 classical first-order theory.
 
The signature~$\sigma'_2$ is obtained from
 the signature~$\sigma_2$ (Section~\ref{ssec:htc}) by adding, for
 every predicate symbol~$p$ other than comparison symbols~(\ref{comp}),
 a new predicate symbol~$p'$ of the same arity.
 The formula
 ${\forall {\bf X}(p({\bf X})\to p'({\bf X}))}$, where $\bf X$ is a tuple
 of distinct general variables, is denoted by $\A(p)$.  The set of
all formulas~$\A(p)$ is denoted by~$\A$.
 
 
 For any formula~$F$ over the signature~$\sigma_2$, by~$F'$ we denote
 the formula over~$\sigma'_2$
 obtained from~$F$ by replacing every occurrence of every
 predicate symbol~$p$ other than comparison symbols by $p'$.  
 The translation~$\gamma$, which relates the logic of here\nobreakdash-and\nobreakdash-there
 to classical logic, maps formulas over~$\sigma_2$ to formulas
 over~$\sigma'_2$.
 It is defined recursively:
 \begin{itemize}
 \item $\gamma F=F$ if~$F$ is atomic,
 \item $\gamma(\neg F)=\neg F'$,
 \item $\gamma(F\land G)=\gamma F \land \gamma G$,
 \item $\gamma(F\lor G)=\gamma F \lor \gamma G$,
 \item $\gamma(F\to G)=(\gamma F \to \gamma G)\land (F'\to G')$,
 \item $\gamma(\forall X\,F)=\forall X\,\gamma F$,
 \item $\gamma(\exists X\,F)=\exists X\,\gamma F$.
 \end{itemize}
To apply~$\gamma$ to a set of formulas means to apply~$\gamma$ to each of
 its members.
 
 By $\HTC'$ we denote the classical first-order theory over the
 signature~$\sigma_2'$ with the axioms~$\A$, $\gamma(\Ind)$,
  \Std, $\gamma D_0$ and $\gamma D_1$.

\begin{theorem} \label{lem:HTC<=>FOC}
   A sentence~$F\lrar G$ over signature~$\sigma_2$ is provable in \HTC\ iff
   $\gamma F \lrar \gamma G$ is provable in $\HTC'$.
\end{theorem}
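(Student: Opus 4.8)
The plan is to prove the equivalence semantically, by matching the \HTis\ that satisfy the axioms of \HTC\ with the classical $\sigma_2'$-interpretations that satisfy the axioms of $\HTC'$, using the correspondence that the transformation~$\gamma$ is designed to capture. This rests on two completeness facts. On the one hand, the logic of quantified here-and-there — intuitionistic logic with~(\ref{hosoi}) and~(\ref{sqht}) — is complete with respect to \HTis\ \cite{lif07a}, so \HTC\ proves a sentence~$\varphi$ over~$\sigma_2$ iff $\varphi$ is satisfied by every \HTi\ that satisfies \Std, \Ind, $D_0$ and~$D_1$. On the other hand, $\HTC'$ is an ordinary classical theory, so by Gödel completeness it proves a sentence iff that sentence holds in every classical $\sigma_2'$-interpretation satisfying $\A$, $\gamma\Ind$, \Std, $\gamma D_0$ and $\gamma D_1$. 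The theorem follows once these two model classes are put in bijection and the satisfaction of $F$, $G$ is related to that of $\gamma F$, $\gamma G$.

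The central construction is the model correspondence together with the satisfaction lemma. An \HTi\ $\HI$, with here-world~$\HH$ contained in the there-world~$I$, determines a classical $\sigma_2'$-interpretation~$\widehat I$ on the same universes: $\widehat I$ interprets the function symbols and the comparison predicates as their common reduct in~$\HI$, each unprimed predicate~$p$ by its extension in~$\HH$, and each primed predicate~$p'$ by its extension in~$I$. Since $\HH\subseteq I$, we have $\widehat I\models\A$, and conversely every classical model of~$\A$ is $\widehat I$ for a unique~$\HI$. I would then prove, by induction on a formula~$F$ over~$\sigma_2$, that (i)~$\widehat I\models F'$ iff~$I$ satisfies~$F$ classically, and (ii)~$\widehat I\models\gamma F$ iff $\HI\modelsht F$. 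Part~(i) is immediate from the definition of the primed copy; part~(ii) is routine for atoms and for $\land,\lor,\exists,\forall$, and the informative cases are $\gamma(\neg F)=\neg F'$ and $\gamma(F\to G)=(\gamma F\to\gamma G)\land(F'\to G')$, whose extra conjuncts are exactly what is needed to reproduce the here-and-there meaning of~$\neg$ and~$\to$ via part~(i) applied to the there-world component.

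With the satisfaction lemma established, I would verify that $\HI\mapsto\widehat I$ carries qualifying models to qualifying models. Part~(ii) gives $\widehat I\models\gamma D_0$ iff $\HI\modelsht D_0$, and likewise for~$D_1$ and for each instance of~\Ind. For \Std\ there is a small subtlety to isolate: such a sentence has no predicate symbol~$p/n$ and only comparison predicates, so its primed copy is itself and, because comparison predicates and arithmetic are interpreted alike in the here- and there-worlds, $\HI\modelsht\varphi$, the classical truth of~$\varphi$ in~$I$, and $\widehat I\models\varphi$ all coincide — which is why $\HTC'$ may list \Std\ directly rather than $\gamma\Std$. Hence $\HI$ satisfies the axioms of \HTC\ iff $\widehat I$ satisfies the axioms of $\HTC'$. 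To close the loop I must reconcile the fact that classical validity of $\gamma F\lrar\gamma G$ over qualifying models expresses only the agreement of~$F$ and~$G$ at here-worlds, whereas $\HTC$-provability of $F\lrar G$ also demands agreement at there-worlds. This gap is harmless: by the heredity property, the there-world~$I$ of any qualifying~$\HI$ yields a total model $\langle I,I\rangle$ that is again qualifying, and at a total model here-and-there satisfaction coincides with classical satisfaction; so the here-agreement quantified over all qualifying models already entails the there-agreement. Chaining the two completeness facts with this observation gives the biconditional.

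The main obstacle I expect is twofold. The delicate technical step is the satisfaction lemma, where the negation and implication cases must be threaded together with part~(i) so that the there-world bookkeeping introduced by~$\gamma$ comes out right, with the two sorts and the equality predicate handled uniformly. The more foundational point is the appeal to completeness of quantified here-and-there for~\HTC; this is what licenses the semantic argument, and if one preferred to avoid it, the same correspondence could instead be used to transfer derivations directly and prove the two implications by induction on proofs, though the semantic route above is considerably cleaner.
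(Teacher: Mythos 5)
Your proposal is correct and follows essentially the same route as the paper: your $\widehat I$ is the paper's $I^{\HH}$ (Lemma~\ref{lem:11}), your satisfaction lemmas (i) and (ii) are the paper's Lemmas~\ref{lem:prima.transformation} and~\ref{lem:prima.transformation.ht}, and the chain through completeness of quantified here-and-there on one side and classical completeness of $\HTC'$ on the other is exactly the paper's argument (which handles the here-/there-agreement point by reducing provability of $F\lrar G$ to mutual derivability of $F$ and $G$, equivalent to your total-model observation). The one nuance is that the completeness theorem actually invoked is for $\SQHT^=$, which also includes decidability of equality and of extensional predicates as axioms; as the paper notes, these are absorbed into \Std, so your formulation remains sound.
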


The proof of this theorem can be found in Appendix~\ref{sec:plem} (page~\pageref{sec:plem}).

\begin{corollary}
   A sentence~$F$ over the signature~$\sigma_2$ is provable in \HTC\ iff
   $\gamma F$ is provable in $\HTC'$.
\end{corollary}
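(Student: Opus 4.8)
The plan is to derive the corollary from Theorem~\ref{lem:HTC<=>FOC} by instantiating its biconditional-level statement with a fixed, trivially provable auxiliary sentence. The underlying observation is purely meta-logical: in any system that contains the connectives $\land$, $\to$ and handles $\lrar$ in the usual way (which covers both the intuitionistic base of \HTC\ and the classical system $\HTC'$), a sentence~$X$ is provable if and only if $X\lrar H$ is provable, provided~$H$ itself is provable. Indeed, from provability of~$X$ and of~$H$ one obtains $X\lrar H$ by weakening; conversely, from $X\lrar H$ together with~$H$ one obtains~$X$ by modus ponens on the right-to-left implication. This reasoning is valid both intuitionistically and classically.

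First I would fix a sentence~$G$ over~$\sigma_2$ that is provable in \HTC, is atomic, and contains no predicate symbol of the form~$p/n$; the sentence $\num 0=\num 0$ will do. Being a member of \Std, it is an axiom of \HTC, and hence provable there. Since~$G$ is atomic we have $\gamma G=G$, and since~$G$ contains no symbol~$p/n$ we have $G'=G$ as well. As \Std\ is also among the axioms of $\HTC'$, the sentence $\gamma G=G$ is provable in $\HTC'$. Thus the single member $\num 0=\num 0$ of \Std\ serves simultaneously as a provable auxiliary in \HTC\ and as a provable auxiliary in $\HTC'$.

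With this~$G$ in hand, both directions follow by chaining three equivalences. By the meta-logical observation applied inside \HTC, the sentence~$F$ is provable in \HTC\ iff $F\lrar G$ is provable in \HTC. By Theorem~\ref{lem:HTC<=>FOC}, $F\lrar G$ is provable in \HTC\ iff $\gamma F\lrar\gamma G$ is provable in $\HTC'$. Finally, applying the meta-logical observation inside $\HTC'$ and using that $\gamma G$ is provable there, $\gamma F\lrar\gamma G$ is provable in $\HTC'$ iff $\gamma F$ is provable in $\HTC'$. Composing these three equivalences yields exactly the statement of the corollary. Note that we never need to apply~$\gamma$ to a biconditional: Theorem~\ref{lem:HTC<=>FOC} is already phrased in terms of $\gamma F\lrar\gamma G$, so the argument uses it verbatim.

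I do not expect a genuine obstacle, since all the mathematical content resides in Theorem~\ref{lem:HTC<=>FOC}. The only point that requires attention is the choice of the auxiliary sentence~$G$, which must be provable in \HTC\ and have a $\gamma$-image provable in $\HTC'$; it is precisely because~$G$ is taken atomic and free of $p/n$ symbols that $\gamma$ leaves it fixed, allowing a single formula from \Std\ to discharge both requirements.
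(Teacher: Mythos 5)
Your proposal is correct and follows essentially the same route as the paper, which simply instantiates Theorem~\ref{lem:HTC<=>FOC} with $G=\top$; you instead use the atomic \Std-axiom $\num 0=\num 0$ as the provable auxiliary fixed by~$\gamma$, and spell out the (standard) meta-logical step that the paper leaves implicit.
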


\begin{proof}
  In Theorem~\ref{lem:HTC<=>FOC}, take~$G$ to be~$\top$.
\end{proof}

From Theorems~\ref{thm:j:HTC.strong.equivalence} and~\ref{lem:HTC<=>FOC}
we conclude that {\sc mgc} programs~$\Pi_1$ and~$\Pi_2$ are strongly
equivalent if the formula $\gamma\tau^*\Pi_1\lrar\gamma\tau^*\Pi_2$
is provable in $\HTC'$.

\section{Deductive system $\HTCw$}\label{sec:omega}

In case of the language mini-\gringo, using inference rules with
infinitely many premises allows us to define a deductive system that
satisfies not only condition~(\ref{p}) but also its converse:
programs~$\Pi_1$,~$\Pi_2$ are strongly equivalent \emph{if and only if}
$\tau^*\Pi_1$ and $\tau^*\Pi_2$ can be derived from each other
\cite[Theorem~6]{fan23}.  In this section we define a deductive system
with the same property for the language {\sc mgc}.  This system,
like the deductive system from the previous publication on {\sc mgc}
\cite{lif22a}, does not require extending the signature~$\sigma_1$.

The system $\HTCw$ is an extension of first-order intuitionistic logic
formalized as the natural deduction system \emph{Int}
\cite[Section~5.1]{fan23} for the signature~$\sigma_1$.  Its derivable objects
are \emph{sequents\/}---expressions $\G\seq F$, where $\G$ is a
finite set of formulas over $\sigma_1$ (``assumptions''), and~$F$ is a
formula over~$\sigma_1$.  A sequent of the form $\seq F$ is identified with
the formula~$F$.  The system $\HTCw$ is obtained from \emph{Int}
by adding
\begin{itemize}
\item  axiom schemas (\ref{hosoi}) and (\ref{sqht})
for all formulas $F$, $G$, $H$ over~$\sigma_1$,
\item axioms \Std\  and \Defs, and
  \item the \emph{$\omega$-rules}
    \begin{gather}
      \frac
    {\G\seq F^X_t \hbox{ for all precomputed terms}~t}
    {\G\seq \forall X\,F},
    \label{eq:omega.general}
    \end{gather}
  where~$X$ is a general variable, and
  \begin{gather}
  \frac
  {\G\seq F^N_{\num n} \hbox{ for all integers}~n}
  {\G\seq \forall N\,F},
  \label{eq:omega.integer}
  \end{gather}
where~$N$ is an integer variable.
\end{itemize}

Induction axioms are not on this list, but the instances of the induction
schema \Ind\ for all
formulas~$F$ over~$\sigma_1$ are provable in~$\HTC^\omega$.  Indeed, we
can prove in~$\HTC^\omega$ the sequents
$$
F^N_{\num 0}\land \forall N\left(N\geq\num 0 \land F \to F^N_{N+\num 1}\right)
\seq \num n\geq\num 0 \to F
$$
for all integers~$n$; then \Ind\ can be derived by the
second $\omega$-rule followed by implication introduction.

\begin{theorem}
\label{thm:HTCw.strong.equivalence}
For any {\sc mgc} programs $\Pi_1$ and $\Pi_2$, the
formula $\tau^*\Pi_1\lrar\tau^*\Pi_2$ is provable in $\HTCw$ iff
$\Pi_1$ and $\Pi_2$ are strongly equivalent.
\end{theorem}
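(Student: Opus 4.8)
The plan is to prove both directions by passing through the semantics of standard \HTis\ of the signature~$\sigma_1$ (Appendix~\ref{sec:standard}). I would first record two bridges. The \emph{semantic bridge for strong equivalence}: programs $\Pi_1$ and $\Pi_2$ are strongly equivalent if and only if the sentence $\tau^*\Pi_1\lrar\tau^*\Pi_2$ is satisfied by every standard \HTi\ of~$\sigma_1$. This follows from the definition of strong equivalence in terms of the infinitary propositional formulas $\tau\Pi_1$, $\tau\Pi_2$, the characterization of strong equivalence of infinitary propositional formulas by their \HT-models (pairs $\langle H,T\rangle$ of sets of precomputed atoms with $H\subseteq T$), and the correspondence between the value of $\tau^*\Pi$ under a standard \HTi\ and the value of the infinitary formula $\tau\Pi$ under the associated pair of atom sets. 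The \emph{semantic bridge for provability}: a sentence $F$ over $\sigma_1$ is provable in $\HTCw$ if and only if it is satisfied by every standard \HTi. Granting these two equivalences, the theorem is immediate, since both sides reduce to the statement that $\tau^*\Pi_1\lrar\tau^*\Pi_2$ is satisfied by every standard \HTi.

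The soundness half of the second bridge (provable implies valid) is routine: I would check each ingredient of $\HTCw$ against the semantics of standard \HTis. The natural deduction system \emph{Int} is sound for \HTis\ regarded as two-world Kripke models; the schemas~(\ref{hosoi}) and~(\ref{sqht}) are valid there because the ``there'' world extends the ``here'' world over a single fixed domain; the axioms \Std\ hold by the very definition of a standard interpretation; and \Defs\ holds because in a standard \HTi\ the predicates $\atl$ and $\atm$ are interpreted as ``$F$ holds for at least / at most $r$ values of~$\bfx$,'' which is exactly what $\exists_{\geq r}$ and $\exists_{\leq r}$ express. Finally, the two $\omega$-rules preserve validity because in a standard interpretation the general domain is exactly the set of precomputed terms and the integer domain is exactly the set of numerals, so a universal sentence holds as soon as all of its instances over these named elements hold.

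The completeness half of the second bridge (valid implies provable) is the main obstacle, and I would obtain it by adapting the completeness argument used for mini-\gringo~\cite[Theorem~6]{fan23}. The key idea is that the $\omega$-rules let one reason at the ``ground'' level: using~(\ref{eq:omega.general}) and~(\ref{eq:omega.integer}), proving a universally quantified $\sigma_1$-sentence reduces to proving all of its instances in which the quantified variables are replaced by precomputed terms or numerals, which is the syntactic counterpart of the semantic fact exploited in soundness. Combined with the completeness of propositional here\nobreakdash-and\nobreakdash-there (captured by~(\ref{hosoi}) together with intuitionistic logic) for the quantifier-free, ground level, and with \Std\ for evaluating closed arithmetic terms and comparisons, this yields derivations of every valid ground instance, which the $\omega$-rules then reassemble into a derivation of the original sentence.

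The genuinely new work relative to the mini-\gringo\ case is the treatment of the aggregate predicates $\atl$ and $\atm$, which have no analogue there. Here I would use \Defs\ to eliminate every atom $\atl^{\bfx;\bfv}_F(\bfv,r)$ and $\atm^{\bfx;\bfv}_F(\bfv,r)$, for each precomputed $r$, in favor of the formulas $\exists_{\geq r}\bfx\,F$ and $\exists_{\leq r}\bfx\,F$ over~$\sigma_0$; an $\omega$-rule over the integer last argument then handles the quantification over all of its values. After this elimination the situation reduces to one over $\sigma_0$, where the mini-\gringo\ machinery applies. The delicate point is that \Defs\ is available for all instances $r$ simultaneously and must interact correctly with the $\omega$-rule on the integer argument, so that the ground reduction of a sentence containing aggregate atoms is itself provable in $\HTCw$; verifying this interaction is where the argument is most likely to require care.
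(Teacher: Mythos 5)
Your two-bridge architecture is exactly the paper's: both directions are reduced to validity of $\tau^*\Pi_1\lrar\tau^*\Pi_2$ over standard \HTis\ of~$\sigma_1$. Your first bridge matches the paper's route through the propositional \HT-characterization of strong equivalence \cite[Theorem~3]{har17}, the grounding lemma relating $\tuple{\X,\Y}^\uparrow\modelsht\tau^*\Pi$ to $\tuple{\X,\Y}\modelsht\tau\Pi$ (which in the paper rests on Theorem~1 of \citealt{lif22a}), and the fact that every standard \HTi\ is of the form $\tuple{\X,\Y}^\uparrow$ --- a surjectivity claim you use implicitly and should state, since without it validity over standard \HTis\ does not transfer back to all propositional \HTis.

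The genuine gap is in the completeness half of your provability bridge. You propose to prove that every $\sigma_1$-sentence valid in all standard \HTis\ is provable in $\HTCw$ by adapting the ground-reduction completeness argument of mini-\gringo, eliminating $\atl$ and $\atm$ atoms via \Defs\ and reassembling with the $\omega$-rules, and you yourself flag the interaction between \Defs\ and the $\omega$-rule on the integer argument as the point ``most likely to require care'' --- and then leave it unverified. That is precisely the step that would need a real argument, and it is avoidable. The paper instead takes the completeness of $\SQHTw$ with respect to $\omega$-models \cite[Theorem~4]{fan23} as a black box, observes that $\HTCw$ is exactly $\SQHTw$ plus the axioms \Std\ and \Defs, and hence that $\HTCw$ is sound and complete for the $\omega$-models of \Std\ and \Defs; a one-line lemma then identifies those $\omega$-models, up to isomorphism, with the standard \HTis. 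No new completeness proof and no aggregate-specific ground reduction is needed at all --- the aggregate predicates are handled purely on the semantic side, by building \Defs\ into the definition of a standard \HTi. I recommend replacing your direct completeness argument with this modular one; as written, your proposal does not yet constitute a proof of the completeness direction.
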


The proof of this theorem can be found in Appendix~\ref{sec:proof:thm:HTCw.strong.equivalence} (page~\pageref{sec:proof:thm:HTCw.strong.equivalence}).
The system $\HTCw$ is not an extension of \HTC, because
its axioms say nothing about the predicate symbols $\st^{\bfx;\bfv}_F$.
But all theorems of \HTC\ that do not contain these symbols are
provable in $\HTCw$:

\begin{theorem}
\label{thm:stronger}
Every sentence over the signature~$\sigma_1$ provable in \HTC\ is
provable in $\HTCw$.
\end{theorem}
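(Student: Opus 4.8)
The plan is to argue semantically, using the soundness of \HTC\ together with the completeness of $\HTCw$ with respect to the standard \HTis\ discussed in the appendices, and to bridge the gap between the signatures $\sigma_2$ and $\sigma_1$ by extending and restricting the interpretation of the $\st$ predicates. Concretely, I would rely on two facts. First, every sentence provable in \HTC\ is satisfied by all standard \HTis\ of~$\sigma_2$; I would either establish this directly or extract it from the soundness argument underlying Theorem~\ref{thm:j:HTC.strong.equivalence}. Second, $\HTCw$ is complete with respect to the standard \HTis\ of~$\sigma_1$, so that a $\sigma_1$-sentence is provable in $\HTCw$ as soon as it is satisfied by all such interpretations.

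For the soundness step I would check that each proper axiom group of \HTC\ —~(\ref{hosoi}), (\ref{sqht}), \Std, \Ind, $D_0$, and~$D_1$ — is satisfied by every standard \HTi\ of~$\sigma_2$, and that the intuitionistic inference rules, together with the two added schemas, preserve satisfaction in \HTis. It follows that any sentence provable in \HTC\ holds in all standard \HTis\ of~$\sigma_2$.

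The heart of the argument is a restriction/extension lemma linking the two signatures. Given an arbitrary standard \HTi~$\mathcal M$ of~$\sigma_1$, I would extend it to an interpretation~$\mathcal M^+$ of~$\sigma_2$ by interpreting each predicate $\st^{\bfx;\bfv}_F$ according to its intended meaning: in each of the two worlds, $\st^{\bfx;\bfv}_F(\bfx,\bfv,\num n)$ is taken to hold iff there is a lexicographically increasing sequence of length~$n$ satisfying~$F$ whose first term is~$\bfx$. I would then verify that $\mathcal M^+$ is again a standard \HTi: the two essential points are that the here-extension of $\st$ is contained in its there-extension, which follows from persistence of~$F$, and that the recursive biconditionals~$D_0$ hold in the here-and-there sense. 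Since the $\st$ predicates do not occur in~$\sigma_1$, $\mathcal M^+$ and~$\mathcal M$ agree on all $\sigma_1$-sentences.

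Combining the pieces: if a $\sigma_1$-sentence~$F$ is provable in \HTC, then by soundness it holds in $\mathcal M^+$ for every standard \HTi~$\mathcal M$ of~$\sigma_1$, hence it holds in~$\mathcal M$ itself; thus~$F$ is satisfied by all standard \HTis\ of~$\sigma_1$, and completeness of $\HTCw$ delivers a proof of~$F$ in $\HTCw$. The step I expect to be the main obstacle is the verification that the canonically defined $\st$-relations satisfy $D_0$ under the here-and-there reading of its biconditionals while respecting persistence: unlike in classical logic, the recursion must be checked at both worlds simultaneously, and one must confirm that the two-world definition is \emph{forced} by $D_0$, rather than merely permitted, so that in passing from~$\sigma_1$ to~$\sigma_2$ no standard interpretation is lost and none is spuriously added.
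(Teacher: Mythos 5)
Your proposal is correct in substance, and its technical core coincides with the paper's: the canonical extension $\mathcal M\mapsto\mathcal M^+$ you describe (interpreting $\st^{\bfx;\bfv}_F(\bfx,\bfv,\num n)$ in each world as the existence of an increasing chain of $n$ witnesses starting at $\bfx$, with persistence guaranteeing that the here-extension sits inside the there-extension) is exactly the construction $\tuple{\HH,I}\mapsto\tuple{\HH',I'}$ in the paper's proof of Lemma~\ref{lem:conservative.extension.sigma2}, including the delicate two-world verification of the recursive $D_0$ biconditionals. Where you differ is in the packaging. The paper routes the argument through an intermediate infinitary system $\HTCww$ over~$\sigma_2$: it first shows \emph{syntactically} that every instance of~$D_1$ is derivable from~$D_0$ in~$\HTCww$ (Lemma~\ref{thm:HTCws=>newdefs}) and that \Ind\ is derivable by the second $\omega$-rule, so that the semantic conservativity lemma only has to verify \Std, \Defs\ and~$D_0$ in~$\mathcal M^+$. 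Your purely semantic route avoids introducing~$\HTCww$, but in exchange you must also verify~$D_1$ and~\Ind\ directly in~$\mathcal M^+$ under the here-and-there reading; both verifications go through (for~$D_1$ via \Defs, for~\Ind\ because $\mathcal M^+$ is an $\omega$-interpretation of the integer sort), but they are genuine additional obligations that you should not leave implicit in the phrase ``check each axiom group.'' One further remark: your closing concern that the two-world definition of~$\st$ must be \emph{forced} by~$D_0$, so that no standard interpretation is lost or spuriously added, is unnecessary for this theorem. The argument only needs the existence, for each standard \HTi\ of~$\sigma_1$, of \emph{some} expansion satisfying all the \HTC\ axioms and agreeing with it on $\sigma_1$-sentences; uniqueness of that expansion plays no role, since soundness quantifies over whatever class of models you exhibit and completeness of~$\HTCw$ is invoked only at the level of~$\sigma_1$.
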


The proof of this theorem can be found in Appendix~\ref{sec:proof:thm:stronger} (page~\pageref{sec:proof:thm:stronger}).

In order to illustrate how~$\HTCw$ can be used to prove strong equivalence without using the~$\st$ predicates, consider the two programs from the introduction.
As discussed in Section~\ref{sec:example}, 
the strong equivalence of these two programs will be verified if we can prove formula~\eqref{l1}, which is equivalent to the conjunction of formulas
\begin{gather}
\forall N(
  p(a) \wedge 
\atl^{X;}_{p(X)\land X\neq a}(N)\to \atl^{X;}_{p(X)}(N+\num 1)),
\label{l1.l}
\\
\forall N(
  p(a) \wedge 
  \atl^{X;}_{p(X)}(N+\num 1)
  \to 
\atl^{X;}_{p(X)\land X\neq a}(N))
\label{l1.r}
\end{gather}
We can prove~\eqref{l1.l} using the~$\omega$-rule~\eqref{eq:omega.integer} with infinitely many premises of the form
\begin{gather}
  p(a) \wedge 
\atl^{X;}_{p(X)\land X\neq a}(\num n)\to \atl^{X;}_{p(X)}(\num n +\num 1),
\label{l1.l.gr}
\end{gather}
one for each integer~$n$.
Each of these formulas can be proved in~\HTCw, without using the $\omega$-rule.
Note that, by~\eqref{def1}, the antecedent of~\eqref{l1.l.gr} entails
\begin{align*}
p(a) \wedge \exists X_1\cdots X_n\left(\,
  p(X_1) \wedge X_1 \neq a \land \dotsc \land p(X_n) \wedge X_n \neq a
  \land \bigwedge_{i<j}\neg(X_i=X_j)  \right)  
\end{align*}
which entails the consequent of~\eqref{l1.l.gr}.
The proof uses the introduction of the existential quantifier rule, introducing the variable~$X_{n+1}$ in place of the object constant~$a$, and equivalence~\eqref{def1} again, now with~${n+1}$ variables~$X_1,\dotsc,X_{n+1}$.
%
We can similarly prove~\eqref{l1.r} using the
$\omega$-rule~\eqref{eq:omega.integer} with infinitely many premises obtained by replacing in~\eqref{l1.r} variable~$N$ by numeral~$\num n$ for each integer~$n$.

\section{Conclusion}

In this paper, we argue that strong equivalence of two
programs with counting can be established, in many cases, by proving the
equivalence of the corresponding first-order
sentences in the deductive system \HTC.
We do not know whether \HTC\ is
complete for strong equivalence, that is to say, if
\hbox{$\tau^*\Pi_1\lrar\tau^*\Pi_2$} is provable in \HTC\ for all
pairs~$\Pi_1$,~$\Pi_2$ of strongly equivalent {\sc mgc} programs.  But
the deductive system $\HTCw$, which contains infinitary rules, is
complete in this sense.

Sentences~$F_1$,~$F_2$ are equivalent in~\HTC\ if and only if the
sentences $\gamma F_1$,~$\gamma F_2$ are equivalent in the classical
first-order theory~$\HTC'$.  This fact suggests that it may be possible to
use theorem provers for classical theories, such as \vampire\ \cite{vor13},
to verify strong equivalence of {\sc mgc} programs.  Extending the proof
assistant {\sc anthem} \cite{fan20,heu20} in this direction is a
topic for future work.

A translation similar to~$\tau^*$ is used in  {\sc anthem} to
verify another kind of equivalence of mini-\gringo\ programs---equivalence
with respect to a user guide
\cite{fan23a,han23}.  We plan to extend work on user guides
to programs with counting.

Finally, we would like to investigate the possibility of extending the
deductive systems described in this paper to counting aggregates with comparison symbols other that~$\geq$ and~$\leq$, and to aggregates other than counting~\cite{fanhan25a}.

\section*{Acknowledgments}

We would like to thank the anonymous reviewers for their comments that have helped us to improve the paper.
This research is partially supported by NSF CAREER award 2338635.
Any opinions, findings, and conclusions or recommendations expressed in this material are those of the authors and do not necessarily reflect the views of the National Science Foundation.


\bibliographystyle{tlplike}
\bibliography{krr,bib,procs}

\begin{thebibliography}{}

\bibitem[Chen et~al., 2005]{che05}
{\sc Chen, Y.}, {\sc Lin, F.}, {\sc and} {\sc Li, L.}
\newblock {SELP} --- a system for studying strong equivalence between logic
  programs.
\newblock In {\em Proceedings of International Conference on Logic Programming
  and Nonmonotonic Reasoning} 2005, pp. 442--446.

\bibitem[Fandinno and Hansen, 2023]{fan23c}
{\sc Fandinno, J.} {\sc and} {\sc Hansen, Z.}
\newblock Recursive aggregates as intensional functions.
\newblock In {\em Proceedings of the Workshops co-located with the 39th
  International Conference on Logic Programming} 2023.

\bibitem[Fandinno and Hansen, 2025]{fanhan25a}
{\sc Fandinno, J.} {\sc and} {\sc Hansen, Z.}
\newblock Recursive aggregates as intensional functions in answer set
  programming: Semantics and strong equivalence.
\newblock In {\sc Walsh, T.}, {\sc Shah, J.}, {\sc and} {\sc Kolter, Z.},
  editors, {\em Proceedings of the Thirty-eighth National Conference on
  Artificial Intelligence (AAAI'25)} 2025,  (to appear). {AAAI} Press.

\bibitem[Fandinno et~al., 2022]{fan22a}
{\sc Fandinno, J.}, {\sc Hansen, Z.}, {\sc and} {\sc Lierler, Y.}
\newblock Axiomatization of aggregates in answer set programming.
\newblock In {\em Proceedings of the AAAI Conference on Artificial
  Intelligence} 2022.

\bibitem[Fandinno et~al., 2023]{fan23a}
{\sc Fandinno, J.}, {\sc Hansen, Z.}, {\sc Lierler, Y.}, {\sc Lifschitz, V.},
  {\sc and} {\sc Temple, N.} 2023.
\newblock External behavior of a logic program and verification of refactoring.
\newblock {\em Theory and Practice of Logic Programming},, 933--947.

\bibitem[Fandinno and Lifschitz, 2023a]{fan23}
{\sc Fandinno, J.} {\sc and} {\sc Lifschitz, V.}
\newblock Omega-completeness of the logic of here-and-there and strong
  equivalence of logic programs.
\newblock In {\em Proceedings of International Conference on Principles of
  Knowledge Representation and Reasoning} 2023a.

\bibitem[Fandinno and Lifschitz, 2023b]{fan23b}
{\sc Fandinno, J.} {\sc and} {\sc Lifschitz, V.}
\newblock On {H}euer’s procedure for verifying strong equivalence.
\newblock In {\em Proceedings of European Conference on Logics in Artificial
  Intelligence} 2023b.

\bibitem[Fandinno et~al., 2020]{fan20}
{\sc Fandinno, J.}, {\sc Lifschitz, V.}, {\sc L\"uhne, P.}, {\sc and} {\sc
  Schaub, T.} 2020.
\newblock Verifying tight logic programs with {A}nthem and {V}ampire.
\newblock {\em Theory and Practice of Logic Programming}, {\it 20}.

\bibitem[Fandinno et~al., 2024]{fan22}
{\sc Fandinno, J.}, {\sc Lifschitz, V.}, {\sc and} {\sc Temple, N.} 2024.
\newblock Locally tight programs.
\newblock {\em Theory and Practice of Logic Programming},, 1--31.

\bibitem[Ferraris et~al., 2011]{fer09}
{\sc Ferraris, P.}, {\sc Lee, J.}, {\sc and} {\sc Lifschitz, V.} 2011.
\newblock Stable models and circumscription.
\newblock {\em Artificial Intelligence}, {\it 175}, 236--263.

\bibitem[Gebser et~al., 2015]{geb15}
{\sc Gebser, M.}, {\sc Harrison, A.}, {\sc Kaminski, R.}, {\sc Lifschitz, V.},
  {\sc and} {\sc Schaub, T.} 2015.
\newblock Abstract {G}ringo.
\newblock {\em Theory and Practice of Logic Programming}, {\it 15}, 449--463.

\bibitem[Gebser et~al., 2019]{gringomanual}
{\sc Gebser, M.}, {\sc Kaminski, R.}, {\sc Kaufmann, B.}, {\sc Lindauer, M.},
  {\sc Ostrowski, M.}, {\sc Romero, J.}, {\sc Schaub, T.}, {\sc and} {\sc
  Thiele, S.} 2019.
\newblock Potassco {U}ser {G}uide.
\newblock Available at \url{https://github.com/potassco/guide/releases/}.

\bibitem[Hansen, 2023]{han23}
{\sc Hansen, Z.}
\newblock Anthem-p2p: Automatically verifying the equivalent external behavior
  of {ASP} programs.
\newblock In {\em Electronic Proceedings in Theoretical Computer Science} 2023,
  volume 385.

\bibitem[Harrison et~al., 2017]{har17}
{\sc Harrison, A.}, {\sc Lifschitz, V.}, {\sc Pearce, D.}, {\sc and} {\sc
  Valverde, A.} 2017.
\newblock Infinitary equilibrium logic and strongly equivalent logic programs.
\newblock {\em Artificial Intelligence}, {\it 246}, 22--33.

\bibitem[Henkin, 1954]{hen54}
{\sc Henkin, L.} 1954.
\newblock A generalization of the concept of $\omega$-consistency.
\newblock {\em The Journal of Symbolic Logic}, {\it 19}, 183--196.

\bibitem[Heuer, 2020]{heu20}
{\sc Heuer, J.} 2020.
\newblock Automated verification of equivalence properties in advanced logic
  programs.
\newblock {Bachelor Thesis}, University of Potsdam.

\bibitem[Hosoi, 1966]{hos66}
{\sc Hosoi, T.} 1966.
\newblock The axiomatization of the intermediate propositional systems~${S}_n$
  of {G}{\"o}del.
\newblock {\em Journal of the Faculty of Science of the University of Tokyo},
  {\it 13}, 183--187.

\bibitem[Kova\'cs and Voronkov, 2013]{vor13}
{\sc Kova\'cs, L.} {\sc and} {\sc Voronkov, A.}
\newblock First-order theorem proving and {V}ampire.
\newblock In {\em International Conference on Computer Aided Verification}
  2013, pp. 1--–35.

\bibitem[Lifschitz, 2021]{lif21a}
{\sc Lifschitz, V.} 2021.
\newblock Here and there with arithmetic.
\newblock {\em Theory and Practice of Logic Programming},.

\bibitem[Lifschitz, 2022]{lif22a}
{\sc Lifschitz, V.} 2022.
\newblock Strong equivalence of logic programs with counting.
\newblock {\em Theory and Practice of Logic Programming}, {\it 22}.

\bibitem[Lifschitz et~al., 2019]{lif19}
{\sc Lifschitz, V.}, {\sc L\"uhne, P.}, {\sc and} {\sc Schaub, T.}
\newblock Verifying strong equivalence of programs in the input language of
  gringo.
\newblock In {\em Proceedings of the 15th International Conference on Logic
  Programming and Non-monotonic Reasoning} 2019.

\bibitem[Lifschitz et~al., 2008]{lif08b}
{\sc Lifschitz, V.}, {\sc Morgenstern, L.}, {\sc and} {\sc Plaisted, D.}
\newblock Knowledge representation and classical logic.
\newblock In {\sc van Harmelen, F.}, {\sc Lifschitz, V.}, {\sc and} {\sc
  Porter, B.}, editors, {\em Handbook of Knowledge Representation} 2008, pp.
  3--88. Elsevier.

\bibitem[Lifschitz et~al., 2001]{lif01}
{\sc Lifschitz, V.}, {\sc Pearce, D.}, {\sc and} {\sc Valverde, A.} 2001.
\newblock Strongly equivalent logic programs.
\newblock {\em ACM Transactions on Computational Logic}, {\it 2}, 4, 526--541.

\bibitem[Lifschitz et~al., 2007]{lif07a}
{\sc Lifschitz, V.}, {\sc Pearce, D.}, {\sc and} {\sc Valverde, A.}
\newblock A characterization of strong equivalence for logic programs with
  variables.
\newblock In {\em Procedings of International Conference on Logic Programming
  and Nonmonotonic Reasoning ({LPNMR})} 2007, pp. 188--200.

\bibitem[Lin, 2002]{lin02a}
{\sc Lin, F.}
\newblock Reducing strong equivalence of logic programs to entailment in
  classical propositional logic.
\newblock In {\em Proceedings of International Conference on Principles of
  Knowledge Representation and Reasoning (KR)} 2002, pp. 170--176.

\bibitem[Pearce, 1997]{pea97}
{\sc Pearce, D.}
\newblock A new logical characterization of stable models and answer sets.
\newblock In {\sc Dix, J.}, {\sc Pereira, L.}, {\sc and} {\sc Przymusinski,
  T.}, editors, {\em Non-Monotonic Extensions of Logic Programming (Lecture
  Notes in Artificial Intelligence 1216)} 1997, pp. 57--70. Springer.

\bibitem[Pearce, 1999]{pea99}
{\sc Pearce, D.}
\newblock From here to there: Stable negation in logic programming.
\newblock In {\sc Gabbay, D.} {\sc and} {\sc Wansing, H.}, editors, {\em What
  Is Negation?} 1999. Kluwer.

\bibitem[Pearce et~al., 2001]{petowo01a}
{\sc Pearce, D.}, {\sc Tompits, H.}, {\sc and} {\sc Woltran, S.}
\newblock Encodings for equilibrium logic and logic programs with nested
  expressions.
\newblock In {\sc Brazdil, P.} {\sc and} {\sc Jorge, A.}, editors, {\em
  Proceedings of the Tenth Portuguese Conference on Artificial Intelligence
  (EPIA'01)} 2001, volume 2258 of {\em Lecture Notes in Computer Science}, pp.
  306--320. Springer-Verlag.

\bibitem[Pearce et~al., 2009]{pea09}
{\sc Pearce, D.}, {\sc Tompits, H.}, {\sc and} {\sc Woltran, S.} 2009.
\newblock Characterising equilibrium logic and nested logic programs:
  Reductions and complexity.
\newblock {\em Theory and Practice of Logic Programming}, {\it 9}, 565--616.

\bibitem[Truszczynski, 2012]{tru12}
{\sc Truszczynski, M.}
\newblock Connecting first-order {ASP} and the logic {FO(ID)} through reducts.
\newblock In {\sc Erdem, E.}, {\sc Lee, J.}, {\sc Lierler, Y.}, {\sc and} {\sc
  Pearce, D.}, editors, {\em Correct Reasoning: Essays on Logic-Based AI in
  Honor of Vladimir Lifschitz} 2012, pp. 543--559. Springer.

\end{thebibliography}

\newpage

\appendix

\section{---\hspace{6pt}Review of the~$\tau^*$ translation}
\label{sec:tau*.review}

The target language of the translation~$\tau^*$ is a first-order language with signature~$\sigma_1$ described in Sections~\ref{ssec:programs.as.formulas} and~\ref{ssec:repag}.

\citet{lif19} defined, for every mini-{\sc gringo} term~$t$, a formula $\val tZ$ that expresses, informally speaking, that~$Z$ is one of the values of~$t$.  
\citet{fan23} further refined this definition by introducing the absolute value and providing an alternative definition for the case of division%
\footnote{Grounder \gringo\ \cite{gringomanual} truncates the quotient toward zero, instead of applying the floor function.
This feature of \gringo\ was not taken into account in earlier publications (\citealt[Section~4.2]{geb15};
      \citealt[Section~6]{lif19};
      \citealt[Section~3]{fan20}).}\!\!\~. \
We reproduce here this last definition.

For every mini-{\sc gringo} term $t$, a formula~$\val tZ$ over the signature~$\sigma_0$ is defined, where $Z$ is a general variable that
does not occur in~$t$. 
The definition is recursive:
\begin{itemize}
\item
if $t$ is a precomputed term or a variable then $\val tZ$ is $Z=t$,
\item
  if~$t$ is $|t_1|$ then $\val tZ$ is
  $\exists I(\val{t_1}I\land Z =|I|)$,
\item
if $t$ is $t_1\;\emph{op}\;t_2$, where \emph{op} is $+$, $-$, or $\times$
then $\val tZ$ is
$$\exists I J  (\val{t_1}I \land \val{t_2}J \land Z=I\;\emph{op}\;J),$$
\item
if $t$ is $t_1\,/\,t_2$ then $\val tZ$ is
$$\ba l
\exists I J K (\val{t_1}I \land \val{t_2}J \\
\hskip 1cm \land\; K\times |J|\leq |I|<(K+\num 1)\times |J|\\
\hskip 1cm \land\; ((I\times J \geq \num 0 \land Z=K)\\
\hskip 1cm \lor\;(I\times J < \num 0 \land Z=-K))),
\ea$$
\item
if $t$ is $t_1\backslash t_2$ then $\val tZ$ is
$$\ba l
\exists I J K (\val{t_1}I \land \val{t_2}J \\
\hskip 1cm \land\; K\times |J|\leq |I|<(K+\num 1)\times |J|\\
\hskip 1cm \land\; ((I\times J \geq \num 0 \land Z=I-K\times J)\\
\hskip 1cm \lor\;(I\times J < \num 0 \land Z=I+K\times J))),
\ea$$
\item
if $t$ is $t_1\,..\,t_2$ then $\val tZ$ is
$$\exists I J K (\val{t_1}I \land \val{t_2}J \land
    I\leq K \leq J \land Z=K),$$
    where $I$, $J$, $K$ are fresh integer variables.
\end{itemize}
If $\bf t$ is a tuple $t_1,\dots,t_n$ of mini-{\sc gringo} terms, and
$\bf Z$ is a tuple $Z_1,\dots,Z_n$ of distinct general variables, then
$\val{\bf t}{\bf Z}$ stands for the conjunction
$\val{t_1}{Z_1} \land \cdots \land \val{t_n}{Z_n}$.

The translation~$\tau^B$, which transforms literals
and comparisons into formulas over the signature~$\sigma_0$,
is defined in that paper as
follows:\footnote{The superscript~$B$ indicates that
this translation is intended for \emph{bodies} of rules.}
\begin{itemize}
\item
  $\tau^B(p({\bf t}))=
  \exists {\bf Z}(\val{\bf t}{\bf Z} \land p({\bf Z}))$;
\item
  $\tau^B(\no\ p({\bf t})) =
  \exists {\bf Z}(\val{\bf t}{\bf Z} \land \neg p({\bf Z}))$;
\item
  $\tau^B(\no\ \no\ p({\bf t})) =
  \exists {\bf Z}(\val{\bf t}{\bf Z} \land \neg\neg p({\bf Z}))$;
\item
$\tau^B(t_1\prec t_2) =
\exists Z_1 Z_2 (\val{t_1}{Z_1} \land \val{t_2}{Z_2} \land
Z_1\prec Z_2)$;
\end{itemize}
Here $Z_1$,~$Z_2$, and members of the tuple {\bf Z} are fresh general
variables.

The result of applying $\tau^*$ to a mini-\gringo\ rule
$H \ar B_1\land\cdots\land B_n$
can be defined as the universal closure of the formula
\beq\ba{ll}
B^*_1\land\cdots\land B^*_n\land\val{\bf t}{\bf Z}\to p({\bf Z})
&\hbox{ if }H\hbox{ is }p({\bf t}),\\
B^*_1\land\cdots\land B^*_n\land\val{\bf t}{\bf Z}
  \to p({\bf Z}) \lor \neg p({\bf Z})
&\hbox{ if }H\hbox{ is }p\{({\bf t})\},\\
\neg(B^*_1\land\cdots\land B^*_n)
&\hbox{ if $H$ is empty},
\ea\eeq{bstar}
where~$\bf Z$ is a tuple of
fresh general variables, and~$B^*_i$ stands for $\tau^B(B_i)$ if~$B_i$ does not include and aggregate element, and for
$$
\exists C\left(\val t C
\land \atl^{\bfx;\bfv}_{\exists \bfw\tau^B({\bf L})}(\bfv,C)\right)
$$
if $B_i$ is $\co\{\bfx:{\bf L}\} \geq t$, and as
$$
\exists C\left(\val t C
\land \atm^{\bfx;\bfv}_{\exists \bfw\tau^B({\bf L})}(\bfv,C)\right)
$$
if $B_i$ is $\co\{\bfx:{\bf L}\} \leq t$,
where~$C$ is a fresh general variable.
Here, $\bfv$ is the list of global variables that occur in $\bf L$, and~$\bfw$ is the list of local variables that occur in~$\bf L$ but are not included in~$\bfx$.

\section{---\hspace{6pt}Proofs}
\label{sec:proofs}

In this section, we provide the proofs of the results stated in the main body of the paper.

\subsection{Proofs of Propositions~\ref{prop:simp}--\ref{prop:exactly1}}
\label{sec:proofs-props}


\subsubsection{A few more theorems of \HTC}


The symbols $\leq$ and $<$ between tuples refer to lexicographic order, as in
Section~\ref{ssec:htc}.

\medskip\noindent\emph{Claim:}  If~$\bfx$, $\bfw$ are disjoint tuples of
distinct general variables of the same length, and the variables~$\bfw$
are not free in~$F$, then the formula
\beq
\forall \bfx\bfw\bfv N
(\st^{\bfx;\bfv}_F(\bfx,\bfv,N) \land \bfw\leq \bfx
\land F^\bfx_{\bfw}
\to \st^{\bfx;\bfv}_F(\bfw,\bfv,N))
\eeq{mon1}
is provable in $\HTC$.

\medskip\noindent\emph{Proof.}
By~$D_0$, if $N\leq\num 0$ then
$\st^{\bfx;\bfv}_F(\bfw,\bfv,N)$; also, if $N=\num 1$ then
$$F^\bfx_{\bfw} \to \st^{\bfx;\bfv}_F(\bfw,\bfv,N).$$
It remains to prove
\beq
N> \num 0\land \st^{\bfx;\bfv}_F(\bfx,\bfv,N+\num 1) \land \bfw\leq \bfx
\land F^\bfx_{\bfw}
\to \st^{\bfx;\bfv}_F(\bfw,\bfv,N+\num 1).
\eeq{g0}
(This assertion is justified by the formula
$$\forall N
(N\leq\num 0 \lor N=\num 1 \lor \exists M(N=M+\num 1 \land M>\num 0)),
$$
which belongs to \Std.)
Assume the antecedent of~(\ref{g0}).  From
the first two conjunctive terms, by~$D_0$, we can conclude that
there exists $\bfu$ such that
$$\bfx<\bfu\land \st^{\bfx;\bfv}_F(\bfu,\bfv,N).$$
In combination with the last two conjunctive terms, we get
$$F^\bfx_{\bfw} \land\bfw<\bfu\land \st^{\bfx;\bfv}_F(\bfu,\bfv,N).$$
Now the consequent of~(\ref{g0}) follows by~$D_0$.
\qed

\medskip\noindent\emph{Claim:} The formula
\beq
\forall \bfx\bfv N
(\st^{\bfx;\bfv}_F(\bfx,\bfv,N+\num 1)\to\st^{\bfx;\bfv}_F(\bfx,\bfv,N))
\eeq{mon2}
is provable in $\HTC$. 

\medskip\noindent\emph{Proof.}
  If $N\leq \num 0$ then the consequent of~(\ref{mon2}) follows
  from~$D_0$.
If $N>\num 0$ then, by~$D_0$, the antecedent of~(\ref{mon2})
implies
$$
F\land\exists \bfu(\bfx<\bfu\land \st^{\bfx;\bfv}_F(\bfu,\bfv,N)).
$$
Thus there exists~$\bfu$ such that
$\st^{\bfx;\bfv}_F(\bfu,\bfv,N)\land \bfx<\bfu\land F$.
The consequent of~(\ref{mon2}) follows by~(\ref{mon1}).
\qed

\medskip\noindent\emph{Claim:} The formula
\beq
\forall \bfx\bfv MN
(\st^{\bfx;\bfv}_F(\bfx,\bfv,M)\land M\geq N
\to\st^{\bfx;\bfv}_F(\bfx,\bfv,N))
\eeq{mon3}
is provable in $\HTC$.
 
\medskip\noindent\emph{Proof.}
Since $M\geq N$ is equivalent to $\exists K(K\geq \num 0\land M=N+K)$,
formula~(\ref{mon3}) can be rewritten as
$$\forall K(K\geq \num 0\to\forall \bfx\bfv N
(\st^{\bfx;\bfv}_F(\bfx,\bfv,N+K)
\to\st^{\bfx;\bfv}_F(\bfx,\bfv,N))).
$$
The proof is by induction~\Ind.  The basis
$$\forall \bfx\bfv N
(\st^{\bfx;\bfv}_F(\bfx,\bfv,N+\num 0)
\to\st^{\bfx;\bfv}_F(\bfx,\bfv,N))
$$
follows from the \Std\  axiom $\forall N(N+\num 0=N)$. The induction
hypothesis is
$$
K\geq \num 0\land\forall \bfx\bfv N
(\st^{\bfx;\bfv}_F(\bfx,\bfv,N+K)
\to\st^{\bfx;\bfv}_F(\bfx,\bfv,N));
$$
we need to derive
$$\forall \bfx\bfv N
(\st^{\bfx;\bfv}_F(\bfx,\bfv,N+K+\num 1)
\to\st^{\bfx;\bfv}_F(\bfx,\bfv,N)).
$$
Assume $\st^{\bfx;\bfv}_F(\bfx,\bfv,N+K+\num 1)$.  Then,
by~(\ref{mon2}), $\st^{\bfx;\bfv}_F(\bfx,\bfv,N+K)$, and
$\st^{\bfx;\bfv}_F(\bfx,\bfv,N)$ follows by the induction hypothesis.
\qed


\subsubsection{Proof of Proposition~\ref{prop:simp}}

\noindent\emph{Proof of (\ref{htc:atln}).}
   By~$D_1$, the left-hand side of~(\ref{htc:atln}) is equivalent to
   $$\exists \bfx M(\st^{\bfx;\bfv}_F(\bfx,\bfv,M)\land M\geq N).$$
   From~(\ref{mon3}) we can conclude that
   $$\exists M (\st^{\bfx;\bfv}_F(\bfx,\bfv,M)\land M\geq N)$$
   is equivalent to $\st^{\bfx;\bfv}_F(\bfx,\bfv,N)$.
\qed

\medskip\noindent\emph{Claim:} The formula
\beq
\forall \bfv N(\atm^{\bfx;\bfv}_F(\bfv,N)
\lrar
\neg\exists\bfx\,\st^{\bfx;\bfv}_F(\bfx,\bfv,N+\num 1))
\eeq{atmn}
is provable in $\HTC$.

\medskip\noindent\emph{Proof.}
  By~$D_1$, the left-hand side of~(\ref{atmn}) is equivalent to
  $$\forall \bfx M(\st^{\bfx;\bfv}_F(\bfx,\bfv,M)\to M\leq N)$$
  and consequently to
$$\neg\exists \bfx M(\st^{\bfx;\bfv}_F(\bfx,\bfv,M)\land M>N).$$
The formula
$$\exists M (\st^{\bfx;\bfv}_F(\bfx,\bfv,M)\land M>N)$$
is equivalent to
$$\exists M (\st^{\bfx;\bfv}_F(\bfx,\bfv,M)\land M\geq N+\num 1)$$
and, by~(\ref{mon3}), to
$\st^{\bfx;\bfv}_F(\bfx,\bfv,N+\num 1)$.
\qed

\medskip
Formula~(\ref{htc:atmn1}) follows from~(\ref{htc:atln}) and~(\ref{atmn}).

\subsubsection{Proof of Proposition~\ref{prop:other}}

Formulas~(\ref{atlnonp}) and~(\ref{atl1}) follow from~(\ref{htc:atln})
and~$D_0$.

\medskip\noindent\emph{Claim:} Formula (\ref{mon4a}) is provable in $\HTC$.

\begin{proof}
  Assume $\forall \bfx(F\to G)$.  If $N\leq 0$ then
  $\st^{\bfx;\bfv}_G(\bfx,\bfv,N)$ by~$D_0$.  For positive~$N$,
  the proof is by induction~(\ref{ind1}).
The basis
$$
\st^{\bfx;\bfv}_F(\bfx,\bfv,\num 1) \to \st^{\bfx;\bfv}_G(\bfx,\bfv,\num 1)
$$
is equivalent to $F\to G$ by~$D_0$.  Take a positive~$N$ and assume
$$\forall\bfx\bfv
(\st^{\bfx;\bfv}_F(\bfx,\bfv,N) \to \st^{\bfx;\bfv}_G(\bfx,\bfv,N));
$$
we want to show that
\beq
\forall\bfx\bfv
(\st^{\bfx;\bfv}_F(\bfx,\bfv,N+\num 1) \to
\st^{\bfx;\bfv}_G(\bfx,\bfv,N+\num 1)).
\eeq{g5}
Assume $\st^{\bfx;\bfv}_F(\bfx,\bfv,N+\num 1)$.  By~$D_0$,
$$F\land\exists \bfu(\bfx<\bfu\land \st^{\bfx;\bfv}_F(\bfu,\bfv,N)).$$
Then~$G$ and, by the induction hypothesis,
$$\exists \bfu(\bfx<\bfu\land \st^{\bfx;\bfv}_G(\bfu,\bfv,N)).$$
The consequent of~(\ref{g5}) follows by~$D_0$.
\end{proof}

\medskip\noindent\emph{Claim:} Formula (\ref{stpos}) is provable in $\HTC$.

\begin{proof}
Since
$$
\forall N( N>\num 0 \to N=\num 1 \lor \exists M(N=M+\num 1 \land M>\num 0))
$$
(Group \Std\ axiom), it is sufficient to show that
$$
\forall \bfz(\st^{\bfx;\bfv}_F(\bfz,\bfv,\num 1) \to F^\bfx_\bfz)
$$
and
$$\forall \bfz\bfv N (\st^{\bfx;\bfv}_F(\bfz,\bfv,N+\num 1)\land N>\num 0
\to F^\bfx_\bfz).
$$
Both formulas follow from axioms~$D_0$.
\end{proof}

\subsubsection{Proof of Proposition~\ref{prop:exactly1}}

\noindent\emph{Proof of (\ref{exan}).}
Assume $\exa^{\bfx;\bfv}_F(\bfx,Y)$.
Then $\atm^{\bfx;\bfv}_F(\bfx,Y)$.
From~$D_0$ and~$D_1$ we can conclude that
$$\forall \bfv Y(\atm^{\bfx;\bfv}_F(\bfv,Y)\to \num 0\leq Y).$$
Hence $\num 0 \leq Y$.  On the other hand, $\atl^{\bfx;\bfv}_F(\bfx,Y)$,
so that $\exists N(N\geq Y)$ by~$D_1$.
Thus $\exists N(\num 0\leq Y\leq N)$.  It remains to observe that the formula
$$
\forall Y(\exists N(\num 0\leq Y\leq N)\leq \exists N(Y=N\land N\geq\num 0))
$$
is a group \Std\ axiom, because the set of numerals is contiguous.
\qed

\medskip\noindent\emph{Claim:} The formula
\beq
\forall \bfx(F\to G)\to
\forall\bfx Y(\atl^{\bfx;\bfv}_F(\bfx,Y)\to\atl^{\bfx;\bfv}_G(\bfx,Y))
\eeq{mon4b}
is provable in $\HTC$.

\begin{proof}
  Assume $\forall \bfx(F\to G)$ and $\atl^{\bfx;\bfv}_F(\bfx,Y)$.
  By~$D_1$,
  $$\exists \bfx N(\st^{\bfx;\bfv}_F(\bfx,\bfv,N)\land N\geq Y).$$
  Then
  ${\exists \bfx N(\st^{\bfx;\bfv}_G(\bfx,\bfv,N)\land N\geq Y)}$
  by~(\ref{mon4a}),
  and $\atl^{\bfx;\bfv}_G(\bfx,Y)$ follows by~$D_1$.%
\end{proof}

\medskip\noindent\emph{Claim:} The formula
\beq
\forall \bfx(F\to G)\to
\forall\bfx Y(\atm^{\bfx;\bfv}_G(\bfx,Y)\to\atm^{\bfx;\bfv}_F(\bfx,Y))
\eeq{mon4c}
is provable in $\HTC$.

\begin{proof}
  Assume $\forall \bfx(F\to G)$ and $\atm^{\bfx;\bfv}_G(\bfx,Y)$.
  By~$D_1$,
  $$\forall \bfx N(\st^{\bfx;\bfv}_G(\bfx,\bfv,N)\to N\leq Y)).$$
  Then
  $\forall \bfx N(\st^{\bfx;\bfv}_F(\bfx,\bfv,N)\to N\leq Y))$
  by~(\ref{mon4a}),
  and $\atm^{\bfx;\bfv}_F(\bfx,Y)$ follows by~$D_1$.
\end{proof}

Formula~(\ref{mon4d}) follows from~(\ref{mon4b}) and~(\ref{mon4c}).

\subsection{Proof of Theorem~\ref{thm:HTC=>Defs}}
\label{sec:proof:thm:thm:HTC=>Defs}

\subsubsection{A few more theorems of \HTC, continued}
\label{ssec:continued}

\medskip\noindent\emph{Claim:} Let~$F$ be a formula over~$\sigma_0$,
let~$\bfu$, $\bfw$ are disjoint tuples of
distinct general variables of the same length such that  the variables~$\bfw$
are not free in~$F$, and let~$n$ is a positive integer.  The formula
\beq
\exists_{\geq \num{n+1}}\bfu\,F \lrar
\exists\bfu(F\land\exists_{\geq \num n}\bfw(\bfu<\bfw \land F^\bfu_\bfw))
\eeq{htap:explus1}
is provable in~$\HTC$.

\begin{proof} \emph{Left-to-right:} take $\bfu_1,\dots,\bfu_{n+1}$ such that
\beq
\bigwedge_{i=1}^{n+1} F^\bfu_{\,\bfu_i}
  \land \bigwedge_{i<j}\neg(\bfu_i=\bfu_j).
\eeq{a1}
We reason by cases, using the  axiom
$$\bigvee_{k=1}^{n+1}\bigwedge_{i=1}^{n+1}\bfu_k\leq\bfu_i$$
from \Std\ (``for some~$k$,
$\bfu_k$ is lexicographically first among $\bfu_1,\dots,\bfu_{n+1}$'').
Consider the $k$-th case
$\bigwedge_{i=1}^{n+1}\bfu_k\leq\bfu_i$.  From~(\ref{a1}),
$$\bfu_k<\bfu_i\hbox{ and }F^\bfu_{\bfu_i} \qquad(i=1,\dots,n+1;\ i\neq k)$$
and
$$\neg(\bfu_i=\bfu_j)\qquad (1\leq i<j\leq n+1;\ i,j\neq k).$$
Hence
$\exists_{\geq\num n} \bfw(\bfu_k<\bfw\land F^\bfu_{\,\bfw})$.
Since $F^\bfu_{\,\bfu_k}$, it follows that
$$\exists\bfu(F\land\exists_{\geq \num n}\bfw(\bfu<\bfw \land F^\bfu_\bfw)).$$
\emph{Right-to-left:} assume
$$
\exists\bfu\left(F\land\exists\bfw_1\cdots\bfw_n\left(\,\bigwedge_{i=1}^n
  (\bfu<\bfw_i\land F^\bfu_{\,\bfw_i}) \land \bigwedge_{i<j}\neg(\bfw_i=\bfw_j) \right)\right).
$$
This formula is equivalent to
$$
\exists\bfu\bfw_1\cdots\bfw_n\left(F\land\bigwedge_{i=1}^n
  (\bfu<\bfw_i\land F^\bfu_{\,\bfw_i}) \land \bigwedge_{i<j}\neg(\bfw_i=\bfw_j) \right)
$$
and can be rewritten as
$$
\exists\bfw_0\bfw_1\cdots\bfw_n\left(F^\bfu_{\,\bfw_0}\land\bigwedge_{i=1}^n
  (\bfw_0<\bfw_i\land F^\bfu_{\,\bfw_i}) \land \bigwedge_{1\leq i<j\leq n}\neg(\bfw_i=\bfw_j) \right).
$$
It implies
$$
\exists\bfw_0\bfw_1\cdots\bfw_n\left(\bigwedge_{i=0}^n
  F^\bfu_{\,\bfw_i} \land \bigwedge_{0\leq i<j\leq n}\neg(\bfw_i=\bfw_j) \right),
$$
which is equivalent to $\exists_{\geq \num{n+1}}\bfu\,F$.
\end{proof}

\medskip\noindent\noindent\emph{Claim:}
  If~$\bfx$, $\bfu$ are disjoint tuples of
distinct general variables of the same length, the variables~$\bfu$
are not free in~$F$, and~$n>0$, then the sentence
\beq
\forall\bfx\bfv(
\exists \bfu(\bfx<\bfu\land \st^{\bfx;\bfv}_F(\bfu,\bfv,\num n))
\lrar
\exists_{\geq \num n}\bfu(\bfx<\bfu \land F^\bfx_\bfu))
\eeq{htcn:tri}
is provable in~$\HTC$.

\begin{proof}
  By induction on~$n$.  If~$n=1$ then
  $\st^{\bfx;\bfv}_F(\bfu,\bfv,\num n)$ in the left-hand side
  of~(\ref{htcn:tri}) is equivalent to $F^\bfx_\bfu$
  by~$D_0$, and
  the right-hand side of~(\ref{htcn:tri}) is equivalent to
$\exists \bfu(\bfx<\bfu\land F^\bfx_\bfu)$.  Induction step: we will show that
the formula
\beq
\forall\bfx\bfv
(\exists \bfu(\bfx<\bfu\land \st^{\bfx;\bfv}_F(\bfu,\bfv,\num {n+1}))
\lrar
\exists_{\geq \num{n+1}}\bfu(\bfx<\bfu \land F^\bfx_\bfu))
\eeq{g4}
is derivable from~(\ref{htcn:tri}) in~$\HTC$.
By~(\ref{htap:explus1}), the right-hand side of~(\ref{g4}) is equivalent to
\beq
\exists\bfu(\bfx<\bfu \land F^\bfx_\bfu\land
\exists_{\geq \num n}\bfw(\bfu<\bfw \land \bfx<\bfw\land F^\bfx_\bfw)).
\eeq{s1}
In the presence of $\bfx<\bfu$, the subformula $\bfu<\bfw \land \bfx<\bfw$
is equivalent to $\bfu<\bfw$.  Hence~(\ref{s1}) is equivalent to
\beq
\exists\bfu(\bfx<\bfu \land F^\bfx_\bfu\land
\exists_{\geq \num n}\bfw(\bfu<\bfw \land F^\bfx_\bfw)).
\eeq{s2}
On the other hand,~(\ref{htcn:tri}) can be rewritten as
$$
\forall\bfx\bfv(
\exists \bfw(\bfx<\bfw\land \st^{\bfx;\bfv}_F(\bfw,\bfv,\num n))
\lrar
\exists_{\geq \num n}\bfw(\bfx<\bfw \land F^\bfx_\bfw)),
$$
and it implies
$$
\exists \bfw(\bfu<\bfw\land \st^{\bfx;\bfv}_F(\bfw,\bfv,\num n))
\lrar
\exists_{\geq \num n}\bfw(\bfu<\bfw \land F^\bfx_\bfw).
$$
It follows that~(\ref{s2}) is equivalent to
$$
\exists\bfu(\bfx<\bfu \land F^\bfx_\bfu\land
\exists \bfw(\bfu<\bfw\land \st^{\bfx;\bfv}_F(\bfw,\bfv,\num n))).
$$
By~$D_0$, this formula is equivalent to the left-hand side
of~(\ref{g4}).
\end{proof}

\medskip\noindent\noindent\emph{Claim:}
If~$\bfu$ is a tuple of distinct general variables of the same length
as~$\bfx$ such that its members do not belong to~$\bfx$ and are not free
in~$F$ then the sentence
\begin{gather}
\forall\bfv(  \exists \bfu 
    \st^{\bfx,\bfv}_F(\bfu,\bfv, \num n)
    \lrar
    \exists_{\geq \num n} \bfu \, F^\bfx_\bfu)
  \label{htcn:st<->defs}
\end{gather}
is provable in~$\HTC$.

\begin{proof}
The following is one of the axioms of~\Std:
$$
  \num n \leq \num 0 \,\vee\, \num n = \num1 \,\vee\,  \num n > \num 1.
$$
\emph{Case~1:} $\num n \leq \num 0$.
The right-hand side of~\eqref{htcn:st<->defs} is~$\top$ and its left-hand side follows from~$D_0$.
\emph{Case~2:} $\num n = \num 1$.
The right-hand side of~\eqref{htcn:st<->defs} is~$\exists \bfx_1 F^\bfx_{\bfx_1}$ and~\eqref{htcn:st<->defs}
is immediate from~$D_0$.
\emph{Case~3:} $\num n> \num 1$.
The left-hand side of~\eqref{htcn:st<->defs} is equivalent to
$$
\exists\bfx(F\land
\exists \bfu(\bfx<\bfu\land \st^{\bfx;\bfv}_F(\bfu,\bfv,\num{n-1})))
$$
by~$D_0$, and the right-hand side is equivalent to
$$\exists\bfx(F\land\exists_{\geq \num{n-1}}\bfu(\bfx<\bfu \land F^\bfx_\bfu))$$
by~(\ref{htap:explus1}).  These two formulas are equivalent to each other
by~(\ref{htcn:tri}).
\end{proof}

\noindent\emph{Claim:} For any formula~$F$ over~$\sigma_0$ and any
nonnegative integer~$n$, the formula
\beq
\exists_{\leq\num n}\bfx\,F\lrar \neg\exists_{\geq\num{n+1}}\bfx\,F.
\eeq{htap:exex}
is provable in~$\HTC$.

\begin{proof}
$$\ba{rcl}
\neg\exists_{\geq\num{n+1}}\bfx\,F
&=&\neg\exists\bfx_1\cdots\bfx_{n+1}\left(\bigwedge_{i=1}^{n+1}
  F^\bfx_{\,\bfx_i}\land \,\bigwedge_{i<j}\neg(\bfx_i=\bfx_j)\right)\\
&\lrar&\forall\bfx_1\cdots\bfx_{n+1}\neg\left(\bigwedge_{i=1}^{n+1}
  F^\bfx_{\,\bfx_i}\land \,\bigwedge_{i<j}\neg(\bfx_i=\bfx_j)\right)\\
&\lrar&\forall\bfx_1\cdots\bfx_{n+1}\left(\bigwedge_{i=1}^{n+1}
  F^\bfx_{\,\bfx_i}\to\neg\,\bigwedge_{i<j}\neg(\bfx_i=\bfx_j)\right)\\
&\lrar&\forall\bfx_1\cdots\bfx_{n+1}\left(\bigwedge_{i=1}^{n+1}
  F^\bfx_{\,\bfx_i}\to \,\bigvee_{i<j}\bfx_i=\bfx_j\right)\\
&=&\exists_{\leq\num n}\bfx\,F.
\ea$$
\end{proof}

\subsubsection{Proof of Theorem~\ref{thm:HTC=>Defs}, Part 1}

We will show now that for every precomputed term~$r$,
sentence~(\ref{def1}) is provable in \HTC.

\medskip\noindent\emph{Case~1:} $r\leq\num 0$; (\ref{def1}) is
  \beq
  \forall \bfv \left(\atl^{\bfx;\bfv}_F(\bfv,r)\lrar \top\right).
  \eeq{g1}
  From~$D_1$,
  $$ \forall \bfv \bfx
  (\st^{\bfx;\bfv}_F(\bfx,\bfv,\num 0)
  \land \num 0\geq r \to \atl^{\bfx;\bfv}_F(\bfv,r)).
  $$
  The conjunctive term $\st^{\bfx;\bfv}_F(\bfx,\bfv,\num 0)$ follows
  from~$D_0$, and second conjunctive term $\num 0\geq r$
  is an axiom of~\Std.
  Consequently $\atl^{\bfx;\bfv}_F(\bfv,r)$, which is equivalent to~(\ref{g1}).
  
  \medskip\noindent\emph{Case~2:} for all~$n$, $r>\num n$; (\ref{def1}) is
  \beq
  \forall \bfv \left(\atl^{\bfx;\bfv}_F(\bfv,r)\lrar \bot\right).
  \eeq{g2}
  Assume $\atl^{\bfx;\bfv}_F(\bfv,r)$. From~$D_1$,
  $$
  \forall \bfv (\atl^{\bfx;\bfv}_F(\bfv,r)
  \to
  \exists \bfx N(\st^{\bfx;\bfv}_F(\bfx,\bfv,N)\land N\geq r)).
  $$
  Consequently $\exists N(N\geq r)$, which contradicts the~\Std\ axiom
  $\forall N\neg(N\geq r)$.
  
  \medskip\noindent\emph{Case~3:} for some~$n$, $\num 0 < r \leq\num n$.
  Since the set of numerals is contiguous,~$r$ is a
  numeral~$\num m$ ($m>0$).  By~(\ref{htc:atln}),  formula (\ref{def1})
  can be rewritten as
$$
  \forall \bfv \left(\exists\bfx\,\st^{\bfx;\bfv}_F(\bfx,\bfv,\num m)\lrar
    \exists_{\geq \num m} \bfx F\right),
$$
  which is the universal closure of~\eqref{htcn:st<->defs}.

\subsubsection{Proof of Theorem~\ref{thm:HTC=>Defs}, Part 2}

We will show now that for every precomputed term~$r$,
sentence~(\ref{def2}) is provable in \HTC.
  
  \medskip\noindent\emph{Case~1:} $r<\num 0$; (\ref{def2}) is
  \beq
  \forall \bfv \left(\atm^{\bfx;\bfv}_F(\bfv,r)\lrar \bot\right).
  \eeq{g1a}
  From~$D_1$,
  $$
  \atm^{\bfx;\bfv}_F(\bfv,r)
  \to
  \forall \bfx (\st^{\bfx;\bfv}_F(\bfx,\bfv,\num 0)\to \num 0\leq r).
  $$
  By~$D_0$, $\st^{\bfx;\bfv}_F(\bfx,\bfv,\num 0)$, so that
  $$
  \atm^{\bfx;\bfv}_F(\bfv,r)
  \to \num 0\leq r.
  $$
  From the \Std\ axiom $\neg(\num 0\leq r)$ we conclude that
  $\neg\atm^{\bfx;\bfv}_F(\bfv,r)$, which is equivalent to~(\ref{g1a}).
  
  \medskip\noindent\emph{Case~2:} for all~$n$, $r>\num n$; (\ref{def2}) is
  \beq
  \forall \bfv \left(\atm^{\bfx;\bfv}_F(\bfv,r)\lrar \top\right).
  \eeq{g2a}
  From~$D_1$,
  $$
  \forall \bfv (\forall \bfx N(\st^{\bfx;\bfv}_F(\bfx,\bfv,N)\to N\leq r)
  \to
  \atm^{\bfx;\bfv}_F(\bfv,r)).
  $$
  The antecedent
  $\forall \bfx N(\st^{\bfx;\bfv}_F(\bfx,\bfv,N)\to N\leq r)$
  follows from the \Std\ axiom $\forall N(N\leq r)$.
  Hence $\atm^{\bfx;\bfv}_F(\bfv,r)$, which is equivalent
  to~(\ref{g2a}).
  
  \medskip\noindent\emph{Case~3:} for some~$n$, $\num 0 \leq r \leq\num n$.
  Since the set of numerals is contiguous~$r$ is a
  numeral~$\num m$ ($m\geq 0$), so that~(\ref{def2}) is
  $$\forall \bfv \left(\atm^{\bfx;\bfv}_F(\bfv,\num m)\lrar
    \exists_{\leq \num m}\bfx F\right).$$
  By (\ref{htc:atmn1}) and (\ref{htap:exex}), this formula is equivalent to
  $$\forall \bfv \left(\neg \atl^{\bfx;\bfv}_F(\bfv,\num {m+1})\lrar
    \neg\exists_{\geq \num {m+1}}\bfx F\right),$$
  which follows from (\ref{def1}).

\subsection{Review: HT\nobreakdash-interpretations}\label{sec:review}

A \emph{propositional HT-interpretation\/}
is a pair
$\tuple{\X,\Y}$, where~$\Y$ is a set of propositional atoms, and~$\X$
is a subset
of~$\Y$.  In terms of Kripke models with two worlds,~$\X$ is
the here-world, and~$\Y$ is the there-world.
The recursive definition of the satisfaction
relation between HT\nobreakdash-interpretations and propositional formulas can be
extended to infinitary propositional formulas \cite[Definition~2]{tru12}.
Equilibrium models of a set of formulas \cite{pea97,pea99} are defined as
its HT-models satisfying a certain minimality
condition.  A set~$\X$ of atoms is a stable model of a set of
infinitary propositional formulas iff
$\tuple{\X,\X}$ is an equlibrium model of that set~\cite[Theorem~3]{tru12}.
Thus stable models of
an {\sc mgc} program~$\Pi$ can be characterized as sets~$\X$ such that
$\tuple{\X,\X}$ is an equilibrium model of~$\tau\Pi$.

The definition of a many-sorted \HTi\ \cite[Appendices~A and~B]{fan22}
extends this construction to many-sorted first-order languages.
In classical semantics of first-order formulas,
the recursive definition of the satisfaction
relation between an interpretation~$I$ of a signature~$\sigma$ and a
sentence~$F$ over~$\sigma$ involves
extending~$\sigma$ by new object constants~$d^*$, which represent elements~$d$
of the domain of~$I$
\cite[Section~1.2.2]{lif08b}.  The extended signature is denoted by~$\sigma^I$.
In the definition of a many-sorted \HTi, the predicate symbols
of~$\sigma$ are assumed to be  partitioned into extensional and intensional.
For any interpretation~$I$ of such a signature~$\sigma$,
$I^\downarrow$ stands for the set of all atomic sentences over~$\sigma^I$
that have the form $p(\boldd^*)$, where~$p$ is intensional, $\boldd$ is
a tuple of elements of appropriate domains of~$\sigma$, and
$I\models p(\boldd^*)$.
An \emph{\HTi}  of a many-sorted signature~$\sigma$
is a pair $\tuple{\HH,I}$, where~$I$ is an
interpretation of~$\sigma$, and $\HH$ is a subset of~$I^\downarrow$.  In
terms of Kripke models,~$I$ is the there-world, and~$\HH$ describes
the intensional predicates in the here-world.

The satisfaction
relation between \HTis\ and sentences is denoted by~$\modelsht$, to
distinguish it from classical satisfaction. According to the persistence
property of this relation, $\tuple{\HH,I}\modelsht F$ implies
$I\models F$ for every sentence~$F$ over~$\sigma$
\cite[Proposition~3a]{fan22}.

The soundness and completeness theorem for the many-sorted logic of
here\nobreakdash-and\nobreakdash-there \cite[Theorem~2]{fan23} can be stated as follows:

\medskip\noindent\emph{For
  any set~$\Gamma$ of sentences over a many-sorted signature~$\sigma$ and any
sentence~$F$ over~$\sigma$, the following two conditions are equivalent:
\begin{itemize}
\item[(i)]
  every \HTi\ of~$\sigma$ satisfying~$\Gamma$ satisfies~$F$;
\item[(ii)]
  $F$ can be derived from~$\Gamma$ in first-order intuitionistic logic
  extended by
\begin{itemize}
\item
  axiom schemas~(\ref{hosoi}) and~(\ref{sqht}) for all formulas~$F$ and~$G$
  over~$\sigma$;
\item
  the axioms
  \beq
  X=Y\lor X\neq Y,
  \eeq{em1}
  where~$X$ and~$Y$ are variables of the same
  sort;
  \item
    the axioms
    \beq
    p(\bfx)\lor \neg p(\bfx),
    \eeq{em2}
    where~$p$ is an extensional
    predicate symbol, and~$\bfx$ is a tuple of distinct variables of
    appropriate sorts.
  \end{itemize}
\end{itemize}}
\noindent
The deductive system described in clause~(ii) is denoted by $\SQHT^=$
\cite[Section~5.1]{fan23}.

\subsection{Proof of Theorem~\ref{lem:HTC<=>FOC}} \label{sec:plem}

In the special case of the
signature~$\sigma_2$ (Section~\ref{ssec:htc}), we designate
comparison symbols~(\ref{comp}) as extensional, and all
other predicate symbols (that is, $p/n$, $\atl^{\bfx;\bfv}_F$,
$\atm^{\bfx;\bfv}_F$ and $\st^{\bfx;\bfv}_F$) as intensional.  This
convention allows us to generalize some of the definitions from
Section~\ref{sec:cl} to arbitrary many-sorted signatures
with predicate constants classified into extensional and intensional.
For any such signature~$\sigma$, by~$\sigma'$ we denote the signature
obtained from it by adding, for every intensional constant~$p$, a new
predicate constant~$p'$ of the same arity.  The formula
\beq
\forall {\bf X}(p({\bf X})\to p'({\bf X})),
\eeq{afla}
where~$p$ is intensional and $\bf X$ is a tuple
of distinct variables of appropriate sorts, is denoted by $\A(p)$, and~$\A$
stands for the set of these formulas for all intensional predicate
constants~$p$.
 For any formula~$F$ over the signature~$\sigma$, by~$F'$ we denote
 the formula over~$\sigma'$
 obtained from~$F$ by replacing every occurrence of every intensional
 predicate symbol~$p$ by $p'$.  
 Then the transformation~$\gamma$ is defined as in Section~\ref{sec:cl}.

For any \HTi\ $\HI$ of~$\sigma$,
$I^\HH$ stands for the interpretation of~$\sigma'$ that has the same domains
as~$I$, interprets function constants and extensional predicate
constants of~$\sigma$ in the same way as~$I$, and interprets the other
predicate constants~$p$,~$p'$ as follows:
\beq
\ba l
I^\HH\models p({\bf d}^*)\hbox{ iff }p({\bf d}^*)\in\HH;\\
I^\HH\models p'({\bf d}^*)\hbox{ iff }I\models p({\bf d}^*).
\ea
\eeq{defexp}
From the second line of~(\ref{defexp}) we can derive a more general
assertion:
\beq
I^\HH\models p'(\boldt)\hbox{ iff }I\models p(\boldt)
\eeq{defexp2}
for every tuple $\boldt$ of ground terms over the signature~$\sigma^I$.
Indeed, the value assigned to~$\boldt$ by the
interpretation~$I^\HH$ (symbolically,~$\boldt^{I^\HH}$)
is the same as the value~$\boldt^I$, assigned to~$\boldt$ by~$I$,
because~$I^\HH$ and~$I$ interpret all symbols
occurring in $\boldt$ in the same way.
In the second line of~(\ref{defexp}), take~$\boldd$ to be the common value of
$\boldt^{I^\HH}$ and $\boldt^I$.  Then
$$I^\HH\models p'\left(\left(\boldt^{I^\HH}\right)^*\right)
\hbox{ iff }I\models p\left(\left({\boldt^I}\right)^*\right),$$
which is equivalent to~(\ref{defexp2}).

\begin{lemma}\label{lem:11}
  An interpretation of the signature~$\sigma'$ satisfies~$\A$ iff it can be
  represented in the form $I^\HH$ for some \HTi~$\HI$.
\end{lemma}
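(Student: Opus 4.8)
The plan is to prove the two directions of the equivalence separately, the guiding observation being that the defining constraint $\HH\subseteq I^\downarrow$ on an \HTi\ says exactly what the axioms $\A$ say, namely that each intensional $p$ implies its primed copy $p'$.

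First I would verify that every interpretation of the form $I^\HH$ satisfies $\A$. Fix an intensional predicate constant~$p$ and a tuple~$\boldd$ of elements of the appropriate domains, and assume $I^\HH\models p(\boldd^*)$. By the first line of~\eqref{defexp}, this means $p(\boldd^*)\in\HH$; since $\HH\subseteq I^\downarrow$ by the definition of an \HTi, it follows that $I\models p(\boldd^*)$, and then the second line of~\eqref{defexp} gives $I^\HH\models p'(\boldd^*)$. As $p$ and~$\boldd$ were arbitrary, $I^\HH\models\A(p)$ for every intensional~$p$, so $I^\HH\models\A$.

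For the converse, let~$J$ be an interpretation of~$\sigma'$ with $J\models\A$, and I would recover a witnessing \HTi\ from it. Let~$I$ be the interpretation of~$\sigma$ that has the same domains as~$J$, interprets all function constants and extensional predicate constants as~$J$ does, and interprets each intensional~$p$ by reading off its primed copy: $I\models p(\boldd^*)$ iff $J\models p'(\boldd^*)$. Put $\HH=\{p(\boldd^*) : p \text{ intensional and } J\models p(\boldd^*)\}$. It remains to check two things. First, that $\HI$ is a genuine \HTi, that is, $\HH\subseteq I^\downarrow$: if $p(\boldd^*)\in\HH$ then $J\models p(\boldd^*)$, and since $J\models\A(p)$ yields $J\models p'(\boldd^*)$, we get $I\models p(\boldd^*)$, i.e.\ $p(\boldd^*)\in I^\downarrow$. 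This is the one place where the hypothesis $J\models\A$ is used in an essential way. Second, that $J=I^\HH$: the two interpretations share the domains of~$J$ and agree on function and extensional predicate constants by construction; for intensional~$p$ we have $I^\HH\models p(\boldd^*)$ iff $p(\boldd^*)\in\HH$ iff $J\models p(\boldd^*)$, and for the primed constants $I^\HH\models p'(\boldd^*)$ iff $I\models p(\boldd^*)$ iff $J\models p'(\boldd^*)$, so the two interpretations coincide.

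Since the argument is essentially a matter of unwinding the definitions of~$I^\HH$, $I^\downarrow$, and~$\A$, I do not expect a real obstacle; the only point deserving attention is the recognition that the subset condition $\HH\subseteq I^\downarrow$ and the implication $p\to p'$ encode the same information, which is precisely what makes the correspondence exact in both directions.
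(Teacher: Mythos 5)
Your proof is correct and follows essentially the same route as the paper's: the same verification of $\A$ via $\HH\subseteq I^\downarrow$ and the two lines of~(\ref{defexp}) for the if-direction, and the same construction of $I$ (reading intensional $p$ off $p'$ in $J$) and $\HH$ (the intensional atoms satisfied by $J$) for the converse, with the identical check that $J=I^\HH$.
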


\begin{proof}
  For the if-part, take any formula~(\ref{afla}) from~$\A$.  We need to show
  that $I^\HH$ satisfies all sentences of the form
  $p(\boldd^*)\to p'(\boldd^*)$.  Assume that $I^\HH\models p(\boldd^*)$.
Then $p(\boldd^*)\in\HH\subseteq I^\downarrow$,
  and consequently $I\models p(\boldd^*)$, which is equivalent to
  $I^\HH\models p'(\boldd^*)$.

  For the only-if part, take any
  interpretation~$J$ of~$\sigma'$ that satisfies~$\A$.  Let~$I$ be the
  interpretation of~$\sigma$ that has the same domains as~$J$,
  interprets function constants and extensional predicate constants
  in the same way as~$J$, and interprets every intensional~$p$ in
  accordance with the condition
  \beq
  I\models p({\bf d}^*)\hbox{ iff }J\models p'({\bf d}^*).
  \eeq{iff}
  Take~$\HH$ to be the set of all atoms of the form $p(\boldd^*)$ with
  intensional~$p$ that are satisfied by~$J$.  Since~$J$ satisfies~$\A$,~$J$
  satisfies $p'(\boldd^*)$ for every atom~$p(\boldd^*)$ from~$\HH$.
  By~(\ref{iff}), it follows that all atoms from~$\HH$ are satisfied
  by~$I$, so that~$\HH$ is a
  subset of~$I^\downarrow$.  It follows that $\HI$ is an HT-interpretation.
  Let us show that $I^\HH=J$.  Each of the interpretations~$I^\HH$ and~$J$
  has the same domains as~$I$ and interprets all function constants and
  extensional predicate constants in the same way as~$I$.  For every
  intensional~$p$ and any tuple~$\boldd$ of elements of appropriate
  domains, each of the conditions $I^\HH\models p(\boldd^*)$,
  $J\models p(\boldd^*)$ is equivalent to $p(\boldd^*)\in\HH$,
    and each of the conditions $I^\HH\models p'(\boldd^*)$,
  $J\models p'(\boldd^*)$ is equivalent to $I\models p(\boldd^*)$.
\end{proof}

\begin{lemma}\label{lem:prima.transformation}
  For every \HTi~$\HI$ of~$\sigma$ and every sentence~$F$ over the
  signature~$\sigma^I$,
  $I^\HH\models F'$ iff $I\models F$.
\end{lemma}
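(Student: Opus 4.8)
The plan is to prove the equivalence by structural induction on the sentence $F$ over $\sigma^I$, establishing it simultaneously for all such $F$. First I would record the preliminary observation that $I^\HH$ and $I$ assign the same value to every ground term over $\sigma^I$: the two interpretations share the same domains and interpret every function and object constant in the same way, so $\boldt^{I^\HH}=\boldt^I$ for every such term $\boldt$. This is precisely the reasoning already used in the derivation of~(\ref{defexp2}), and it is what makes the atomic cases go through.

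For the base case, suppose $F$ is atomic. If $F$ is an equality or an atom $p(\boldt)$ with $p$ extensional, then $F'=F$; since $I^\HH$ agrees with $I$ on extensional predicate constants and on the values of terms, $I^\HH\models F$ iff $I\models F$. If $F$ is $p(\boldt)$ with $p$ intensional, then $F'$ is $p'(\boldt)$, and $I^\HH\models p'(\boldt)$ iff $I\models p(\boldt)$ directly by~(\ref{defexp2}).

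For the induction step, the propositional connectives are immediate, because both sides of the desired equivalence are stated in terms of \emph{classical} satisfaction, which is truth-functional. Since $(\neg G)'=\neg G'$, $(G\land H)'=G'\land H'$, and similarly for $\lor$ and $\to$, the claim for $F$ follows at once from the induction hypotheses for its immediate subformulas. For a quantified sentence, say $F=\forall X\,G$ (the existential case being symmetric), I would use that priming commutes with the substitution of a name $d^*$ for a variable, so that $(G')^X_{d^*}=(G^X_{d^*})'$. Because $I^\HH$ and $I$ have the same domains, $I^\HH\models\forall X\,G'$ holds iff $I^\HH\models(G^X_{d^*})'$ for every element $d$ of the domain of the sort of $X$; by the induction hypothesis applied to the sentence $G^X_{d^*}$ over $\sigma^I$, this is equivalent to $I\models G^X_{d^*}$ for all such $d$, that is, to $I\models\forall X\,G$.

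The main point requiring care is the quantifier case: I would need to check that $G^X_{d^*}$ is again a sentence over $\sigma^I$---which holds because $I^\HH$ and $I$ share domains, so the names $d^*$ available in $\sigma^I$ are exactly those over which $I^\HH$ quantifies---and that substituting $d^*$ for $X$ does not interfere with the replacement of intensional symbols by their primed counterparts. Beyond this bookkeeping, the argument is routine; its simplicity reflects the fact that the lemma concerns only the classical interpretation $I^\HH$ of $\sigma'$, with no appeal to the here\nobreakdash-and\nobreakdash-there relation $\modelsht$.
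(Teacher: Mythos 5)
Your proposal is correct and follows essentially the same route as the paper: the crux in both is the atomic case, with intensional atoms handled via property~(\ref{defexp2}) and extensional atoms by the agreement of $I^\HH$ and $I$ on the relevant symbols. The paper simply states that the extension to arbitrary sentences by induction is straightforward, whereas you spell out the (routine) connective and quantifier steps explicitly.
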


\begin{proof}
  We will consider the case when~$F$ is a ground
  atom~$p(\boldt)$; extension to arbitrary sentences by induction is
  straightforward.
  If~$p$ is intensional then~$F'$ is $p'(\boldt)$, so that
  the assertion of the lemma turns into property~(\ref{defexp2}).
  If $p$ is extensional then~$F'$ is $p(\boldt)$; 
  $I^\HH\models p(\boldt)$ iff $I\models p(\boldt)$ because $I^\HH$
  interprets all symbols occurring in~$F$ in the same way as~$I$.
\end{proof}

\begin{lemma}\label{lem:prima.transformation.ht}
  For every \HTi\ $\HI$
  of~$\sigma$ and every sentence~$F$ over the signature~$\sigma^I$,
  $I^\HH\models \gamma F$ iff  $\HI\modelsht F$.
\end{lemma}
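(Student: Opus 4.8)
The plan is to argue by induction on the number of propositional connectives and quantifiers in the sentence $F$ over $\sigma^I$. The atomic base case and the clauses for negation and implication are where the two supporting lemmas do their work; conjunction, disjunction, and the quantifiers will then follow routinely once substitution is handled.

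For the base case, let $F$ be a ground atom $p(\boldt)$. Then $\gamma F = F$, so the claim is that $I^\HH \models p(\boldt)$ iff $\HI \modelsht p(\boldt)$. If $p$ is intensional, both sides reduce to membership of the atom $p(\boldd^*)$ in $\HH$, where $\boldd$ is the value assigned to $\boldt$; if $p$ is extensional, both reduce to classical satisfaction of $p(\boldt)$ by $I$, since $I^\HH$ interprets $p$ exactly as $I$ does. For negation, $\gamma(\neg F) = \neg F'$, and by Lemma~\ref{lem:prima.transformation} we have $I^\HH \models F'$ iff $I \models F$; hence $I^\HH \models \neg F'$ iff $I \not\models F$, which is precisely the condition defining $\HI \modelsht \neg F$.

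The implication case is the crux, and it is exactly what the two-conjunct definition of $\gamma(F \to G)$ is designed to capture. Recall that $\HI \modelsht F \to G$ holds iff (a)~$I \models F \to G$ and (b)~$\HI \modelsht F$ implies $\HI \modelsht G$. The first conjunct of $\gamma(F\to G)$, namely $F' \to G'$, accounts for (a): by Lemma~\ref{lem:prima.transformation}, $I^\HH \models F' \to G'$ iff $I \models F \to G$. The second conjunct, $\gamma F \to \gamma G$, accounts for (b): by the induction hypothesis applied to $F$ and to $G$, $I^\HH \models \gamma F \to \gamma G$ iff $\HI \modelsht F$ implies $\HI \modelsht G$. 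Taking the two conjuncts together yields $I^\HH \models \gamma(F \to G)$ iff $\HI \modelsht F \to G$. The conjunction and disjunction cases follow immediately from the induction hypothesis, since $\gamma$ distributes over $\land$ and $\lor$ and HT-satisfaction treats these connectives componentwise.

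For the quantifier cases the one additional observation needed is that $\gamma$ commutes with the substitution of a name constant $d^*$ for a variable, that is, $\gamma(F^X_{d^*}) = (\gamma F)^X_{d^*}$; this is immediate from the recursive definition of $\gamma$, since a name constant is neither a predicate symbol nor a connective and is therefore left untouched. Using this identity, $I^\HH \models \forall X\,\gamma F$ iff $I^\HH \models \gamma(F^X_{d^*})$ for every domain element $d$ of the sort of $X$, iff (by the induction hypothesis, applicable because $F^X_{d^*}$ has one fewer quantifier than $\forall X\,F$) $\HI \modelsht F^X_{d^*}$ for every such $d$, which is the defining condition for $\HI \modelsht \forall X\,F$; the existential case is analogous. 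The main obstacle is organizing the implication clause so that the there-world condition and the here-world condition are separated cleanly between the two conjuncts of $\gamma(F\to G)$ and matched, respectively, against Lemma~\ref{lem:prima.transformation} and the induction hypothesis.
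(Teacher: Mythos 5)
Your proof is correct and follows essentially the same route as the paper's: induction on the number of connectives and quantifiers, with the atomic case split by intensional/extensional, negation and the there-world conjunct of implication handled via Lemma~\ref{lem:prima.transformation}, and the here-world conjunct via the induction hypothesis. The paper leaves the quantifier cases as "straightforward," whereas you spell out the commutation of $\gamma$ with substitution of name constants; this is a harmless elaboration, not a different approach.
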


\begin{proof}
  The proof is by induction on the number of propositional connectives and
  quantifiers in~$F$.  We consider below the more difficult cases when~$F$ is
an atomic formula, a negation, or an implication.

\medskip\noindent\emph{Case~1:}
$F$ is an atomic formula~$p(\boldt)$.  Then~$\gamma F$
is~$p(\boldt)$ too; we need to check that
\beq
I^\HH\models p(\boldt)\hbox{ iff }\HI\modelsht p(\boldt).
\eeq{goal}
\emph{Case~1.1:}~$p$ is intensional.
Let~$\boldd$ be the common value of $\boldt^{I^\HH}$ and $\boldt^I$.
The left-hand side of~(\ref{goal}) is equivalent to
$I^\HH\models p(\boldd^*)$ and consequently to $p(\boldd^*)\in\HH$.
The right-hand side of~(\ref{goal}) is equivalent to
$p\left(\left(\boldt^I\right)^*\right)\in \HH$, which is
equivalent to $p(\boldd^*)\in\HH$ as well.

\medskip\noindent\emph{Case~1.2:}~$p$ is extensional.
Each of the conditions
$I^\HH\models p(\boldt)$, $\HI\modelsht p(\boldt)$ is equivalent
to~$I\models p(\boldt)$.

\medskip\noindent\emph{Case~2:} $F$ is $\neg G$.  Then $\gamma F$ is
$\neg G'$; we need to check that
$$I^\HH\not\models G\hbox{ iff }\HI\modelsht\neg G'.$$
By Lemma~\ref{lem:prima.transformation}, the left-hand side is
equivalent to $I\not\models G'$. By the definition of~$\modelsht$,
the right-hand side is equivalent to $I\not\models G'$ as well.

\medskip\noindent\emph{Case~3:} $F$ is $G \to H$.  Then $\gamma F$
is $(\gamma G \to \gamma H) \land (G' \to H')$, so that the condition
$I^\HH\models \gamma F$ holds iff
\beq
I^\HH\not\models \gamma G\hbox{ or }I^\HH\models \gamma H
\eeq{d1}
and
\beq
I^\HH\models G'\to H'.
\eeq{d2}
By the induction hypothesis,~(\ref{d1}) is equivalent to
\beq
\HI\not\modelsht G\hbox{ or }\HI\modelsht H.
\eeq{d3}
By Lemma~\ref{lem:prima.transformation},~(\ref{d2}) is equivalent to
\beq
I\models G\to H.
\eeq{d4}
The conjunction of~(\ref{d3}) and~(\ref{d4}) is equivalent to
$\HI\modelsht G\to H$.
\end{proof}

\noindent\emph{Proof of Theorem~\ref{lem:HTC<=>FOC}}.
To prove a formula in \HTC\ means to derive it in first-order
intuitionistic logic from~(\ref{hosoi}),~(\ref{sqht}), \Std, \Ind, $D_0$ and
$D_1$.
   Since the universal closures of~(\ref{em1}) and~(\ref{em2}) belong to
 \Std, it follows that a formula is provable in \HTC\ iff
 it can be derived from \Std, \Ind, $D_0$ and~$D_1$ in $\SQHT^=$.
 Consequently $F\lrar G$ is provable in~\HTC\ iff
 \beq
 \hbox{$G$ can be derived in $\SQHT^=$ from \Std, \Ind, $D_0$, $D_1$ and~$F$}
 \eeq{plr}
and
 \beq
 \hbox{$F$ can be derived in $\SQHT^=$ from \Std, \Ind, $D_0$, $D_1$ and~$G$}.
 \eeq{prl}
 By the soundness and completeness theorem quoted in Appendix~\ref{sec:review},
(\ref{plr}) is equivalent to the condition
$$\ba c
\hbox{$G$ is satisfied by every \HTi\ of~$\sigma_2$}\\
\hbox{that satisfies \Std, \Ind, $D_0$, $D_1$ and~$F$.}
\ea$$
By Lemma~\ref{lem:prima.transformation.ht}, this condition can be further
reformulated as follows:
$$\ba c
\hbox{for every HT\nobreakdash-interpretation~$\HI$ of~$\sigma_2$,
  $I^\HH$ satisfies $\gamma G$}\\
\hbox{if~$I^\HH$ satisfies~$\gamma(\Ind)$, $\gamma(\Std)$,
$\gamma D_0$,~$\gamma D_1$ and~$\gamma F$.}
\ea$$
Then, by Lemma~\ref{lem:11}, (\ref{plr}) is equivalent to the condition
$$\ba c
    \hbox{$\gamma G$ is satisfied by every interpretation of~$\sigma_2'$}\\
    \hbox{that satisfies $\A$, $\gamma(\Ind)$, $\gamma(\Std)$, $\gamma D_0$,
 $\gamma D_1$ and $\gamma F$}.
\ea$$
Similarly, (\ref{prl}) is equivalent to the condition
$$\ba c
\hbox{$\gamma F$ is satisfied by every interpretation of~$\sigma_2'$}\\
\hbox{that satisfies $\A$, $\gamma(\Ind)$, $\gamma(\Std)$, $\gamma D_0$,
      ~$\gamma D_1$ and $\gamma G$}.
\ea$$
  Consequently $F\lrar G$ is provable in \HTC\ iff
$$\ba c
\hbox{$\gamma F\lrar\gamma G$ is satisfied by every interpretation
      of~$\sigma_2'$}\\
\hbox{that satisfies $\A$, $\gamma(\Ind)$, $\gamma(\Std)$, $\gamma D_0$,
      and~$\gamma D_1$}.
\ea$$
Since all predicate constants occurring in $\Std$ are comparisons,
$\gamma(\Std)$ is equivalent to~$\Std$, so that
$\gamma(\Std)$ here can be replaced by~\Std.  It remains to observe that
$\A$, $\gamma(\Ind)$, \Std, $\gamma D_0$ and~$\gamma D_1$ is the list of
all axioms of $\HTC'$.
\qed

\subsection{Standard HT\nobreakdash-interpretations}\label{sec:standard}

In preparation for the proof of Theorem~\ref{thm:HTCw.strong.equivalence},
we describe here the class of standard HT\nobreakdash-interpretations of the
signature~$\sigma_1$ and prove the soundness and completeness of~$\HTCw$
with respect to standard \HTis.

An interpretation of~$\sigma_1$ is \emph{standard}
if its restriction to~$\sigma_0$ is standard (see Section~\ref{ssec:htsi})
and it satisfies~\Defs.
For every set~$\X$ of precomputed atoms,~$\X^\uparrow$ stands for the
standard interpretation of~$\sigma_1$ defined by the following conditions:
\begin{itemize}
\item[(a)] a precomputed atom is satisfied by~$\X^\uparrow$ iff it belongs
  to~$\X$;
\item[(b)] an extended precomputed atom $\atl^{\bfx;\bfv}_F({\bf v},r)$
  is satisfied by~$\X^\uparrow$ iff
  $$\X^\uparrow\models\left(\exists_{\geq r} \bfx F\right)^\bfv_{\bf v};$$
\item[(c)] an extended precomputed atom $\atm^{\bfx;\bfv}_F({\bf v},r)$
  is satisfied by~$\X^\uparrow$ iff
  $$\X^\uparrow\models\left(\exists_{\leq r} \bfx F\right)^\bfv_{\bf v}.$$
\end{itemize}
The operation $\X\mapsto \X^\uparrow$ is opposite
to the operation \hbox{$I\mapsto I^\downarrow$}
defined in Appendix~\ref{sec:review},
in the sense that
\begin{itemize}
  \item for any standard interpretation~$I$ of~$\sigma_1$,
  $(I^\downarrow)^\uparrow=I$, and
  \item for any set~$\X$ of precomputed atoms, the set of precomputed atoms
  in~$(\X^\uparrow)^\downarrow$ is~$\X$.
\end{itemize}

This construction is extended to \HTis\ as follows.
An \HTi\ $\tuple{\HH,I}$ of~$\sigma_1$ is \emph{standard} if the
restriction of~$I$ to~$\sigma_0$ is standard and $\tuple{\HH,I}$
satisfies~\Defs.  For any standard \HTi\ $\tuple{\HH,I}$, $I$ satisfies
\Defs\ by the persistence property of \HTis\ (Appendix~\ref{sec:review}), so
that~$I$ is standard as well.
For any pair~$\X$,~$\Y$ of sets of precomputed atoms such that
$\X\subseteq \Y$, the pair $\tuple{\X,\Y^\uparrow}$ is an \HTi\
of~$\sigma_1$, because $\X\subseteq \Y \subseteq (\Y^\uparrow)^\downarrow$.
We define \emph{extended precomputed atoms} as atomic formulas $p(\bft)$ over
the signature~$\sigma_1$ such that~$p$ is intensional, and
$\bft$ is a tuple of precomputed terms.
Let~$\HH$ be the superset of~$\X$ obtained from it by adding all
extended precomputed atoms $\atl^{\bfx;\bfv}_F({\bf v},r)$
such that
$$\tuple{\X,\Y^\uparrow}\modelsht
\left(\exists_{\geq r} \bfx F\right)^\bfv_{\bf v}$$
and all extended precomputed atoms  $\atm^{\bfx;\bfv}_F({\bf v},r)$
such that
$$\tuple{\X,\Y^\uparrow}\modelsht
\left(\exists_{\leq r} \bfx F\right)^\bfv_{\bf v}.$$
For every atom $\atl^{\bfx;\bfv}_F({\bf v},r)$ in~$\HH$,
$$\Y^\uparrow\models\left(\exists_{\geq r} \bfx F\right)^\bfv_{\bf v}$$
by persistence.
Consequently every such atom belongs to $(\Y^\uparrow)^\downarrow$.
Similarly, every atom $\atm^{\bfx;\bfv}_F({\bf v},r)$ in~$\HH$
belongs to $(\Y^\uparrow)^\downarrow$ as well.  It follows that
$\tuple{\HH,\Y^\uparrow}$ is an \HTi\ of~$\sigma_1$.  We denote this \HTi\ by
$\tuple{\X,\Y}^\uparrow$.

This \HTi\ is standard.  Indeed, a precomputed atom of the form
$\atl^{\bfx;\bfv}_F({\bf v},r)$ belongs to~$\HH$ iff
the formula $\left(\exists_{\geq r} \bfx F\right)^\bfv_{\bf v}$ is
satisfied by $\tuple{\X,\Y}^\uparrow$, because this \HTi\ interprets
sentences over the signature~$\sigma_0$ in the same way as
$\tuple{\X,\Y^\uparrow}$.  Similarly, $\atm^{\bfx;\bfv}_F({\bf v},r)$
belongs to~$\HH$ iff the formula
$\left(\exists_{\leq r} \bfx F\right)^\bfv_{\bf v}$ is satisfied
by $\tuple{\X,\Y}^\uparrow$.
It follows that $\tuple{\X,\Y}^\uparrow$ satisfies \Defs.

Conversely, every standard \HTi\ of~$\sigma_1$ can be represented in
the form $\tuple{\X,\Y}^\uparrow$.  Indeed, for any standard \HTi\
$\tuple{\HH,I}$ of~$\sigma_1$, take~$\X$ to be the set of precomputed atoms
in~$\HH$, and take~$\Y$ to be~$I^\downarrow$.  Then
$$\X\subseteq\HH\subseteq I^\downarrow=\Y$$
and
$$\tuple{\X,\Y}^\uparrow=\tuple{\HH,\Y^\uparrow}=
\tuple{\HH,(I^\downarrow)^\uparrow}=\tuple{\HH,I}.$$

\medskip\noindent{\bf Soundness and Completeness Theorem.}
\emph{For any set~$\Gamma$ of sentences over~$\sigma_1$ and any
  sentence~$F$ over~$\sigma_1$, $F$ can be derived from~$\Gamma$
  in~$\HTC^\omega$ iff every standard \HTi\ satisfying~$\Gamma$
  satisfies~$F$.
}\medskip

The proof of the theorem refers to \emph{$\omega$-interpretations} of
many-sorted signatures
\cite[Section~5.2]{fan23}.  In case of the signatures~$\sigma_0$,~$\sigma_1$
and~$\sigma_2$, $\omega$-interpretations are characterized by two
conditions:
\begin{itemize}
\item every element of the domain of general variables is represented by
  a precomputed term;
\item every element of the domain of integer variables is represented by
  a numeral.
\end{itemize}
An \emph{$\omega$-model} of a set~$\Gamma$ of sentences is an \HTi\
$\tuple{\HH,I}$ satisfying~$\Gamma$ such that~$I$ is an
$\omega$-interpretation.

\begin{lemma}\label{lem:standard.interpretation}
An~\HTi\ of~$\sigma_1$ is isomorphic to a standard \HTi\ iff it
is an $\omega$-model of~$\Std$ and~$\Defs$.
\end{lemma}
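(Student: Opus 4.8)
The plan is to prove the two implications separately, with the left-to-right direction being the routine one and the right-to-left direction the substantive one. The one observation used throughout is that the three properties in play---being an $\omega$-interpretation, satisfying $\Std$, and satisfying $\Defs$---are all preserved under isomorphism of \HTis, so that in each direction it suffices to work with a single standard representative.

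For the only-if direction, since these properties are isomorphism-invariant, it suffices to show that a standard \HTi\ $\HI$ is itself an $\omega$-model of $\Std$ and $\Defs$. Satisfaction of $\Defs$ is part of the definition of a standard \HTi. For $\Std$, I would note that every sentence in $\Std$ is built over $\sigma_0$ from equalities and comparisons only, and that all such atoms are decidable in the logic of here\nobreakdash-and\nobreakdash-there: equality obeys~\eqref{em1}, and the comparison symbols are extensional, hence obey~\eqref{em2}. Consequently HT-satisfaction of a sentence of $\Std$ by $\HI$ reduces to classical satisfaction by~$I$, and since the restriction of~$I$ to~$\sigma_0$ is standard, $I\models F$ for every $F\in\Std$. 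Finally, $I$ is an $\omega$-interpretation because in a standard interpretation the general domain consists of the precomputed terms and the integer domain of the numerals, each represented by itself.

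For the if direction, let $\HI$ be an $\omega$-model of $\Std$ and $\Defs$. The idea is to read off a standard \HTi\ by renaming each domain element by the precomputed term that denotes it. The $\omega$-condition guarantees that every element of the general (respectively integer) domain equals $t^I$ for some precomputed term (respectively numeral)~$t$, and the axioms $t_1\neq t_2$ of $\Std$, for distinct precomputed terms, guarantee that this naming is injective. I would therefore define, on each sort, the bijection~$h$ sending a domain element to the unique precomputed term naming it, let $I^s$ be the pushforward of~$I$ along~$h$, and set $\HH^s=h(\HH)$. By construction every object constant represents itself in $I^s$, and its general and integer domains are exactly the precomputed terms and the numerals. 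Since isomorphism preserves $\modelsht$, the \HTi\ $\langle\HH^s,I^s\rangle$ satisfies $\Defs$; so once I verify that the restriction of $I^s$ to~$\sigma_0$ is standard, it follows by definition that $\langle\HH^s,I^s\rangle$ is a standard \HTi\ isomorphic to~$\HI$, completing the proof.

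The hard part is this last verification: that after renaming, the arithmetic function constants and the comparison predicates are interpreted exactly as in a standard interpretation. This is where $\Std$ does the real work. For each operation and each pair of numerals, the true ground equation---for instance $\num m+\num n=\num{m+n}$---belongs to $\Std$ and is decidable, so $I\models\num m+\num n=\num{m+n}$; applying~$h$ and using that the integer domain is named entirely by numerals forces~$h$ to respect~$+$, and likewise for $-$, $\times$ and~$|\ |$. Similarly, for any two precomputed terms exactly one of a comparison and its negation lies in $\Std$, pinning the interpretation of the comparison symbols to the chosen total order. The one point requiring care is the interaction of the two sorts---that a numeral, viewed as an element of the integer domain, is renamed consistently with its image in the general domain---which I would settle using the corresponding identity axioms of $\Std$ together with the $\omega$-condition on both domains.
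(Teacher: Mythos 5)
Your proof is correct and follows essentially the same route as the paper's (much terser) argument: the only-if direction is routine, and for the if direction the bijection between precomputed terms and domain elements supplied by the $\omega$-condition together with the $\Std$ axioms ($t_1\neq t_2$, ground arithmetic facts, ground comparisons) yields the desired isomorphism with a standard \HTi. Your write-up just makes explicit the verifications the paper leaves implicit.
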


\begin{proof}
  The only-if part is obvious.  If~$\tuple{\HH,I}$ is an
  $\omega$-model of~$\Std$ and~$\Defs$ then the
  function that maps every precomputed term to the corresponding element
  of the domain of general variables in~$I$ is an isomorphism between a
  standard \HTi\ and~$\tuple{\HH,I}$.
\end{proof}

\noindent\emph{Proof of the soundness and completeness theorem.}
The deductive system~\SQHTw\ \cite[Section~5.3]{fan23} for the
signature~$\sigma_1$ can be described as $\SQHT^=$ (see
Appendix~\ref{sec:review}) extended by the two $\omega$-rules from
Section~\ref{sec:omega}.  According to Theorem~4 from that paper,
for any set~$\Gamma$ of sentences over~$\sigma_1$ and any
  sentence~$F$ over~$\sigma_1$, $F$ can be derived from~$\Gamma$
  in~$\SQHTw$ iff every $\omega$-model of~$\Gamma$
  satisfies~$F$.  On the other hand, $\HTCw$ can be described as~$\SQHTw$
  extended by the axioms~\Std\ and~\Defs.  It follows that
for any set~$\Gamma$ of sentences over~$\sigma_1$ and any
  sentence~$F$ over~$\sigma_1$, $F$ can be derived from~$\Gamma$
  in~$\HTCw$ iff every $\omega$-model of~\Std,~\Defs\
  and~$\Gamma$ satisfies~$F$.  The assertion of the theorem follows by
  Lemma~\ref{lem:standard.interpretation}.

\subsection{Grounding}

Recall that in Appendix~\ref{sec:plem}, we subdivided the predicate symbols
of the signature~$\sigma_2$ into extensional and intensional.  This
classification applies, in particular, to the predicate symbols
of~$\sigma_1$: comparison symbols~(\ref{comp}) are extensional, and the
symbols $p/n$, $\atl^{\bfx;\bfv}_F$ and $\atm^{\bfx;\bfv}_F$ are
intensional.

The proof of Theorem~\ref{thm:HTCw.strong.equivalence} refers to the
grounding transformation
$F\mapsto F^{\rm prop}$, which converts sentences over~$\sigma_0$ 
into infinitary propositional combinations of precomputed atoms, and
sentences over~$\sigma_1$ into infinitary propositional
combinations of extended precomputed atoms 
(\citealt[Section~3]{tru12};
\citealt[Section~5]{lif19}; 
\citealt[Section~10]{lif22a};
\citealt[Section~8]{fan22}).%
\footnote{In some of these papers, the transformation $F\mapsto F^{\rm prop}$ is denoted by~$\mathit{gr}$.  
We take
the liberty to identify precomputed terms~$t$ with their names~$t^*$.}
This transformation is defined as follows:
\begin{itemize}
\item if $F$ is $p(\bft)$, where~$p$ is intensional,
  then $F^{\rm prop}$ is obtained from $F$ by replacing each member of~$\bft$
  by the precomputed term obtained from it by evaluating arithmetic
  functions;
  \item if $F$ is $t_1 \prec t_2$, then $F^{\rm prop}$ is $\top$ if the values of $t_1$ and $t_2$ are in the relation~$\prec$,  and $\bot$ otherwise;
  \item $(\neg F)^{\rm prop}$ is $\neg F^{\rm prop}$;
  \item $(F \odot G)^{\rm prop}$ is $F^{\rm prop} \odot G^{\rm prop}$ for every binary connective~$\odot$;
  \item $(\forall X\,F)^{\rm prop}$ is the conjunction of the formulas
    $(F^X_t)^{\rm prop}$
    over all precomputed terms $t$ if $X$ is a general variable, and
    over all numerals $t$ if $X$ is an integer variable;
  \item $(\exists X\,F)^{\rm prop}$ is the disjunction of the formulas
    $(F^X_t)^{\rm prop}$
    over all precomputed terms $t$ if $X$ is a general variable, and
    over all numerals $t$ if $X$ is an integer variable.
  \end{itemize}
  For any set~$\Gamma$ of sentences over~$\sigma_1$, $\Gamma^{\rm prop}$
  stands for $\{F^{\rm prop}\,:\,F\in\Gamma\}$.

  The lemma below relates the meaning of a sentence over~$\sigma_1$
   to the meaning of its grounding.
It is similar to Proposition~4 from
  \citeauthor{tru12}'s article (\citeyear{tru12}) and can be
 proved by induction in a similar way.


\begin{lemma}\label{mirek_prop4}
  For any \HTi\ $\tuple {\HH,I}$ of~$\sigma_1$ such that the restriction
  of~$I$ to~$\sigma_0$ is standard, and any sentence~$F$
  over~$\sigma_1$,
  $$\tuple{\HH,I}\modelsht F\hbox{ iff }
  \tuple{\HH,I^\downarrow}\modelsht F^{\rm prop}.$$
\end{lemma}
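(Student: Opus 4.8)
The plan is to argue by induction on the number of propositional connectives and quantifiers in~$F$, following the pattern of Proposition~4 from \citeauthor{tru12}'s article. Since the satisfaction clauses for~$\neg$ and~$\to$ refer to the there-world, I would first record the classical companion
$$I\models F\hbox{ iff }I^\downarrow\models F^{\rm prop},$$
valid for every interpretation~$I$ of~$\sigma_1$ whose restriction to~$\sigma_0$ is standard. This companion is proved by a straightforward one-world induction: for an intensional atom $p(\bft)$ it holds because, by standardness, the arithmetic evaluation of~$\bft$ coincides with~$\bft^I$, so that $I\models p(\bft)$ iff $p((\bft^I)^*)\in I^\downarrow$ iff $I^\downarrow\models (p(\bft))^{\rm prop}$; for a comparison it holds because grounding produces~$\top$ or~$\bot$ exactly according to whether the standardly computed values stand in the relation; and the connective and quantifier cases are immediate, using for the quantifiers that the general and integer domains of~$I$ consist of precomputed terms and numerals, respectively.

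With the companion in hand, I would carry out the main induction. For the base case, an intensional atom $p(\bft)$ satisfies $\tuple{\HH,I}\modelsht p(\bft)$ iff $p((\bft^I)^*)\in\HH$; standardness identifies $\bft^I$ with the arithmetic evaluation of~$\bft$, so this is the same extended precomputed atom as $(p(\bft))^{\rm prop}$, and $\tuple{\HH,I^\downarrow}\modelsht (p(\bft))^{\rm prop}$ amounts to the same membership in~$\HH$. For a comparison $t_1\prec t_2$, both sides reduce to the classical truth of the comparison in the standard reading, which matches the $\top/\bot$ produced by grounding. The conjunction and disjunction cases are componentwise and follow immediately from the induction hypothesis.

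The negation and implication cases are where the companion is used. For $\neg G$ the here-world clause gives $\tuple{\HH,I}\modelsht\neg G$ iff $I\not\models G$, while on the propositional side $\tuple{\HH,I^\downarrow}\modelsht\neg G^{\rm prop}$ iff $I^\downarrow\not\models G^{\rm prop}$; these agree by the companion. For $G\to H$, HT-satisfaction decomposes into the here-world condition (which transfers by the induction hypothesis applied to~$G$ and~$H$) together with the there-world condition $I\models G\to H$ (which transfers by the companion), and the same decomposition holds for $(G\to H)^{\rm prop}=G^{\rm prop}\to H^{\rm prop}$. Finally, for $\forall X\,G$ (and dually $\exists X\,G$) I would use that, by standardness, the domain over which~$X$ ranges is exactly the set of precomputed terms (or numerals); hence $\tuple{\HH,I}\modelsht\forall X\,G$ iff $\tuple{\HH,I}\modelsht G^X_t$ for every such~$t$, which by the induction hypothesis is equivalent to $\tuple{\HH,I^\downarrow}\modelsht (G^X_t)^{\rm prop}$ for every~$t$, i.e.\ to $\tuple{\HH,I^\downarrow}\modelsht(\forall X\,G)^{\rm prop}$, since the grounding of a universal is precisely the infinitary conjunction of the groundings of its instances. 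The main obstacle is the bookkeeping in these last two cases: one must keep the here-world reasoning (governed by the induction hypothesis) separate from the there-world reasoning (governed by the classical companion), and must invoke standardness to license the identification of domain elements with their names, so that first-order quantification matches the infinitary grounding.
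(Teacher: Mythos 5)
Your proof is correct and takes the route the paper intends: the paper gives no detailed proof of this lemma, saying only that it is similar to Proposition~4 of Truszczy\'nski's article and ``can be proved by induction in a similar way,'' and your structural induction is precisely that argument carried out explicitly. The one genuine ingredient you had to supply---the classical companion $I\models F$ iff $I^\downarrow\models F^{\rm prop}$, used to transfer the there-world clauses of negation and implication---plays the same role that Lemma~\ref{lem:prima.transformation} plays in the paper's proof of Lemma~\ref{lem:prima.transformation.ht}, and your handling of the quantifier cases via standardness of the domains is exactly what is needed to match first-order quantification with the infinitary grounding.
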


The next lemma relates $(\tau^*\Pi)^{\rm prop}$ to~$\tau\Pi$.
   \begin{lemma}\label{prop:taustar-tau}
     For any {\sc mgc} program~$\Pi$,
   every propositional \HTi\ satisfying $\Defs^{\rm prop}$
   satisfies also the formula $(\tau^*\Pi)^{\rm prop}\lrar\tau \Pi$.
\end{lemma}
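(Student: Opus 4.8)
The plan is to reduce the equivalence to a counting identity about a single aggregate atom. Since $\tau$ and $\tau^*$ both act rule by rule and the grounding transformation commutes with negation and the binary connectives and turns quantifiers into infinite conjunctions and disjunctions, we have $(\tau^*\Pi)^{\rm prop}=\bigwedge_R(\tau^*R)^{\rm prop}$ and $\tau\Pi=\bigwedge_R\tau R$, the conjunctions ranging over the finitely many rules $R$ of $\Pi$. It is therefore enough to show, for each rule $R$ separately, that every propositional \HTi\ satisfying $\Defs^{\rm prop}$ satisfies $(\tau^*R)^{\rm prop}\lrar\tau R$. For the head of $R$ and for any body literal or comparison, the grounding of $\tau^*$ coincides with the propositional formula contributed by $\tau$; this is the aggregate-free case already settled for mini-\gringo\ \cite{lif19,fan23}, and it does not involve $\Defs^{\rm prop}$. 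Because the body is a conjunction, it then suffices to replace each body conjunct by its $\tau$-counterpart, so the whole claim reduces to the treatment of a single aggregate atom.

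First I would treat an aggregate atom $B_i=\co\{\bfx:{\bf L}\}\geq t$, the case of $\leq$ being dual. Here $\tau^*$ contributes $B_i^*=\exists Z(\val t Z\land\atl^{\bfx;\bfv}_F(\bfv,Z))$ with $F=\exists\bfw\,\tau^B({\bf L})$. After the universal closure over the global variables has been grounded to an infinite conjunction, the bound $t$ is ground; since $(\val t Z)^{\rm prop}$ is $\top$ exactly when $Z$ is the value $r$ of $t$, the disjunction arising from $\exists Z$ reduces to the extended precomputed atom $\atl^{\bfx;\bfv}_F(\bfv,r)$ (and to $\bot$ when $t$ has no value). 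In any interpretation satisfying $\Defs^{\rm prop}$, the matching member of $\Defs^{\rm prop}$ makes this atom equivalent to $(\exists_{\geq r}\bfx F)^{\rm prop}$ with $\bfv$ replaced by its values; since this equivalence holds at both worlds of the \HTi, it may be substituted inside $(B_i^*)^{\rm prop}$ without changing its truth value. Thus $(B_i^*)^{\rm prop}$ becomes, in the given interpretation, the grounding $(\exists_{\geq r}\bfx F)^{\rm prop}$ for the value $r$ of $t$.

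It remains to match this with the formula that $\tau$ assigns to the aggregate atom, which is the combinatorial core of the argument. For $r=\num m$ with $m>0$, the formula $\exists_{\geq\num m}\bfx F$ unfolds to $\exists\bfx_1\cdots\bfx_m(\bigwedge_i F^\bfx_{\bfx_i}\land\bigwedge_{i<j}\neg(\bfx_i=\bfx_j))$, whose grounding is the disjunction, over $m$-tuples of pairwise distinct precomputed tuples $\bfx_1,\dots,\bfx_m$, of $\bigwedge_i(F^\bfx_{\bfx_i})^{\rm prop}$, where each $(F^\bfx_{\bfx_i})^{\rm prop}$ expresses that ${\bf L}$ holds for that value of $\bfx$ and some value of the local variables $\bfw$. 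This is exactly the statement that at least $m$ distinct values of $\bfx$ satisfy ${\bf L}$, which is what $\tau$ assigns to $\co\{\bfx:{\bf L}\}\geq\num m$; I would make the correspondence precise by identifying a witnessing finite set of tuples with an enumeration $\bfx_1,\dots,\bfx_m$. The two boundary cases fall out of the definition of $\exists_{\geq r}$: when $r\leq\num 0$ both sides are $\top$, and when $r$ exceeds every numeral both sides are $\bot$.

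I expect the main obstacle to be the bookkeeping in this last step: threading the global variables $\bfv$ and the existentially closed local variables $\bfw$ correctly through $F$, accounting for the possibility that the instantiated bound $t$ has no value, and reconciling the unfolding above with the precise form of $\tau$ on aggregate atoms taken from \cite[Section~5]{lif22a}. Once the identity is established for $\geq$, the $\leq$ case follows by the dual computation with $\exists_{\leq r}$ and $\atm$, and conjoining the resulting equivalences over all body conjuncts and over all global instantiations recovers $(\tau^*R)^{\rm prop}\lrar\tau R$, which completes the proof.
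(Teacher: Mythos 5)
Your reduction to a single rule and then to a single aggregate atom in the body is sound, but you should know that the paper takes a much shorter route: it observes that the equivalence $(\tau^*R)^{\rm prop}\lrar\tau R$ for a single pure rule~$R$ is exactly Theorem~1 of the earlier paper on {\sc mgc} \cite{lif22a}, which establishes it as \emph{provable} in the infinitary system $HT^{\infty}+\Defs^{\rm prop}$; the lemma then follows because every propositional \HTi\ satisfies the axioms of $HT^{\infty}$ and satisfaction is preserved by its inference rules. What you are doing instead is re-deriving the semantic content of that cited theorem from scratch, and the part you defer as ``bookkeeping'' is precisely its substance. In particular, you never write down what $\tau$ actually assigns to an aggregate atom: for $\co\{E\}\leq\num n$ it is $\bigwedge_{\Delta:|\Delta|>n}\neg\bigwedge_{\bfxx\in\Delta}(\cdots)$, an infinitary conjunction of \emph{negations}, not the naive ``at most $n$ values satisfy ${\bf L}$,'' so your closing remark that ``the $\leq$ case follows by the dual computation'' hides a genuinely non-trivial HT-equivalence (it rests on the $\exists_{\leq\num n}\bfx F\lrar\neg\exists_{\geq\num{n+1}}\bfx F$ identity, formula~(\ref{htap:exex}), and on the intuitionistically delicate passage between $\forall\cdots\to\bigvee$ and $\neg\exists$ forms). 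Likewise, matching the grounding of $\exists_{\geq\num m}\bfx F$ (a disjunction over ordered tuples with distinctness constraints) against $\tau$'s disjunction over finite \emph{sets} $\Delta$ must be checked in HT semantics, not just classically. Your plan would work if carried out, and it has the pedagogical advantage of making the counting identity explicit, but as written it restates rather than discharges the key step; citing \cite[Theorem~1]{lif22a} as the paper does is the economical way to close it.
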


\begin{proof}
  It is sufficient to consider the case when~$\Pi$
  is a single pure rule~$R$.  The equivalence
$(\tau^*R)^{\rm prop}\lrar\tau R$
is provable in the deductive
system $HT^{\infty}+\Defs^{\rm prop}$ \cite[Theorem~1]{lif22a}.
The assertion of the lemma follows from this theorem, because every
\HTi\ satisfies all axioms
of~$HT^{\infty}$, and satisfaction is preserved by the inference rules of $HT^{\infty}$.
\end{proof}

\subsection{Proof of Theorem~\ref{thm:HTCw.strong.equivalence}}
\label{sec:proof:thm:HTCw.strong.equivalence}

\begin{lemma}\label{lem:grounding.models}
  For any {\sc mgc} program~$\Pi$ and any sets~$\X$,~$\Y$ of
  precomputed atoms such that $\X\subseteq\Y$,
  \beq
  \tuple{\X,\Y}^\uparrow\modelsht\tau^*\Pi
  \eeq{leml}
  iff
$\tuple{\X,\Y}\modelsht\tau\Pi$.
\end{lemma}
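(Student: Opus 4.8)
The plan is to reduce the two-sorted first-order satisfaction of $\tau^*\Pi$ to the infinitary propositional satisfaction of $\tau\Pi$ by passing through the grounding transformation and invoking Lemmas~\ref{mirek_prop4} and~\ref{prop:taustar-tau}. First I would recall, from the construction in Appendix~\ref{sec:standard}, that $\tuple{\X,\Y}^\uparrow$ is the \HTi\ $\tuple{\HH,\Y^\uparrow}$, where~$\HH$ is the superset of~$\X$ obtained by adding only the $\atl$- and $\atm$-flavored extended precomputed atoms (so that the precomputed atoms of~$\HH$ are exactly~$\X$), and where the restriction of~$\Y^\uparrow$ to~$\sigma_0$ is standard. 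Since~$\tau^*\Pi$ is a sentence over~$\sigma_1$, Lemma~\ref{mirek_prop4} applies with $I=\Y^\uparrow$ and gives
$$\tuple{\HH,\Y^\uparrow}\modelsht\tau^*\Pi \quad\text{iff}\quad \tuple{\HH,(\Y^\uparrow)^\downarrow}\modelsht(\tau^*\Pi)^{\rm prop}.$$

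The next step is to replace $(\tau^*\Pi)^{\rm prop}$ by~$\tau\Pi$. Because $\tuple{\X,\Y}^\uparrow$ is a standard \HTi, it satisfies \Defs; applying Lemma~\ref{mirek_prop4} once more, now to each sentence of \Defs, shows that the propositional \HTi\ $\tuple{\HH,(\Y^\uparrow)^\downarrow}$ satisfies $\Defs^{\rm prop}$. Hence Lemma~\ref{prop:taustar-tau} is applicable to this \HTi\ and yields
$$\tuple{\HH,(\Y^\uparrow)^\downarrow}\modelsht(\tau^*\Pi)^{\rm prop} \quad\text{iff}\quad \tuple{\HH,(\Y^\uparrow)^\downarrow}\modelsht\tau\Pi.$$

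Finally I would project onto precomputed atoms. The formula~$\tau\Pi$ is an infinitary propositional combination of precomputed atoms, so its truth under a propositional \HTi\ depends only on the precomputed atoms in its here- and there-worlds; this is a routine induction on the (well-founded) structure of~$\tau\Pi$. The precomputed atoms of~$\HH$ are exactly~$\X$, and the precomputed atoms of $(\Y^\uparrow)^\downarrow$ are exactly~$\Y$, by the stated property of the operation $\Y\mapsto\Y^\uparrow$. Therefore $\tuple{\HH,(\Y^\uparrow)^\downarrow}\modelsht\tau\Pi$ iff $\tuple{\X,\Y}\modelsht\tau\Pi$, and chaining the three equivalences completes the argument.

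The projection in the last step is routine, and the first equivalence is an immediate instance of Lemma~\ref{mirek_prop4}. The hard part will be the middle step: making sure the hypotheses of Lemma~\ref{prop:taustar-tau} are genuinely met, that is, confirming that the intermediate propositional \HTi\ $\tuple{\HH,(\Y^\uparrow)^\downarrow}$ really does satisfy $\Defs^{\rm prop}$. This is exactly where the standardness of $\tuple{\X,\Y}^\uparrow$ together with a second application of Lemma~\ref{mirek_prop4} is needed; without it, the crucial passage from $(\tau^*\Pi)^{\rm prop}$ to~$\tau\Pi$ would be unjustified.
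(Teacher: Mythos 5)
Your proof is correct and follows essentially the same route as the paper's: rewrite $\tuple{\X,\Y}^\uparrow$ as $\tuple{\HH,\Y^\uparrow}$, apply Lemma~\ref{mirek_prop4} to pass to $(\tau^*\Pi)^{\rm prop}$, use standardness plus a second application of Lemma~\ref{mirek_prop4} to verify $\Defs^{\rm prop}$ so that Lemma~\ref{prop:taustar-tau} applies, and finally project onto precomputed atoms. Your last step is in fact spelled out slightly more carefully than in the paper, which compresses the replacement of both $\HH$ by $\X$ and $(\Y^\uparrow)^\downarrow$ by $\Y$ into a single remark.
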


\begin{proof}
  Condition~(\ref{leml}) can be  rewritten as
$\tuple{\HH,\Y^\uparrow}\modelsht\tau^*\Pi$,
and, by Lemma~\ref{mirek_prop4}, it is equivalent to
\beq
\tuple{\HH,(\Y^\uparrow)^\downarrow}\modelsht(\tau^*\Pi)^{\rm prop}.
\eeq{lemll}
  On the other hand,
$$\tuple{\HH,\Y^\uparrow}=\tuple{\X,\Y}^\uparrow\modelsht\Defs.$$
  By Lemma~\ref{mirek_prop4}, it follows that
  $\tuple{\HH,(\Y^\uparrow)^\downarrow}$ satisfies $\Defs^{\rm prop}$.
  By Lemma~\ref{prop:taustar-tau}, we can conclude that~(\ref{lemll}) is
equivalent to
$$\tuple{\HH,(\Y^\uparrow)^\downarrow}\modelsht\tau\Pi.$$
This condition is equivalent to $\tuple{\HH,\Y}\modelsht\tau\Pi$, because the
formula~$\tau\Pi$ is formed from precomputed atoms.
\end{proof}

\noindent\emph{Proof of the theorem.}
We need to show that the formula $\tau^*\Pi_1\lrar\tau^*\Pi_2$ is provable
in $\HTCw$ iff $\tau\Pi_1$ is strongly equivalent to~$\tau\Pi_2$.
This formula is provable in $\HTCw$ iff
\beq
\tau^*\Pi_2\hbox{ is derivable in $\HTCw$ from }\tau^*\Pi_1
\eeq{lr7}
and
\beq
\tau^*\Pi_1\hbox{ is derivable in $\HTCw$ from }\tau^*\Pi_2.
\eeq{rl7}
On the other hand, the characterization of strong equivalence of
infinitary propositional formulas in terms of propositional \HTis\
\cite[Theorem~3]{har17} shows that
$\tau\Pi_1$ is strongly equivalent to $\tau\Pi_2$ iff
\beq
\ba c
\hbox{for every propositional \HTi\ $\tuple{\X,\Y}$},\\
\hbox{if $\tuple{\X,\Y}\modelsht\tau\Pi_1$ then }
\tuple{\X,\Y}\modelsht\tau\Pi_2
\ea
\eeq{lr8}
and
\beq
\ba c
\hbox{for every propositional \HTi\ $\tuple{\X,\Y}$},\\
\hbox{if $\tuple{\X,\Y}\modelsht\tau\Pi_2$ then }
\tuple{\X,\Y}\modelsht\tau\Pi_1
\ea
\eeq{rl8}
We will show that conditions~(\ref{lr7}) and~(\ref{lr8}) are
equivalent to each other; the equivalence between~(\ref{rl7})
and~(\ref{rl8}) is proved in a similar way.

Assume that condition~(\ref{lr7}) is satisfied but condition~(\ref{lr8})
is not, so that $\tuple{\X,\Y}\modelsht\tau\Pi_1$ and
$\tuple{\X,\Y}\not\modelsht\tau\Pi_2$ for some propositional \HTi\
$\tuple{\X,\Y}$.  By Lemma~\ref{lem:grounding.models},
$\tuple{\X,\Y}^\uparrow\modelsht\tau^*\Pi_1$ and
$\tuple{\X,\Y}^\uparrow\not\modelsht\tau^*\Pi_2$.
Thus there exists a standard \HTi\ of~$\sigma_1$ that satisfies
$\tau^*\Pi_1$ but does not satisfy $\tau^*\Pi_2$.  This is in contradiction
with the fact that $\HTCw$ is sound with respect to standard interpretations
(Appendix~\ref{sec:standard}).

Assume now that (\ref{lr7}) is not satisfied.  Since
$\HTCw$ is complete with respect to standard interpretations
(Appendix~\ref{sec:standard}), there exists a standard interpretation
that satisfies $\tau^*\Pi_1$ but does not satisfy $\tau^*\Pi_2$.  Consider
a representation of this interpretation in the form $\tuple{\X,\Y}^\uparrow$.
By Lemma~\ref{lem:grounding.models}, $\tuple{\X,\Y}\modelsht\tau\Pi_1$
and $\tuple{\X,\Y}\not\modelsht\tau\Pi_2$, so that condition~({\ref{lr8}) is
not satisfied either.

\subsection{Proofs of Theorems~\ref{thm:j:HTC.strong.equivalence}
  and~\ref{thm:stronger}}
\label{sec:proof:thm:j:HTC.strong.equivalence}
\label{sec:proof:thm:stronger}

\subsubsection{Deductive system \HTCww} \label{sec:htcw2}

Proofs of Theorems~\ref{thm:j:HTC.strong.equivalence}
and~\ref{thm:stronger} refer to the deductive system $\HTCww$, which is a
straightforward extension of~$\HTCw$ to the signature~$\sigma_2$.
Its derivable objects are sequents over~$\sigma_2$.
Its axioms and inference rules are those of intuitionistic logic for the
signature~$\sigma_2$ extended by
\begin{itemize}
\item  axiom schemas (\ref{hosoi}) and (\ref{sqht})
for all formulas $F$, $G$, $H$ over~$\sigma_2$,
\item axioms \Std\  and \Defs, and
\item the $\omega$-rules from Section~\ref{sec:omega} extended to
  sequents over~$\sigma_2$.
\end{itemize}

Any sentence provable in~\HTC\ can be derived in~$\HTCww$
from~$D_0$ and~$D_1$.
Indeed, the only axioms of~\HTC\ that
are not included in~$\HTCww$ are~\Ind,~$D_0$ and~$D_1$, and all
instances of~\Ind\ can be proved using the second $\omega$-rule,
as discussed in Section~\ref{sec:omega}.
We will prove a stronger assertion:

\begin{lemma}\label{thm:HTCws=>newdefs}
Any sentence provable in~\HTC\ can be derived in~$\HTCww$
from~$D_0$.
\end{lemma}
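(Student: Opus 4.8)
The plan is to reduce everything to a single task: deriving the axioms~$D_1$ inside~$\HTCww$ from~$D_0$. Recall that \HTC\ is intuitionistic logic over~$\sigma_2$ augmented by the schemas~(\ref{hosoi}) and~(\ref{sqht}) and by the axiom sets \Std, \Ind, $D_0$, $D_1$. Of these, the schemas~(\ref{hosoi}),~(\ref{sqht}) and the axioms~\Std\ are already part of~$\HTCww$, while~$D_0$ is available as a premise; and, as noted in Section~\ref{sec:omega}, every instance of~\Ind\ is derivable in~$\HTCww$ by the integer $\omega$-rule~(\ref{eq:omega.integer}). Consequently, once~$D_1$ is shown to be derivable in~$\HTCww$ from~$D_0$, an arbitrary \HTC-derivation can be replayed in~$\HTCww$ with~$D_0$ as its sole extra premise, replacing each appeal to~$D_1$ or to~\Ind\ by the corresponding $\HTCww$-derivation.

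First I would re-establish, within~$\HTCww$ and using only~$D_0$, the auxiliary theorems that tie the~$\st$ predicate to the quantifier abbreviations~$\exists_{\geq r}$ and~$\exists_{\leq r}$: the monotonicity facts~(\ref{mon1})--(\ref{mon3}), the recurrence~(\ref{htap:explus1}), and the equivalences~(\ref{htcn:tri}),~(\ref{htcn:st<->defs}) and~(\ref{htap:exex}). Inspecting their proofs (Sections~\ref{sec:proofs-props} and~\ref{ssec:continued}), one sees that they invoke only~$D_0$, \Std\ and~\Ind, never~$D_1$; since \Std\ is an axiom of~$\HTCww$ and~\Ind\ is derivable there, these arguments carry over unchanged, the formulas carrying a parameter~$n$ being obtained, as before, for each fixed numeral.

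The derivation of the~$\atl$ half of~$D_1$ then runs by the general $\omega$-rule~(\ref{eq:omega.general}) applied to the variable~$Y$: it suffices to prove, for every precomputed term~$r$,
$$
\atl^{\bfx;\bfv}_F(\bfv,r)\lrar
\exists\bfx N(\st^{\bfx;\bfv}_F(\bfx,\bfv,N)\land N\geq r).
$$
By the axiom~(\ref{def1}) the left-hand side is equivalent to~$\exists_{\geq r}\bfx F$, after which one argues by the meta-level trichotomy of~\Std\ on the fixed term~$r$. If~$r\leq\num 0$, both sides hold (witness~$N=\num 0$, using the first clause of~$D_0$ and~$\num 0\geq r$); if~$r$ exceeds every numeral, both sides fail, since no numeral satisfies~$N\geq r$; and if~$r$ is a positive numeral~$\num m$, then by~(\ref{mon3}) the right-hand side collapses to~$\exists\bfx\,\st^{\bfx;\bfv}_F(\bfx,\bfv,\num m)$, which is equivalent to~$\exists_{\geq \num m}\bfx F$ by~(\ref{htcn:st<->defs}). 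This is exactly the computation in the proof of Theorem~\ref{thm:HTC=>Defs}, Part~1, read in the opposite direction. The~$\atm$ half is symmetric: after the $\omega$-rule on~$Y$ and rewriting the left-hand side as~$\exists_{\leq r}\bfx F$ via~(\ref{def2}), the decisive case~$r=\num m\geq\num 0$ is closed through the chain that turns~$\forall\bfx N(\st^{\bfx;\bfv}_F(\bfx,\bfv,N)\to N\leq\num m)$ into~$\neg\exists\bfx\,\st^{\bfx;\bfv}_F(\bfx,\bfv,\num{m+1})$ (by~(\ref{mon3}) and the decidability of comparisons in~\Std), then into~$\neg\exists_{\geq\num{m+1}}\bfx F$ by~(\ref{htcn:st<->defs}), and finally into~$\exists_{\leq\num m}\bfx F$ by~(\ref{htap:exex}).

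The crux, and the only real obstacle, is the general variable~$Y$ in~$D_1$---the very feature that makes~$D_1$ more powerful than~\Defs. Since the axioms~\Defs\ speak only about individual precomputed values~$r$, there is no uniform axiom about~$\atl^{\bfx;\bfv}_F(\bfv,Y)$ or~$\atm^{\bfx;\bfv}_F(\bfv,Y)$ for a variable~$Y$, so ordinary universal introduction cannot discharge the~$\forall Y$; it is precisely the general $\omega$-rule~(\ref{eq:omega.general}) that assembles the variable-free instances supplied by~\Defs\ into the quantified biconditional. The remaining care is the bookkeeping check that none of the cited auxiliary proofs covertly uses~$D_1$, so that the whole construction genuinely rests on~$D_0$ alone.
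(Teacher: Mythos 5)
Your proposal is correct and follows essentially the same strategy as the paper: reduce the claim to deriving the axioms~$D_1$ in~$\HTCww$ from~$D_0$ (noting that \Ind\ is recovered by the integer $\omega$-rule and that the auxiliary $\st$-versus-$\exists_{\geq r}$ lemmas never use~$D_1$), and then bridge the per-value axioms \Defs\ to the variable form of~$D_1$ via the $\omega$-rules. The only difference is organizational---you apply the general $\omega$-rule to~$Y$ at the top level with a meta-level trichotomy on the precomputed term~$r$, whereas the paper factors the same content through the intermediate lemmas~(\ref{htcw:Atleast.monotonicity}), (\ref{eq:st<->Atleast*}) and~(\ref{neweq}) with an object-level case split on~$Y$---and both routes are sound.
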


We will prove also the following conservative extension property:

\begin{lemma}\label{lem:conservative.extension.sigma2}
Every sentence over the signature~$\sigma_1$ derivable
in~$\HTCww$ from~$D_0$ is provable in~$\HTCw$.
\end{lemma}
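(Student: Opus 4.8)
The plan is to prove the lemma semantically, combining the soundness of $\HTCww$ with the completeness of $\HTCw$ established in Appendix~\ref{sec:standard}. Let~$F$ be a sentence over~$\sigma_1$ that is derivable from~$D_0$ in~$\HTCww$. Since $\HTCww$ is $\SQHTw$ for~$\sigma_2$ augmented by the axioms \Std\ and \Defs, deriving~$F$ from~$D_0$ in~$\HTCww$ is the same as deriving~$F$ from $\Std\cup\Defs\cup D_0$ in~$\SQHTw$; so the soundness direction of the $\omega$-completeness theorem for $\SQHTw$ (the theorem invoked in Appendix~\ref{sec:standard}, now applied to~$\sigma_2$) tells us that every $\omega$-model of \Std, \Defs\ and~$D_0$ over~$\sigma_2$ satisfies~$F$. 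On the other hand, by the completeness part of the Soundness and Completeness Theorem of Appendix~\ref{sec:standard} it is enough to show that~$F$ is satisfied by every standard \HTi\ of~$\sigma_1$, and by Lemma~\ref{lem:standard.interpretation} these are, up to isomorphism, exactly the $\omega$-models of \Std\ and \Defs\ over~$\sigma_1$. Thus the lemma reduces to the following expansion claim: every $\omega$-model $\tuple{\HH,I}$ of \Std\ and \Defs\ over~$\sigma_1$ expands to an $\omega$-model $\tuple{\HH^+,I^+}$ of \Std, \Defs\ and~$D_0$ over~$\sigma_2$ whose reduct to~$\sigma_1$ is~$\tuple{\HH,I}$. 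Once this is available, the facts above give $\tuple{\HH^+,I^+}\modelsht F$, and since~$F$ contains no $\st$ symbols this is equivalent to $\tuple{\HH,I}\modelsht F$.

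To build the expansion I keep the interpretation of all $\sigma_1$-symbols unchanged and interpret each new predicate $\st^{\bfx;\bfv}_F$ by its intended reading. For tuples~$\boldd$ (for~$\bfx$) and~$\boldr$ (for~$\bfv$) and an integer~$n$, I place $\st^{\bfx;\bfv}_F(\boldd,\boldr,\num n)$ into the there-world~$I^+$ iff $n\le\num 0$ or there is a lexicographically increasing sequence $\boldd=\boldd_1<\cdots<\boldd_n$ of tuples with $I\models F$ under the assignment $\bfx\mapsto\boldd_i$, $\bfv\mapsto\boldr$ for every~$i$; and I place $\st^{\bfx;\bfv}_F(\boldd,\boldr,\num n)$ into the here-world~$\HH^+$ under the same condition, but with $\tuple{\HH,I}\modelsht F$ in place of $I\models F$. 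Because the comparison predicates are extensional and~$I$ is standard on~$\sigma_0$, the lexicographic order~$<$ is interpreted identically in both worlds, so these conditions are unambiguous.

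The verification then splits into three parts. First, $\tuple{\HH^+,I^+}$ is a genuine \HTi: whenever a $\st$-atom lies in~$\HH^+$, each member of the witnessing sequence satisfies~$F$ in the here-world, hence by persistence in the there-world, so the same atom lies in~$I^+$; thus $\HH^+\subseteq(I^+)^\downarrow$. Second, $\tuple{\HH^+,I^+}$ is an $\omega$-model of \Std\ and \Defs: its domains are those of~$I$, so the $\omega$-condition is preserved, and \Std, \Defs\ are sentences over~$\sigma_1$, so their satisfaction is inherited from $\tuple{\HH,I}$. Third, and this is the heart of the matter, $\tuple{\HH^+,I^+}\modelsht D_0$. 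The three axioms of~$D_0$ transcribe the elementary fact that a lexicographically increasing $F$-sequence of length $n+1$ starting at~$\boldd$ is exactly~$\boldd$, satisfying~$F$, followed by such a sequence of length~$n$ starting at some $\boldu>\boldd$, together with the base cases $n\le\num 0$ and $n=\num 1$. Since the defining conditions for~$\HH^+$ and~$I^+$ realize this unfolding separately in each world, every instance of~$D_0$ holds at the there-world and at the here-world, which is what is required for its intuitionistic implications and biconditionals to be satisfied in HT.

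The main obstacle is this last step: the axioms of~$D_0$ must be checked as \emph{HT}-satisfactions rather than classical ones, so for each implication and biconditional the argument must be run at both the here- and the there-world and their coherence under persistence confirmed. The saving grace is that the two defining conditions were chosen to mirror the same sequence recursion in each world, so once the bookkeeping for the lexicographic order and for the existential witness~$\boldu$ is in place, the equivalences match exactly. With the expansion claim in hand, the lemma follows by the reduction of the first paragraph.
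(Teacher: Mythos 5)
Your proposal is correct and follows essentially the same route as the paper: reduce the claim via completeness of $\HTCw$ for standard \HTis\ and soundness of $\SQHTw$ over~$\sigma_2$ to an expansion claim, then interpret each $\st^{\bfx;\bfv}_F$ in the here- and there-worlds by the intended ``increasing sequence'' reading (equivalent to the paper's ``at least~$n$ tuples $\geq\bfxx$'' formulation) and verify \Std, \Defs\ and~$D_0$ in both worlds. The paper simply carries out in detail the case analysis for the recursive axiom of~$D_0$ that you sketch at the end.
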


The assertion of Theorem~\ref{thm:stronger} follows from these two lemmas.

The assertion of Theorem~\ref{thm:j:HTC.strong.equivalence} follows from
Theorems~\ref{thm:HTCw.strong.equivalence} and~\ref{thm:stronger}.

\subsubsection{Some formulas derivable in \HTCw and \HTCww}

In Appendix~\ref{ssec:continued}, we showed that formula~\eqref{htap:explus1} is
provable in~\HTC.  All axioms of~\HTC\ used in that proof are among the
axioms of $\HTCw$, so that this formula is provable
in~$\HTCw$ as well.

\medskip\noindent\emph{Claim:}
For any formula~$F$ over~$\sigma_0$ and any integers~$m$,~$n$ such that
$m\geq n$, the formula
\begin{gather}
  \exists_{\geq \num m} \bfu \, F \to \exists_{\geq \num n} \bfu \, F
  \label{htap:exists.geq.succ}
\end{gather}
is provable in~$\HTCw$. 

\begin{proof}
  It is sufficient to consider the case when $m=n+1$; then the
  general case will follow by induction.  We can also assume
  that~$n$ is positive, because otherwise the consequent
  of~(\ref{htap:exists.geq.succ}) is~$\top$.
From~\eqref{htap:explus1},
\begin{gather*}
\exists_{\geq \num{n+1}}\bfu\,F \to
\exists\bfw(\exists_{\geq \num n}\bfu(\bfw<\bfu \land F)).
\end{gather*}
We can rewrite the consequent of this implication as
\begin{gather*}
  \exists \bfw \bfu_1 \dotsm \bfu_n \left(
    \bigwedge_{i=1}^n \left(\bfw < \bfu_i \land F^\bfu_{\bfu_1} \right)
    \land \bigwedge_{i<j} \neg (\bfu_i = \bfu_j)
    \right).
\end{gather*}
It implies
\begin{gather*}
  \exists \bfu_1 \dotsm \bfu_n \left(
    \bigwedge_{i=1}^n F^\bfu_{\bfu_1} 
    \land \bigwedge_{i<j} \neg (\bfu_i = \bfu_j)
    \right),
\end{gather*}
which is the consequent of~\eqref{htap:exists.geq.succ}.
\end{proof}

\noindent\emph{Claim:}
For any formula~$F$ over~$\sigma_0$, any integer~$n$, and any
precomputed term~$r$, the formula
\begin{gather}
  \num n \geq r \wedge \exists_{\geq \num n} \bfu \, F \to \exists_{\geq r} \bfu \, F
  \label{htap:exists.geq.monotonicity}
\end{gather}
is provable in~$\HTCw$. 
\begin{proof}
  If~$\num n < r$ then the antecedent of~\eqref{htap:exists.geq.monotonicity}
  is equivalent to~$\bot$.
  If~$r \leq \num 0$ then the consequent
  of~\eqref{htap:exists.geq.monotonicity} is~$\top$.
  If $\num n\geq r>0$ then~$r$ is a numeral~$\num m$, because the set of
  numerals is contiguous, so that~\eqref{htap:exists.geq.monotonicity}
  follows from~\eqref{htap:exists.geq.succ}.
\end{proof}

In Appendix~\ref{ssec:continued}, we showed that formula~\eqref{htap:exex}
is provable in~\HTC.  All axioms of~\HTC\ used in that proof are among the
axioms of $\HTCw$, so that this formula is provable
in~$\HTCw$ as well.

\medskip\noindent\emph{Claim:}
The formula
\beq
\forall \bfv N(N\geq \num 0 \to
(\atm^{\bfx;\bfv}_F(\bfv,N)
\lrar \neg\atl^{\bfx;\bfv}_F(\bfv,N+\num 1)))
\eeq{neweq}
is provable in~$\HTCww$.

\medskip\noindent\emph{Proof:}
By~\emph{Defs} and~\eqref{htap:exex}, for every nonnegative~$n$,
$$
    \atm^{\bfx;\bfv}_F(\bfv,\num n)
    \lrar
    \exists_{\leq \num n} \bfu \, F
    \lrar
    \neg\exists_{\geq \num {n+1}} \bfu \, F
    \lrar
    \neg\atl^{\bfx;\bfv}_F(\bfv,\num n+\num 1).
    $$
    It follows that for every integer~$n$,
$$\forall \bfv (\num n\geq \num 0 \to
(\atm^{\bfx;\bfv}_F(\bfv,\num n)
\lrar \neg\atl^{\bfx;\bfv}_F(\bfv,\num n+\num 1))).$$
Formula~(\ref{neweq}) follows by the second $\omega$-rule.
\qed

\medskip\noindent\emph{Claim:}
The formula
\begin{gather}
  \forall \bfv Y \left( \exists N \left(
    N \geq Y \wedge
    \atl^{\bfx;\bfv}_F(\bfv,N)
    \right)
    \lrar
    \atl^{\bfx;\bfv}_F(\bfv,Y)
    \right)
  \label{htcw:Atleast.monotonicity}
\end{gather}
is provable in~$\HTCww$.

\begin{proof}
  \emph{Left-to-right:} take any integer~$n$ and precomputed term~$r$.
  By~\eqref{def1}, the universal closure
  of~\eqref{htap:exists.geq.monotonicity} can be rewritten as
  $$  \forall \bfv \left(
    \num n \geq r \wedge
    \atl^{\bfx;\bfv}_F(\bfv,\num n)
    \to
    \atl^{\bfx;\bfv}_F(\bfv,r)
  \right).$$
  By the $\omega$-rules, it follows that
  $$  \forall N Y \bfv \left(
    N \geq Y \wedge
    \atl^{\bfx;\bfv}_F(\bfv,N)
    \to
    \atl^{\bfx;\bfv}_F(\bfv,Y)
  \right),$$
  which is equivalent to the implication to be proved.
  
\emph{Right-to-left:}  We will show that
  \beq
  \forall \bfv \left(    \atl^{\bfx;\bfv}_F(\bfv,r)\to
    \exists N \left(
    N \geq r \wedge
    \atl^{\bfx;\bfv}_F(\bfv,N)
    \right) \right)
\eeq{r2l}
    for every precomputed term~$r$; then the implication to be proved will
    follow by the second $\omega$-rule.
    Since the set of numerals is contiguous, three cases are possible:
    (1)~$r<\num n$ for all integers~$n$;
    (2)~$r$ is a numeral;
    (3)~$r>\num n$ for all integers~$n$.
    In the last case, the antecedent of~(\ref{r2l}) is equivalent to~$\bot$
    by~\eqref{def1}.  Otherwise, assume $\atl^{\bfx;\bfv}_F(\bfv,r)$; we
    need to find~$N$ such that $N \geq r$ and $\atl^{\bfx;\bfv}_F(\bfv,N)$.
    If $r<\num n$ for all~$n$ then take $N=\num 0$;
    $\atl^{\bfx;\bfv}_F(\bfv,\num 0)$ follows from~(\ref{def1}).  If~$r$ is
    a numeral then take $N=r$.
\end{proof}

In Appendix~\ref{ssec:continued} we showed that
formula~\eqref{htcn:st<->defs} is
provable in~\HTC.  The axioms~$D_1$ are not used in that proof.  It follows
that formula~\eqref{htcn:st<->defs} is derivable in~$\HTCww$ from~$D_0$.

%
 
 \medskip\noindent\noindent\emph{Claim:} The sentence
 \begin{gather}
   \forall \bfv N \left(
   \exists \bfx 
     \st^{\bfx,\bfv}_F(\bfx,\bfv, N)
     \lrar
     \atl^{\bfx;\bfv}_F(\bfv,N)
     \right)
   \label{eq:st<->Atleast*}
 \end{gather}
is derivable in~$\HTCww$ from~$D_0$.
 
\begin{proof}
  For every integer~$n$, the sentence
 \begin{gather*}
 \forall\bfv(  \exists \bfx 
     \st^{\bfx,\bfv}_F(\bfx,\bfv, \num n)
     \lrar
     \atl^{\bfx;\bfv}_F(\bfv,\num n))
 \end{gather*}
 is derivable in~$\HTCww$ from~$D_0$, because it follows from~\eqref{def1}
 and~\eqref{htcn:st<->defs}.  Then~(\ref{eq:st<->Atleast*}) follows by the
 second $\omega$\nobreakdash-rule.
\end{proof}

%

\subsubsection{Proof of Lemma \ref{thm:HTCws=>newdefs}}

To prove Lemma~\ref{thm:HTCws=>newdefs}}, we need to show that all
instances of~$D_1$ can be derived in~$\HTCww$ from~$D_0$.

\medskip\noindent\emph{Proof of~(\ref{d1a}).}
By~\eqref{htcw:Atleast.monotonicity}, $\atl^{\bfx;\bfv}_F(\bfv,Y)$ is
equivalent to
$$
\exists N(N \geq Y \land\atl^{\bfx;\bfv}_F(\bfv,N)).
$$
By~\eqref{eq:st<->Atleast*}, this formula is equivalent to
$$\exists N( N \geq Y \land \exists \bfx \st^{\bfx;\bfv}_F(\bfx,\bfv,N)),$$
which can be further rewritten as
$$\exists \bfx N(\st^{\bfx;\bfv}_F(\bfx,\bfv,N)\land N\geq Y).$$

\medskip\noindent\emph{Proof of~(\ref{d1b}).}
We will prove the equivalence
\begin{gather}
\atm^{\bfx;\bfv}_F(\bfv,Y)
\lrar
\forall \bfx N(\st^{\bfx;\bfv}_F(\bfx,\bfv,N)\to N\leq Y)
\label{eq:2:st<->Atleast**}
\end{gather}
by cases, using the \Std\ axiom
\begin{gather*}
  Y < \num 0\,\vee\,\forall M (M < Y)\,\vee\,
  \exists M (M = Y \wedge M \geq \num 0).
\end{gather*}

\medskip\noindent\emph{Case~1:} $Y < \num 0$.
By~\emph{Defs}, the left-hand side of~\eqref{eq:2:st<->Atleast**} is equivalent to~$\bot$.
Furthermore, by~$D_0$, $\st^{\bfx,\bfv}_F(\bfu,\bfv, 0)$ and thus the
right-hand side of~\eqref{eq:2:st<->Atleast**} is equivalent to~$\bot$ as well.

\medskip\noindent\emph{Case~2:} $\forall M (M < Y)$.
By~\emph{Defs}, the left-hand side of~\eqref{eq:2:st<->Atleast**} is
equivalent to~$\top$.  The right-hand side is equivalent to~$\top$ as well.

\medskip\noindent\emph{Case~3:} $M = Y$ and $M \geq 0$.
Formula~(\ref{eq:2:st<->Atleast**}) can be rewritten as
\beq
\atm^{\bfx;\bfv}_F(\bfv,M)
\lrar
\forall \bfx N(\st^{\bfx;\bfv}_F(\bfx,\bfv,N)\to N\leq M).
\eeq{xx}
By~\eqref{neweq}, the left-hand side is equivalent to
$\neg \atl^{\bfx;\bfv}_F(\bfv,M+\num 1)$.  Hence, by~(\ref{d1a}), it is
equivalent to
$$
\neg \exists \bfx N(\st^{\bfx;\bfv}_F(\bfx,\bfv,N+\num 1)
\land N+\num 1\geq M+\num 1)
$$
and furthermore to
$$
\neg \exists \bfx N(\st^{\bfx;\bfv}_F(\bfx,\bfv,N)
\land N\geq M+\num 1).
$$
This formula can be further rewritten as
$$
\forall \bfx N(\st^{\bfx;\bfv}_F(\bfx,\bfv,N)
\to \neg(N\geq M+\num 1)),
$$
which is equivalent to the right-hand side of~(\ref{xx}).

\subsubsection{Proof of Lemma~\ref{lem:conservative.extension.sigma2}}
  
Ley~$F$ be a sentence over the signature~$\sigma_1$ that is  derivable
in~$\HTCww$ from~$D_0$.  We will show that every standard \HTi\ of~$\sigma_1$
satisfies~$F$; then the provability of~$F$ in~$\HTCw$ will follow from the
completeness of~$\HTCw$ (Appendix~\ref{sec:standard}).

Consider a standard \HTi\ $\tuple{\HH,I}$ of~$\sigma_1$.
Let~$I'$ be the extension of~$I$ to the signature~$\sigma_2$ defined by the
condition: an extended precomputed atom
$\st^{\bfx,\bfv}_F(\bfxx,\bfvv,\num n)$ is satisfied by~$I'$ iff $n\leq 0$ or
  \begin{itemize}
  \item $n>0$,
  \item $I \models F^{\bfx,\bfv}_{\bfxx,\bfvv}$, and
  \item there exist at least~$n$ tuples~$\bfyy$ of
precomputed terms such that $\bfyy\geq \bfxx$ and
$I \models F^{\bfx,\bfv}_{\bfyy,\bfvv}$.
\end{itemize}
Since~$\tuple{\HH,I}$ is standard, $I'$ is an $\omega$-iterpretation.
Furthermore, let~$\HH'$ be the set of extended precomputed atoms obtained
from~$\HH$ by
adding the atoms $\st^{\bfx,\bfv}_F(\bfxx,\bfvv,\num n)$ such that $n\leq 0$ or
  \begin{itemize}
  \item $n>0$,
  \item $\tuple{\HH,I} \modelsht F^{\bfx,\bfv}_{\bfxx,\bfvv}$, and
  \item there exist at least~$n$ tuples~$\bfyy$ of
precomputed terms such that $\bfyy\geq \bfxx$ and
$\tuple{\HH,I} \modelsht F^{\bfx,\bfv}_{\bfyy,\bfvv}$.
\end{itemize}
From the persistence property of \HTis\ (Appendix~\ref{sec:review}) we can
conclude that each of the atoms added to~$\HH$ is satisfied by~$I'$.
Hence $\tuple{\HH',I'}$ is an \HTi\ of~$\sigma_2$.

We will show that
\beq
\tuple{\HH',I'}\hbox{ satisfies \Std,~\Defs, and }D_0.
\eeq{xyz}
Then the assertion of the lemma will follow.  Indeed,
the deductive system~$\HTCww$ can be described as $\SQHTw$
(see Appendix~\ref{sec:standard}) over~$\sigma_2$ extended by the
axioms~\Std\ and~\Defs.  Hence~$F$ is derivable in $\SQHTw$
over~$\sigma_2$ from~\Std,~\Defs\ and~$D_0$.  By~(\ref{xyz}),
$\tuple{\HH',I'}$ is an $\omega$-model of these sentences.  By the
soundness of~$\SQHTw$ \cite[Theorem~4]{fan23}, it follows that~$F$
is satisfied by~$\tuple{\HH',I'}$.  Since~$F$ is a sentence over~$\sigma_1$,
we conclude that~$F$ is satisfied by~$\tuple{\HH,I}$.

\medskip\noindent\emph{Proof of~(\ref{xyz}):}
\medskip

For $\Defs$ and~$\Std$, this assertion follows from the fact that
$\tuple{\HH',I'}$ extends the interpretation
$\tuple{\HH,I}$ of~$\sigma_1$, which satisfies $\Defs$ and~$\Std$ because
it is standard.

For~$D_0$, consider the more difficult axiom schema in this group, the
last one.  We need to check that
for any tuples~$\bfxx$ and~$\bfvv$ of precomputed terms and any positive~$n$,
$\tuple{\HH',I'}$ satisfies
\beq
\st^{\bfx;\bfv}_F(\bfxx,\bfvv,\num{n+1})
  \lrar F^{\bfx,\bfv}_{\bfxx,\bfvv} \land \exists \bfu(\bfxx < \bfu \land \st^{\bfx,\bfv}_F(\bfu,\bfvv,\num n)).
\eeq{eqg}
We need to check, in other words, that~$I'$ satisfies the left-hand side
of~(\ref{eqg}) iff~$I'$ satisfies the right-hand side, and similarly for
$\tuple{\HH',I'}$.

Assume that~$I'$ satisfies the left-hand side. Then
$I \models F^{\bfx,\bfv}_{\bfxx,\bfvv}$, and
\begin{center}
  there exist at least~$n+1$ tuples~$\bfyy$ such
  that $\bfyy\geq \bfxx$ and $I\models F^{\bfx,\bfv}_{\bfyy,\bfvv}$.
 \end{center}
It follows that
\begin{center}
  there exist at least~$n$ tuples~$\bfyy$ such
  that $\bfyy>\bfxx$ and $I\models F^{\bfx,\bfv}_{\bfyy,\bfvv}$.
\end{center}
Pick such a group of $n$ tuples, and let~$\bfuu$ be the least among them.
Then $\bfuu>\bfxx$, and
\begin{center}
  there exist at least~$n$ tuples~$\bfyy$ such
  that $\bfyy\geq\bfuu$ and $I\models F^{\bfx,\bfv}_{\bfyy,\bfvv}$.
\end{center}
It follows that~$I$ satisfies
$\bfxx < \bfuu \land \st^{\bfx,\bfv}_F(\bfuu,\bfvv,\num n)$,
and consequently satisfies the right-hand side of~(\ref{eqg}).

Assume now that~$I'$ satisfies the right-hand side of~(\ref{eqg}).  Then
$I'\models F^{\bfx,\bfv}_{\bfxx,\bfvv}$, and there exists a
tuple~$\bfuu$ such that $\bfxx<\bfuu$ and
$I'\models \st^{\bfx,\bfv}_F(\bfuu,\bfvv,\num n)$.
Hence $I\models F^{\bfx,\bfv}_{\bfxx,\bfvv}$, and
\begin{center}
  there exist at least~$n$ tuples~$\bfyy$ such that $\bfyy\geq\bfuu$ and
  $I\models F^{\bfx,\bfv}_{\bfyy,\bfvv}$.
\end{center}
It follows that
\begin{center}
  there exist at least~$n+1$ tuples~$\bfyy$ such that $\bfyy\geq\bfxx$ and
  $I\models F^{\bfx,\bfv}_{\bfyy,\bfvv}$,
\end{center}
so that~$I'$ satisfies the left-hand side of~(\ref{eqg}).

For the \HTi\ $\tuple{\HH,I}$ the reasoning is similar.

\newpage

Competing interests: The authors declare none

\end{document}